\documentclass[letterpaper, 11pt]{article}
\pagestyle{plain}                                                      
\setlength{\textwidth}{6.5in}     
\setlength{\oddsidemargin}{0in}   
\setlength{\evensidemargin}{0in}  
\setlength{\textheight}{8.95in}    
\setlength{\topmargin}{0in}       
\setlength{\headheight}{0in}      
\setlength{\headsep}{0in}         
\setlength{\footskip}{.5in}       
\setlength {\parskip}{3pt}                                             
\bibliographystyle{abbrv}                                           

\usepackage{appendix}
 
 

\newcommand{\comment}[1]{}
 
\usepackage{graphicx}

\usepackage{subfigure}
\usepackage{amsmath}
\usepackage{amssymb}
\usepackage{mdwlist}

\usepackage{algorithm}

\usepackage{hyperref}

\usepackage{xspace}
\usepackage{setspace}

\newcommand{\sv}{{\mathcal V}}
\newcommand{\se}{{\mathcal E}}

\newcommand{\sn}{{\mathcal N}}
 
 
%

\newenvironment{proof}{\noindent {\bf Proof:}~}{\hspace*{\fill}\(\Box\)}
\newenvironment{proofSketch}{\noindent{\bf Proof Sketch:}}{\hspace*{\fill}\(\Box\)}
\newtheorem{theorem}{Theorem}

\newtheorem{claim}{Claim}
\newtheorem{corollary}{Corollary}
\newtheorem{definition}{Definition}

\newtheorem{lemma}{Lemma}

\newcommand{\fillbox}{\hspace*{\fill}\(\Box\)}

\def\noflash#1{\setbox0=\hbox{#1}\hbox to 1\wd0{\hfill}}


\newcommand{\scripte}{\mathcal{E}}
\newcommand{\scriptv}{\mathcal{V}}

\newcommand{\Zightarrow}{\rightarrow}

\newcommand{\propagate}[3]{#1\stackrel{#3}{\rightsquigarrow}{#2}}
\newcommand{\notpropagate}[3]{#1\stackrel{#3}{\not\rightsquigarrow}{#2}}

\newcommand{\Propagate}{{\tt Propagate}}
\newcommand{\Equality}{{\tt Equality}}
\newcommand{\ssss}{{\mbox{\it start}}}
\newcommand{\eeee}{{\mbox{\it end}}}

\begin{document}
\title{Byzantine Consensus in Directed Graphs\footnote{\normalsize This research is supported in part by Army Research Office grant W-911-NF-0710287. Any opinions, findings, and conclusions or recommendations expressed here are those of the authors and do not necessarily reflect the views of the funding agencies or the U.S. government.}}

\author{Lewis Tseng$^{1,3}$, and Nitin Vaidya$^{2,3}$\\~\\
 \normalsize $^1$ Department of Computer Science,\\
 \normalsize $^2$ Department of Electrical and Computer Engineering, 
 and\\ \normalsize $^3$ Coordinated Science Laboratory\\ \normalsize University of Illinois at Urbana-Champaign\\ \normalsize Email: \{ltseng3, nhv\}@illinois.edu \\ \normalsize Phone: +1 217-244-6024, +1 217-265-5414}

\date{August 24, 2012\footnote{Revised on February 18, 2014 to make major improvements to the presentation and related work.}}
\maketitle


\begin{abstract}
\normalsize

Consider a synchronous point-to-point network of $n$ nodes connected by {\em directed} links, wherein each node has a binary input. 
This paper proves a {\em tight} necessary and sufficient condition on the underlying communication topology for
achieving Byzantine consensus among these nodes in the presence of
up to $f$ Byzantine faults. We derive a necessary condition, and then we provide a constructive proof of
sufficiency by presenting a Byzantine consensus algorithm for directed graphs
that satisfy the necessary condition. 

~

Prior work has developed analogous necessary and sufficient conditions
for {\em undirected} graphs. It is known that, for undirected graphs,
the following two conditions are
together necessary and sufficient \cite{impossible_proof_lynch, welch_book,dolev_82_BG}:
(i) $n\geq 3f+1$, and (ii) network connectivity greater than $2f$.
However, these conditions are not adequate to completely characterize Byzantine consensus
in {\em directed} graphs.

\end{abstract}

~

~


~


\thispagestyle{empty}
\newpage
\setcounter{page}{1}

\section{Introduction}
\label{s_intro}

In this work, we explore algorithms for achieving Byzantine consensus \cite{psl_BG_1982} in a synchronous point-to-point network in the presence of Byzantine faulty nodes. The network is modeled as a {\em directed} graph, i.e., the communication links between neighboring nodes are not necessarily bi-directional. Our work is 
motivated by the presence of directed links in wireless networks.
However, we believe that the results here are of independent interest as well.

The Byzantine consensus problem \cite{psl_BG_1982} considers $n$ nodes, of which at most $f$ nodes may be faulty. The faulty nodes may deviate from the algorithm in arbitrary fashion. Each node has an {\em input} in $\{0,1\}$. A Byzantine consensus algorithm is {\em correct} if it satisfies the following three properties:

\begin{itemize}
\item \textbf{Agreement}: the output (i.e., decision) at all the fault-free nodes is identical.

\item \textbf{Validity}: the output of every fault-free node equals the input of a fault-free node.

\item \textbf{Termination}: every fault-free node eventually decides on an output.

\end{itemize}

In networks with undirected links (i.e., in undirected graphs), it is well-known that the following two conditions together are both necessary and sufficient for the existence of Byzantine consensus algorithms \cite{impossible_proof_lynch, welch_book, dolev_82_BG}: (i) $n \geq 3f+1$, and (ii) node connectivity
greater than $2f$. The first condition, that is, $n\geq 3f+1$, is necessary for directed graphs as well. Under the second condition, each pair of nodes in the undirected graph can
communicate {\em reliably} with each other. In particular, either a given pair of nodes is connected
directly by an edge, or there are $2f+1$ node-disjoint paths between the pair of nodes. However, reliable communication between every pair of node is {\em not} necessary for achieving consensus in directed graphs. In Section \ref{s_discussion}, we address this statement in more details.

This paper presents {\em tight} necessary and sufficient conditions for Byzantine consensus in {\em directed} graphs. We provide a constructive proof of sufficiency by presenting a Byzantine consensus algorithm for directed graphs satisfying the necessary condition. The rest of the paper is organized as follows. Section \ref{s_related} discusses the related work. Section \ref{s_term} introduces our system model and some terminology used frequently in our presentation. The main result and the implications are presented in Section \ref{nec_2}. The Byzantine consensus algorithm for directed graphs is described, and its correctness is also proved in Section \ref{s_sufficiency}. The paper summarizes in Section \ref{s_conclusion}.

\section{Related Work}
\label{s_related}

Lamport, Shostak, and Pease addressed the Byzantine agreement problem in \cite{psl_BG_1982}. Subsequent work \cite{impossible_proof_lynch, dolev_82_BG} characterized the necessary and sufficient conditions under which the problem is solvable in {\em undirected} graphs. However, as noted above, these conditions are not adequate to fully characterize the {\em directed} graphs in which Byzantine consensus is feasible. In this work, we identify {\em tight} necessary and sufficient conditions for Byzantine consensus in {\em directed} graphs. The necessity proof presented in this paper is based on the state-machine approach, which was originally developed for conditions in undirected graphs \cite{impossible_proof_lynch,dolev_82_BG,welch_book}; however, due to the nature of directed links, our necessity proof is a non-trivial extension. The technique is also similar to the {\em withholding} mechanism, which was developed by Schmid, Weiss, and Keidar \cite{impossible_link} to prove impossibility results and lower bounds for the number of nodes for synchronous consensus under {\em transient link} failures in {\em fully-connected} graphs; however, we do not assume the transient fault model as in \cite{impossible_link}, and thus, our argument is more straightforward.

In related work, Bansal et al. \cite{Bansal_disc11} identified tight conditions for achieving Byzantine consensus in {\em undirected} graphs using {\em authentication}. Bansal et al. discovered that all-pair reliable communication is not necessary to achieve consensus when using authentication. Our work differs from Bansal et al. in that our results apply in the absence of authentication or any other security primitives; also our results apply to {\em directed} graphs. We show that even in the absence of authentication all-pair reliable communication is not necessary for Byzantine consensus.

Several papers have also addressed communication between a single source-receiver pair.
Dolev et al. \cite{Dolev90perfectlysecure} studied the problem of secure communication, which achieves both fault-tolerance and perfect secrecy between a single source-receiver pair in undirected graphs, in the presence of node and link failures. Desmedt and Wang considered the same problem in directed graphs \cite{yvo_eurocrypt02}. In our work, we do not consider secrecy, and address the consensus problem rather than the single source-receiver pair problem. Shankar et al. \cite{Shankar_SODA08} investigated reliable communication between a source-receiver pair in directed graphs allowing for an arbitrarily small error probability in the presence of a Byzantine failures. Our work addresses deterministically correct algorithms for consensus.

Our recent work \cite{vaidya_PODC12,Tseng_general,vaidya_incomplete} has considered a restricted class of iterative algorithms for achieving {\em approximate} Byzantine consensus in directed graphs, where fault-free nodes must agree on values that are approximately equal to each other using iterative algorithms with limited memory. The conditions developed in such prior work are {\em not} necessary when no such restrictions are imposed. Independently, LeBlanc et al. \cite{leblanc_HiCoNs,Sundaram_journal}, and Zhang and Sundaram \cite{Sundaram,Sundaram_ACC} have developed results for iterative
algorithms for {\em approximate} consensus under a {\em weaker} fault model, where a faulty node must send
identical messages to all the neighbors. In this work, we consider the problem of {\em exact} consensus (i.e., the outputs at fault-free nodes must be exactly identical), and we do not impose any restriction on the algorithms or faulty nodes.

Alchieri et al. \cite{BFT-CUP_OPODIS} explored the problem of achieving exact consensus in {\em unknown} networks with Byzantine nodes, but the underlying communication graph is assumed to be fully-connected. In this work, the network is assumed to be known to all nodes, and may not be fully-connected.

\section{System Model and Terminology}
\label{s_term}

\subsection{System Model}
The system is assumed to be {\em synchronous}.
The synchronous communication network consisting
of $n$ nodes is modeled as a simple {\em directed} graph $G(\scriptv,\scripte)$, where $\scriptv$ is the set of $n$ nodes, and $\scripte$ is the set of directed edges between the nodes in $\scriptv$.  We assume that $n\geq 2$, since the consensus problem for $n=1$ is trivial.  Node $i$ can transmit messages to another node $j$ if and only if the directed edge $(i,j)$ is in $\scripte$. Each node can send messages to itself as well; however, for convenience, we {exclude self-loops} from set $\scripte$. That is, $(i,i)\not\in\scripte$ for $i\in\scriptv$. With a slight abuse of terminology, we will use the terms {\em edge} and {\em link}, and similarly the terms
{\em node} and {\em vertex},
 interchangeably.


All the communication links are reliable,
FIFO (first-in first-out) and deliver each transmitted message exactly once.
When node $i$ wants to send message M on link $(i,j)$ to node $j$, it
puts the message M in a send buffer for link $(i,j)$. No further operations
are needed at node $i$; the mechanisms for implementing reliable,
FIFO and exactly-once semantics 
are transparent to the nodes. When a message is delivered on link ($i,j$),
it becomes available to node $j$ in a receive buffer for link $(i,j)$.
As stated earlier, the communication network is synchronous, and thus,
each message sent on link ($i,j$) is delivered to node $j$ within a 
bounded interval of time.

\paragraph{Failure Model:}
We consider the Byzantine failure model, with up to $f$ nodes becoming faulty. A faulty node may {\em misbehave} arbitrarily. Possible misbehavior includes sending incorrect and mismatching (or inconsistent) messages to different neighbors. The faulty nodes may potentially collaborate with each other. Moreover, the faulty nodes are assumed to have a complete knowledge of the execution of the algorithm, including the states of all the nodes, contents of messages the other nodes send to each other, the algorithm specification, and the network topology.

\subsection{Terminology}

We now describe terminology that is used frequently in our presentation. Upper case italic letters are used below to name subsets of $\sv$,
and lower case italic letters are used to name nodes in $\sv$.

\paragraph{Incoming neighbors:}

\begin{itemize}
\item Node $i$ is said to be an incoming neighbor of node $j$ if $(i,j)\in \se$.
\item
For set $B\subseteq \sv$, node $i$ is said to be an incoming
neighbor of set $B$ if $i\not\in B$, and there exists $j\in B$
such that $(i,j)\in \se$. Set $B$ is said to have $k$ incoming neighbors in set $A$ if set $A$ contains $k$ distinct incoming neighbors of $B$.

\end{itemize}

\paragraph{Directed paths:}
All paths used in our discussion are directed paths.

\begin{itemize}
\item Paths from a node $i$ to another node $j$:
\begin{itemize}
\item For a directed path from node $i$ to node $j$, node $i$ is said to be the ``source node'' for the path.
\item An ``$(i,j)$-path'' is a directed path from node $i$ to node $j$. An ``$(i,j)$-path excluding $X$'' is a directed path from node $i$ to node $j$ that does not contain any node from set $X$. 

\item Two paths from node $i$ to node $j$ are said to be ``disjoint'' 
	if the two paths only have nodes $i$ and $j$
	in common, with all remaining
	nodes being distinct.
	
\item The phrase ``$d$ disjoint $(i,j)$-paths'' refers to
	$d$ pairwise disjoint paths from node $i$ to node $j$. The phrase ``$d$ disjoint $(i,j)$-paths excluding $X$'' refers to $d$ pairwise disjoint $(i,j)$-paths that do not contain any node from set $X$.

\end{itemize}
\item Every node $i$ trivially has a path to itself. That is,
	for all $i\in\sv$, an $(i,i)$-path
	excluding $\sv-\{i\}$ exists.
\item Paths from a set $S$ to node $j\not\in S$:

\begin{itemize}
\item A path is said to be an ``$(S,j)$-path'' if it is an
	$(i,j)$-path for some $i\in S$. An ``$(S,j)$-path excluding $X$'' is a $(S,j)$-path that does not contain any node from set $X$.

\item Two $(S,j)$-paths are said to be ``disjoint'' 
	if the two paths only have node $j$
	in common, with all remaining
	nodes being distinct (including the source nodes on the paths).
	
\item The phrase ``$d$ disjoint $(S,j)$-paths'' refers to
	$d$ pairwise disjoint $(S,j)$-paths. The phrase ``$d$ disjoint $(S,j)$-paths excluding $X$'' refers to $d$ pairwise disjoint $(S,j)$-paths that do not contain any node from set $X$.

\end{itemize}
\end{itemize}

\paragraph{Graph Properties:}

\begin{definition}
\label{def:propagate}
Given disjoint subsets $A,B,F$ of $\sv$ such that $|F|\leq f$,
set $A$ is said to
\underline{propagate in $\sv-F$} to set $B$
if either (i) $B=\emptyset$, or (ii) for each node $b\in B$, there exist at least
$f+1$ disjoint $(A,b)$-paths excluding $F$.
\end{definition}
We will denote the fact that set $A$ propagates in $\sv-F$ to set $B$
by the notation 
\[
\propagate{A}{B}{\sv-F}
\]
When it is not true that $\propagate{A}{B}{\sv-F}$, we will denote that fact by 
\[
\notpropagate{A}{B}{\sv-F}
\]

~

For example, consider Figure \ref{f:2-core} below when $f = 2$ and $F=\{u_1, u_2\}, A = \{u_3, u_4, u_5, u_6, u_7\}$ and $B = \{w_1, w_2, w_3, w_4, w_5, w_6, w_7\}$, then $\propagate{B}{A}{\sv-F}$ and $\notpropagate{A}{B}{\sv-F}$.

\begin{figure}[hbtp!]
\centering
\includegraphics[scale=0.65, bb=-100 -30 749 240]{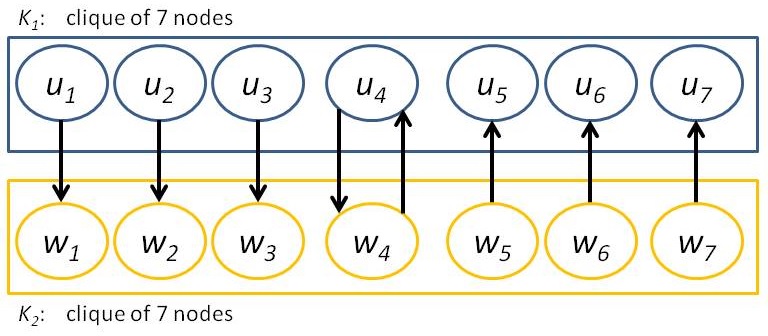}
\vspace*{-30pt}
\caption{A network tolerating $2$ faults. Edges inside cliques $K_1$ and $K_2$ are not shown.}
\label{f:2-core}

\end{figure}

\begin{definition}
\label{def:G-F}
For $F\subset \sv$, graph $G_{-F}$ is obtained by removing from $G(\sv,\se)$
all the nodes in $F$, and all the links incident on nodes in $F$.
\end{definition}

\begin{definition}
\label{def:strong}
A subgraph $S$ of $G$ is said to be {\em strongly connected}, if for all nodes $i, j$ in $S$, there exists an $(i,j)$-path in $G$. 
\end{definition}

\section{Main Result}
\label{nec_2}



We now present the main result of this paper.

\begin{theorem}
\label{t_nec_2}
Byzantine consensus is possible
in $G(\scriptv,\scripte)$
{\bf if and only if} for any node partition $A, B, F$ of $\scriptv$, where $A$ and $B$ are both non-empty, and $0\leq |F| \leq f$, either $\propagate{A}{B}{\sv-F}$ or $\propagate{B}{A}{\sv-F}$.
\end{theorem}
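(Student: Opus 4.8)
The plan is to treat the two directions separately: necessity, that the stated condition is required for any correct algorithm to exist, and sufficiency, that the condition can be met by an explicit algorithm. For necessity I would argue the contrapositive by an indistinguishability (state-machine / withholding~\cite{impossible_link}) argument; for sufficiency I would exhibit and analyze a concrete algorithm, deferring the full construction to Section~\ref{s_sufficiency}.

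For necessity, suppose the condition fails: there is a partition $A,B,F$ of $\scriptv$ with $A,B$ nonempty, $|F|\le f$, and both $\notpropagate{A}{B}{\sv-F}$ and $\notpropagate{B}{A}{\sv-F}$. By Definition~\ref{def:propagate}, the first failure produces a node $b^{\ast}\in B$ admitting at most $f$ disjoint $(A,b^{\ast})$-paths excluding $F$; applying Menger's theorem inside $G_{-F}$ yields a vertex cut $C_A$ with $|C_A|\le f$ that separates $A$ from $b^{\ast}$. Symmetrically, the second failure gives $a^{\ast}\in A$ and a cut $C_B$ with $|C_B|\le f$ separating $B$ from $a^{\ast}$ in $G_{-F}$. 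Thus, in the full graph $G$, the set $F\cup C_A$ (of size at most $2f$) separates $A$ from $b^{\ast}$, and $F\cup C_B$ (also at most $2f$) separates $B$ from $a^{\ast}$. I would then split each separator into its two parts of size $\le f$ and run the classical chain of executions: starting from the all-$0$-input execution (which by Validity decides $0$) and ending at the all-$1$-input execution (which decides $1$), passing through a mixed execution in which $A$-side nodes hold $0$ and $B$-side nodes hold $1$. By making one part of the separator Byzantine in each step — never more than $f$ nodes at a time — and having it replay (withhold and simulate) the boundary traffic, consecutive executions are rendered indistinguishable to either $a^{\ast}$ or $b^{\ast}$. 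Forcing $a^{\ast}$ to decide $0$ and $b^{\ast}$ to decide $1$ in the mixed execution then contradicts Agreement.

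For sufficiency, I would build an algorithm on top of a reliable-dissemination primitive justified by the propagate relation: when $\propagate{A}{B}{\sv-F}$ holds and $F$ is the actual faulty set, the $f+1$ disjoint $(A,b)$-paths excluding $F$ are all fault-free, so a node $b$ that collects values arriving along disjoint paths is guaranteed to receive the value held by the fault-free source side along at least one uncorrupted route and can therefore commit to it. The hypothesis — that for \emph{every} partition at least one of $\propagate{A}{B}{\sv-F}$, $\propagate{B}{A}{\sv-F}$ holds — supplies a global asymmetry: no matter how the fault-free nodes split, one block can reliably transmit its values to the other. I would use this to show that the fault-free nodes can converge on a single decision drawn from a fault-free input, giving Agreement and Validity, while synchrony and bounded path lengths give Termination; the detailed protocol and its invariants are developed in Section~\ref{s_sufficiency}.

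The step I expect to be the main obstacle is the necessity construction. Unlike the undirected case, the separators here are directional and the ``source'' is a \emph{set} $A$ (or $B$) rather than a single node, so I must check that the Byzantine nodes of $F\cup C_A$ and $F\cup C_B$, acting within the per-execution budget of $f$, can jointly simulate both boundaries and keep the paired executions truly indistinguishable at $a^{\ast}$ and $b^{\ast}$ simultaneously. Verifying that a surviving fault-free path on each side forces exactly the intended decision — and that the directed reachability is consistent with the symmetric ``either/or'' structure of the condition — is precisely where the argument departs nontrivially from the classical undirected proofs.
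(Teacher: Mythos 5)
Your necessity sketch follows essentially the paper's route: use Menger's theorem to extract small directed cuts from the failure of the two propagate relations, then run a three-execution indistinguishability argument (all-$0$ with one cut faulty, a mixed execution with $F$ faulty, all-$1$ with the other cut faulty). This can be completed, but the step you flag as the ``main obstacle'' is exactly where the substance lies, and your sketch stops short of it. Indistinguishability cannot be argued node-locally at $a^{\ast}$ or $b^{\ast}$, because each node's view depends recursively on the views of its fault-free neighbors. The paper resolves this by first lifting the cut statement to a set-level statement: from the cut $C_B$ separating $B$ from $a^{\ast}$ in $G_{-F}$, it builds the ``island'' $L$ of all nodes that can reach $a^{\ast}$ avoiding $F\cup C_B$ (so $L\subseteq A$, every incoming edge of $L$ comes from $F\cup C_B$, and $L$ has uniform inputs in the two executions being compared), symmetrically an island $R$ for $b^{\ast}$, and then proves Condition 1 ($R\cup C\not\Zightarrow L$ and $L\cup C\not\Zightarrow R$) — this is the content of Lemmas \ref{lemma:prop} and \ref{lemma:prop2}. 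The indistinguishability is then established for entire islands by embedding all three executions into a single auxiliary network $\sn$ with duplicated node copies (Theorem \ref{t_nec_1}), which is what breaks the circularity in which each execution's faulty behavior is defined by mimicking another execution. Your proposal names the simulation difficulty but supplies neither the island construction nor the auxiliary-network argument, so the necessity direction is a correct plan with its key lemmas missing rather than a proof.

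The sufficiency direction has a more substantive flaw. Your dissemination primitive is stated incorrectly: a node receiving $f+1$ values along $f+1$ disjoint paths excluding $F$ knows that at least one value is uncorrupted, but it has no way to tell \emph{which} one, so it cannot ``commit to it.'' The paper's \Propagate\ procedure therefore adopts a value only when all $f+1$ received copies are identical, and otherwise records $\perp$; this much weaker guarantee is precisely why the paper needs the full machinery of Algorithm BC: an OUTER loop enumerating every candidate fault set $F$ (since fault-free nodes do not know $F^*$), an INNER loop enumerating every partition $A,B$ of $\sv-F$, a strongly connected coordinating set $S$ chosen as a source component of a reduced graph (whose existence itself must be proved, Claim \ref{claim:inner_loop}), the \Equality\ consistency check inside $S$, the validity-preservation invariant of Lemma \ref{l_valid_1}, and the ``deciding iteration'' argument of Lemma \ref{l_agreement} showing that when $F=F^*$ a conformant partition is eventually reached and agreement then persists. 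Your claim that the partition condition lets the fault-free nodes ``converge on a single decision'' does not follow as stated, because the nodes know neither the actual fault set nor which direction propagates for the actual split of opinions; bridging that gap is, in effect, the entire content of Section \ref{s_sufficiency}.
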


\begin{proof}
Appendix \ref{a_1and2} presents the proof of necessity of the condition in the theorem. In Appendix \ref{a_1and2}, we first prove the necessity of an alternate form of the condition using the state-machine approach developed in prior work \cite{impossible_proof_lynch,dolev_82_BG,welch_book}. We then prove that the alternate necessary condition is equivalent to the condition stated in Theorem \ref{t_nec_2}.

In Section \ref{s_sufficiency}, we present a constructive proof of sufficiency of the condition in the theorem. In particular, we present a Byzantine consensus algorithm and prove its correctness in all directed graphs that satisfy the condition stated in Theorem \ref{t_nec_2}. 
\end{proof}

\subsection{Implications of the necessary and sufficient condition}
\label{s_discussion}

Here, we discuss some interesting implications of Theorem \ref{t_nec_2}.

\begin{itemize}
\item Lower bounds on number of nodes and incoming neighbors are identical to the case in undirected networks \cite{impossible_proof_lynch,dolev_82_BG,welch_book}:

This observation is not surprising, since undirected graphs are a special case of directed graphs.

\begin{corollary}
\label{cor:2f+1}
Suppose that a correct Byzantine consensus algorithm exists for
$G(\scriptv,\scripte)$.
Then,
 (i) $n \geq 3f+1$, and
 (ii) if $f>0$, then each node must have at least $2f+1$ incoming neighbors.
\end{corollary}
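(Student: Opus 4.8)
The plan is to prove both lower bounds by contraposition using the characterization of Theorem~\ref{t_nec_2}: since a correct algorithm is assumed to exist, every partition $A,B,F$ of $\sv$ with $A,B$ nonempty and $|F|\le f$ must satisfy $\propagate{A}{B}{\sv-F}$ or $\propagate{B}{A}{\sv-F}$. Thus, to establish a necessary bound it suffices to assume the bound is violated and then exhibit a \emph{single} partition for which $\notpropagate{A}{B}{\sv-F}$ \emph{and} $\notpropagate{B}{A}{\sv-F}$ hold simultaneously, contradicting the theorem. The one combinatorial fact I would use repeatedly is that any family of pairwise disjoint $(A,b)$-paths has distinct source nodes in $A$ and enters $b$ through distinct incoming neighbors of $b$ that avoid $F$; hence its size is at most $\min(|A|,\, |\{\text{incoming neighbors of } b\}\setminus F|)$. (For $f=0$ both claims are trivial, so I assume $f\ge 1$.)

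For part~(i), I would argue that $n\le 3f$ forces a bad partition. Since $2\le n\le 3f$, I can split $\sv$ into three sets $A,B,F$ with $A,B$ nonempty and $|A|,|B|,|F|\le f$. For any $b\in B$ the number of disjoint $(A,b)$-paths is at most $|A|\le f<f+1$, so $\notpropagate{A}{B}{\sv-F}$; symmetrically $|B|\le f$ gives $\notpropagate{B}{A}{\sv-F}$. This contradicts Theorem~\ref{t_nec_2}, so $n\ge 3f+1$. This part I expect to be routine.

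For part~(ii), I would suppose some node $v$ has at most $2f$ incoming neighbors, collect them in a set $N$ with $|N|\le 2f$, and split $N=N_1\cup N_2$ with $|N_1|,|N_2|\le f$. Taking $F=N_1$, node $v$ retains at most $|N_2|\le f$ incoming neighbors in $G_{-F}$, so for \emph{any} set $A$ with $v\notin A$ there are at most $f$ disjoint $(A,v)$-paths excluding $F$; placing $v\in B$ therefore yields $\notpropagate{A}{B}{\sv-F}$ immediately. The hard part will be forcing the \emph{reverse} nonpropagation $\notpropagate{B}{A}{\sv-F}$ at the same time. This is exactly where directed graphs depart from the undirected case: there the incoming neighbors of $v$ form a genuine vertex cut separating $v$ from the rest in both directions, so the same deletion $F$ blocks both directions at once, whereas here the bound on $v$'s incoming neighbors throttles only the paths \emph{into} $v$ and says nothing about $v$'s outgoing reachability. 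I would therefore try to choose $A,B$ and the half of $N$ placed in $F$ so that some node on the $A$-side is also starved of $f+1$ disjoint $(B,\cdot)$-paths---most naturally by keeping $v$ on the receiving side and arguing that $N_2$ together with $F$ bottlenecks the reverse flow as well---and I expect reconciling the one-directional nature of the incoming-neighbor cut with the need to block both directions to be the principal obstacle of the whole corollary.
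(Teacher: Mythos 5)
Part (i) of your proposal is correct and coincides with the paper's own argument: when $n\le 3f$ you exhibit a partition $A,B,F$ with $A,B$ nonempty and $|A|,|B|,|F|\le f$, and the distinct-source count kills propagation in both directions simultaneously.

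Part (ii), however, has a genuine gap, and it is precisely the step you flagged as the ``principal obstacle'': you never establish $\notpropagate{B}{A}{\sv-F}$. With your choices ($F=N_1$, $v\in B$, $A$ arbitrary) this reverse direction cannot be forced in general --- the hypothesis constrains only the edges entering $v$, so for an arbitrary $A$ each node of $A$ may well have $f+1$ disjoint incoming paths from $B$, and no way of splitting $N$ between $F$ and $N_2$ changes that. What resolves it is not a cleverer blocking argument but a cleverer choice of partition: make the deficient node a class by itself. Take $A=\{v\}$, let $F$ consist of enough incoming neighbors of $v$ that at most $f$ of them remain outside $F$ (possible since $v$ has at most $2f$ of them, and $|F|\le f$), and let $B=\sv-F-\{v\}$, which is nonempty because $n\ge 3f+1$ by part (i). Then $\notpropagate{B}{A}{\sv-F}$ holds because disjoint $(B,v)$-paths excluding $F$ must enter $v$ through distinct incoming neighbors lying outside $F$, of which there are at most $f$; and $\notpropagate{A}{B}{\sv-F}$ is immediate from the counting fact you yourself stated at the outset: pairwise disjoint $(A,b)$-paths have pairwise distinct source nodes in $A$, so a singleton $A$ supports at most one such path, and $1<f+1$ since $f\ge 1$. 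This is exactly the paper's proof (with the roles of $A$ and $B$ as stated there). The moral is that the one-directional nature of the in-neighbor cut is harmless: the cut blocks the direction into $v$, and sheer cardinality blocks the direction out of $v$. Your plan went astray only in leaving the side opposite $v$ large, which forced you to hunt for a second cut that need not exist.
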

\begin{proof}
The proof is in Appendix \ref{a_cor:2f+1}.
\end{proof}


\item Reliable communication between {\bf all} node pairs is {\bf not} necessarily required:

This observation is also not surprising, but nevertheless
interesting (because, in undirected graphs, Byzantine consensus
is feasible if and only if all node pairs can communicate with each
other reliably). To illustrate the above observation, consider the simple
example in Figure \ref{f:core}, with $f=1$. In Figure \ref{f:core},
nodes $v_1,v_2,v_3,v_4$ have directed links to each other, forming a 4-node
clique -- the links inside the clique are not shown in the figure. Node $x$ does not have a directed link to any other node, but has links from the
other 4 nodes.
Yet, Byzantine consensus can be achieved easily by first reaching consensus within the
4-node clique, and then propagating the consensus value (for the 4-node consensus) to node $x$.  
Node $x$ can choose majority of the values received from the nodes in the
4-node clique as its own output.
It should be easy to see that this algorithm works correctly for inputs in $\{0,1\}$ as
required in the Byzantine consensus formulation considered in this work.

\vspace*{-40pt}
\begin{figure}[hbtp!]
\centering


\includegraphics[scale=0.6, bb=-250 -30 749 240]{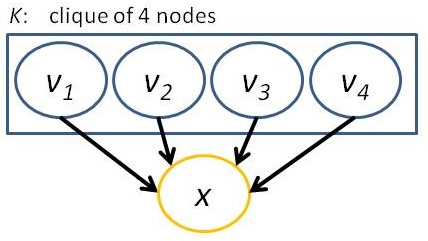}
\vspace*{-30pt}
\caption{A network tolerating $1$ fault. Edges inside clique $K$ are not shown.}
\label{f:core}
\end{figure}

\item For a cut $(A,B)$ of the communication graph, there may not necessarily be $2f+1$ disjoint links 
in any one direction (i.e., from nodes in $A$ to nodes in $B$, or vice-versa):

The above observation is surprising, since it suggests that reliable communication may not be
feasible in either direction across a given cut in the communication graph.
We illustrate this using the system in
Figure \ref{f:2-core} in Section \ref{s_term}, which contains 
two cliques $K_1$ and $K_2$, each containing 7 nodes.
Within each clique, each node has a directed link to the other 6 nodes
in that clique -- these links
are not shown in the figure.
There are 8 directed links with one endpoint in clique $K_1$ and the other endpoint in clique $K_2$.
We prove in Appendix \ref{a_l_2clique} that Byzantine consensus can be
achieved in this system with $f=2$.
However, there are only 4 directed links from $K_1$ to $K_2$,
and 4 directed links from $K_2$ to $K_1$. Thus, reliable communication is
{\em not} guaranteed across the cut $(K_1,K_2)$ in either direction.
Yet, Byzantine consensus is achievable using Algorithm BC.  
In Appendix \ref{a_l_2clique}, we present a family of graphs, named 2-clique network, which satisfies the
condition in Theorem \ref{t_nec_2}. Figure \ref{f:2-core} shown in Section \ref{s_term} is the 2-clique network for $f=2$.
Section \ref{s_sufficiency} proves that Byzantine consensus is possible in all graphs
that satisfy the necessary condition. Therefore, consensus is possible in the 2-clique
network as well. 
\end{itemize}

\section{Sufficiency: Algorithm BC and Correctness Proof}
\label{s_sufficiency}

In this section, we assume that graph $G(\sv,\se)$ satisfies the condition stated in Theorem \ref{t_nec_2}, even if this is not stated explicitly again. We present Algorithm BC (Byzantine Consensus) and prove its correctness in all graphs that satisfy the condition in Theorem \ref{t_nec_2}. This proves that the necessary condition is also sufficient. 
When $f=0$, all the nodes are fault-free,
and as shown in Appendix \ref{a_f_0}, 
the proof of sufficiency is trivial.
In the rest of our discussion below, we will assume that $f>0$. 

The proposed Algorithm BC is presented below.
Each node $i$ maintains two state variables that are
explicitly used in our algorithm: $v_i$ and $t_i$.
Each node maintains other state as well (such as
the routes to other nodes); however, we do not introduce
additional notation for that for simplicity.

\begin{itemize}

\item
{Variable $v_i$:}
Initially, $v_i$ at any node $i$ is equal to the binary input at node $i$.
During the execution of the algorithm, $v_i$ at node $i$ may be updated
several times. Value $v_i$ at the end of the algorithm
represents node $i$'s decision (or output)
for Algorithm BC. The output at each node is
either 0 or 1. At any time during the execution of the algorithm,
the value $v_i$ at node $i$ is said to be {\em valid}, if it equals some fault-free node's input. Initial value $v_i$ at a fault-free node $i$ is valid, because it equals its own input.
Lemma \ref{l_valid_1} proved later in Section \ref{s:correct} implies that $v_i$ at a fault-free node $i$ always remains valid
throughout the execution of Algorithm BC.

\item
{Variable $t_i$:} Variable $t_i$ at any node $i$ 
may take a value in $\{0,1,\perp\}$, where $\perp$ is distinguished
from 0 and 1.
Algorithm BC makes use of procedures \Propagate~and \Equality~that are described soon below. These procedures take $t_i$ as input, and possibly also modify $t_i$.
Under some circumstances, as discussed later, state variable $v_i$ at node $i$ is set equal to $t_i$,
in order to update $v_i$.
\end{itemize}

Algorithm BC consists of two loops, an OUTER loop, and an INNER loop. The OUTER loop is performed for
each subset of nodes $F$, $|F|\leq f$. 
For each iteration of the OUTER loop, many iterations of the INNER loop are performed.
The nodes in $F$ do not participate in any of these INNER loop iterations.
For a chosen $F$,
each iteration of the INNER loop is performed for a different partition of $\sv-F$.

Since there are at most $f$ faults, one iteration of the OUTER loop has $F$ exactly equal
to the set of faulty nodes. Denote the actual set of faulty nodes as $F^*$. 
Algorithm BC has two properties, as proved later:
\begin{itemize}
\item State $v_i$ of each fault-free node $i$ at the end of any particular INNER loop iteration equals the state of some
fault-free node at the start of that INNER loop iteration. 
Thus, Algorithm BC ensures that the state $v_i$ of each fault-free node $i$ remains valid at all times.
\item
By the end of the OUTER loop iteration for $F=F^*$, all the fault-free nodes reach agreement.
\end{itemize}
The above two properties ensure that, when Algorithm BC terminates, the validity and agreement properties are both satisfied. 

Each iteration of the INNER loop, for a given set $F$, considers a partition $A,B$ of the nodes in $\sv-F$
such that $\propagate{A}{B}{\sv-F}$.
Having chosen a partition $A,B$, intuitively speaking, the goal of the INNER loop iteration is for the nodes in set $A$ to attempt to influence the state of the nodes in the other partition.
A suitable set $S\subseteq A \cup B$ is identified and agreed a priori using the known topology information. There are two possible cases. In Case 1 in Algorithm BC, $S \subseteq A$, and nodes in $S$ use procedure \Equality~(step (b) in the pseudo-code) to decide the value to propagate to nodes in $\sv-F-S$ (step (c)). In Case 2, $S \subseteq A \cup B$, and nodes in $S$ first learn the states at nodes in $A$ using procedure \Propagate~(step (f)), and then use procedure \Equality~(step (g)) to decide the value to propagate to nodes in $\sv-F-S$ (step (h)). These steps ensure that if $F = F^*$, and nodes in $A$ have the same $v$ value, then $S$ will propagate that value, and all nodes in $\sv - F^* - S$ (Case 1: step (d)) or in $\sv - F^* - (A \cap S)$ (Case 2: step (i)) will set $v$ value equal to the value propagated by $S$, and thus, the agreement is achieved. As proved later, in at least one INNER loop iteration with $F = F^*$, nodes in $A$ have the same $v$ value.


\comment{+++++++++++++++++++++++++++++++++old text++++++++++++++++++

Algorithm BC consists of two loops. The OUTER loop tries to isolate  faults by excluding a set of chosen nodes $F$ from the propagation of the proposed values for that iteration, i.e., nodes in $F$ do not send or forward any messages, and can only accept values from their incoming neighbors in the end of that iteration. Since the OUTER loop enumerates over all possible sets of nodes, in some iteration, $F$ must equal the \underline{actual set of faulty nodes} in the network $F^* ~~(0 \leq |F^*| \leq f)$. 

The INNER loop has two goals: (i) preserves validity, i.e., the state at each fault-free node in the end of the iteration is some fault-free node's state in the beginning of that iteration; and (ii) achieves agreement when OUTER loop chooses $F = F^*$. The INNER loop relies on procedures $\Propagate$ and $\Equality$, which  make use of state variables $v$ and $t$ maintained by the nodes. We discuss the node state in Section \ref{ss_node}, and each procedure in Sections \ref{ss_propagate} and \ref{ss_equality}, respectively. 

Each iteration of the INNER loop divides nodes in $\sv-F$ into different partition $A, B$, and chooses a set of nodes $S$ according to the property of $A$ and $B$. Then nodes in $S$ propose a value to be adopted (variable $t$) for that iteration, and other nodes will decide whether to accept the value using procedure $\Propagate$ (steps (c) and (h)). Variable $t$ is determined differently in each case. In Case 1, $t$ can be determined by communication among nodes in $S$ (steps (a) and (b)); while in Case 2, nodes in $A$ need to help nodes in $S$ to determine $t$ (steps (e), (f), and (g)). If the graph satisfies the condition in Theorem \ref{t_nec_2}, then both goals will be achieved. Later in Section \ref{s:correct}, we prove goal (i) in Lemma \ref{l_valid_1}, and goal (ii) in Lemma \ref{l_agreement}.


++++++++++++++++++++++++++}

%

~

\hrule

\vspace*{2pt}

\noindent {\bf Algorithm BC}

\vspace*{4pt}

\hrule

\vspace*{4pt}

~

\indent {\em Comment}: Note that Algorithm BC can be implemented distributedly if every node has prior knowledge of the topology. For the convenience of reader, the pseudo-code below is presented in a centralized fashion. \\

\noindent
(OUTER LOOP)\\ For each $F \subset \scriptv$, where $0 \leq |F| \leq f$:

  \begin{list}{}{}
  \item (INNER LOOP)\\For each partition $A,B$ of $\sv - F$ such
	 that $A, B$ are non-empty, and $\propagate{A}{B}{\sv-F}$:

	STEP 1 of INNER loop:

	\begin{itemize}
	\item {\bf Case 1:}
            {\bf if} $\propagate{A}{B}{\sv-F}$ and $\notpropagate{B}{A}{\sv-F}$:

	Choose a non-empty set $S\subseteq A$ such that $\propagate{S}{\sv-F-S}{\sv-F}$,	and $S$ is strongly connected in $G_{-F}$ ($G_{-F}$ is defined in Definition \ref{def:G-F}).

        \begin{list}{}{}
        \item[(a)] At each node $i\in S:~~~$ $t_i:=v_i$ 
	\item[(b)] \Equality($S$)               
	\item[(c)] \Propagate($S,\sv-F-S$)
	\item[(d)] At each node $j\in \sv-F-S:~~~$ if $t_j\neq\perp$, then $v_j:=t_j$
        \end{list}

	\item {\bf Case 2:}
            {\bf if} $\propagate{A}{B}{\sv-F}$ and $\propagate{B}{A}{\sv-F}$:

	Choose a non-empty set $S\subseteq A\cup B$ such that	$\propagate{S}{\sv-F-S}{\sv-F}$,	$S$ is strongly connected in $G_{-F}$, and $\propagate{A}{(S-A)}{\sv-F}$.

        \begin{list}{}{}
        \item[(e)] At each node $i\in A:~~~$ $t_i=v_i$
        \item[(f)] \Propagate($A,S-A$)
	\item[(g)] \Equality($S$)
	\item[(h)] \Propagate($S,\sv-F-S$)
	\item[(i)] At each node $j\in \sv-F-(A\cap S):~~~$ if $t_j\neq\perp$,
			 then $v_j:=t_j$
	\end{list}
	\end{itemize}  

	STEP 2 of INNER loop:
        \begin{list}{}{}
	\item[(j)] Each node $k\in F$ receives $v_j$ from
		each $j\in N_k$, where $N_k$ is a set consisting of
		$f+1$ of $k$'s incoming neighbors in $\sv-F$.
		If all the received values are identical, then $v_k$ is
		set equal to this identical value; else $v_k$ is unchanged.
	\end{list}
   \end{list}

\hrule



\comment{+++++++++++++++
\subsection{Node State}
\label{ss_node}

Each node $i$ maintains two state variables that are
explicitly used in our algorithm: $v_i$ and $t_i$.
Each node will have to maintain other state as well (such as
the routes to other nodes); however, we do not introduce
additional notation for that for simplicity.
\begin{itemize}

\item
{Variable $v_i$:}
Initially, $v_i$ at any node $i$ is equal to the binary input at node $i$.
During the course of the algorithm, $v_i$ at node $i$ may be updated
several times. Value $v_i$ at the end of the algorithm
represents node $i$'s decision (or output)
for Algorithm BC. The output at each node is
either 0 or 1. At any time during the execution of the algorithm,
the value $v_i$ at node $i$ is said to be {\em valid}, if it equals some fault-free node's input. Initial value $v_i$ at a fault-free node $i$ is valid because it equals its
own input.
Lemma \ref{l_valid_1} proved later in Section \ref{s:correct} implies that $v_i$ at a fault-free node $i$ always remains valid
throughout the execution of Algorithm BC.

\item
{Variable $t_i$:} Variable $t_i$ at any node $i$ 
may take a value in $\{0,1,\perp\}$, where $\perp$ is distinguished
from 0 and 1.
The \Propagate~ and \Equality~ procedures take $t_i$ at participating
nodes $i$ as input, and may also modify $t_i$.
Under some circumstances, $v_i$ at node $i$ is set equal to $t_i$
in order to update $v_i$, in steps (d) and (i) of Algorithm BC.
\end{itemize}
++++++++++++++++++++++}


\subsection{Procedure \Propagate($P,D$)}
\label{ss_propagate}

\Propagate($P,D$) assumes that $P\subseteq \sv-F$, $D\subseteq \sv-F$,
$P\cap D=\emptyset$ and $\propagate{P}{D}{\sv-F}$. Recall that set $F$ is the set chosen in each OUTER loop as specified by Algorithm BC.

\vspace*{2pt}
\hrule
\vspace*{2pt}
\noindent \Propagate($P,D$)
\vspace*{4pt}
\hrule
\vspace*{-2pt}
\begin{list}{}{}
\item[(1)]
Since $\propagate{P}{D}{\sv-F}$, for each $i\in D$, there
exist at least $f+1$ disjoint ($P,i)$-paths that exclude $F$.
The source node of each of these paths is in $P$.
On each of $f+1$ such disjoint paths, the source node for that path, say $s$, sends $t_s$ to node $i$. 
 Intermediate nodes on these paths forward
received messages as necessary. 

When a node does not receive an expected message,
the message content is assumed to be $\perp$.

\item[(2)] When any node $i\in D$ receives $f+1$ values along the $f+1$
disjoint paths above:\\
 if the $f+1$ values are all equal to 0, then
$t_i:=0$; else if the $f+1$ values are all equal to 1, then
$t_i:=1$; else $t_i:=\perp$. \hfill (Note that $:=$ denotes the assignment operator.)
\end{list}
For any node $j\not\in D$, $t_j$ is not modified during \Propagate($P,D$).
Also, for any node $k\in\sv$, $v_k$ is not modified during \Propagate($P,D$).

\vspace*{4pt}
\hrule

\subsection{Procedure \Equality($D$)}
\label{ss_equality}

\Equality($D$) assumes that $D\subseteq \sv-F$, $D\neq \emptyset$,
and for each pair of nodes $i,j\in D$,
an $(i,j)$-path excluding $F$ exists, i.e., $D$ is strongly connected
in $G_{-F}$ ($G_{-F}$ is defined in Definition \ref{def:G-F}).


\vspace*{2pt}
\hrule
\vspace*{2pt}
\noindent \Equality($D$)
\vspace*{4pt}
\hrule
\begin{list}{}{}
\item[(1)] Each node $i\in D$ sends $t_i$ to all other nodes
in $D$ along paths excluding $F$.
\item[(2)] Each node $j\in D$ thus receives messages from all nodes in $D$.
Node $j$ checks whether values received from all the nodes in $D$ and
its own $t_j$ are all equal, and also belong to $\{0,1\}$.
If these conditions are {\em not} satisfied, then $t_j:=\perp$; otherwise $t_j$ is not modified.


\end{list}
For any node $k\not\in D$, $t_k$ is not modified in \Equality($D$).
Also, for any node $k\in \sv$, $v_k$ is not modified in \Equality($D$).

\vspace*{4pt}

\hrule

\subsection{INNER Loop of Algorithm BC for $f>0$}
\label{subsec:inner}
Assume that $f>0$.
For each $F$ chosen in the OUTER loop, the INNER loop of Algorithm BC examines
each partition $A,B$ of $\sv-F$ such that $A,B$ are both non-empty.
From the condition in Theorem \ref{t_nec_2}, we know that either
$\propagate{A}{B}{\sv-F}$ or $\propagate{B}{A}{\sv-F}$.
Therefore, with renaming of the sets we can ensure that 
$\propagate{A}{B}{\sv-F}$. Then, depending on the choice of $A,B,F$, two cases may occur:
 (Case 1) $\propagate{A}{B}{\sv-F}$ and $\notpropagate{B}{A}{\sv-F}$, and (Case 2) $\propagate{A}{B}{\sv-F}$ and $\propagate{B}{A}{\sv-F}$. 

In Case 1 in the INNER loop of Algorithm BC, we need to find
	a non-empty set $S\subseteq A$ such that $\propagate{S}{\sv-F-S}{\sv-F}$,
	and $S$ is strongly connected in $G_{-F}$ ($G_{-F}$ is defined in Definition \ref{def:G-F}).
In Case 2, we need to find
	a non-empty set $S\subseteq A\cup B$ such that
	$\propagate{S}{\sv-F-S}{\sv-F}$,
	$S$ is strongly connected in $G_{-F}$, and $\propagate{A}{(S-A)}{\sv-F}$. 
The following claim ensures that Algorithm BC can be executed correctly in $G$.

\begin{claim}
\label{claim:inner_loop}
Suppose that $G(\sv, \se)$ satisfies the condition stated in Theorem \ref{t_nec_2}. Then,

\begin{itemize}
\item The required set $S$ exists in both Case 1 and 2 of each INNER loop.

\item Each node in set $F$ has enough incoming neighbors in $\sv-F$ to perform step (j) of Algorithm BC with $f>0$.
\end{itemize}

\end{claim}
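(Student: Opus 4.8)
The plan is to prove the two bullets separately, disposing of the incoming-neighbor count quickly and devoting the real effort to the existence of $S$. For the incoming-neighbor claim I would use the fact, established in Corollary~\ref{cor:2f+1}, that the condition of Theorem~\ref{t_nec_2} forces every node of $\sv$ to have at least $2f+1$ incoming neighbors when $f>0$. Fixing $k\in F$, at most $|F|\le f$ of $k$'s incoming neighbors can lie in $F$, so at least $2f+1-f=f+1$ of them lie in $\sv-F$; these supply the set $N_k$ required in step (j).

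For the existence of $S$, the workhorse is an observation about source strongly connected components. The condensation of $G_{-F}$ (Definition~\ref{def:G-F}) into strongly connected components is a DAG and hence has at least one source component $C$, i.e.\ one with no edges entering it from outside $C$. Applying Theorem~\ref{t_nec_2} to the partition $(C,(\sv-F)-C)$ of $\sv-F$: since $C$ is a source component no node outside $C$ can even reach a node of $C$ in $G_{-F}$, so $\notpropagate{(\sv-F)-C}{C}{\sv-F}$, and the condition therefore forces $\propagate{C}{(\sv-F)-C}{\sv-F}$ (when $C=\sv-F$ this propagation is to $\emptyset$ and holds trivially). Thus \emph{every} source component propagates to all remaining nodes. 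Case 2 is then immediate: take $S:=C$ for any source component $C$. It is strongly connected in $G_{-F}$ (Definition~\ref{def:strong}), satisfies $\propagate{S}{\sv-F-S}{\sv-F}$, and since $S-A=C\cap B\subseteq B$, the requirement $\propagate{A}{(S-A)}{\sv-F}$ follows from the Case 2 hypothesis $\propagate{A}{B}{\sv-F}$ by restricting to a subset of targets.

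Case 1 is where I expect the main obstacle, because there I additionally need $S\subseteq A$: a source component of $G_{-F}$ may straddle $A$ and $B$, and even a source component of the induced subgraph on $A$ need not propagate, since nodes of $A$ that are already reachable from $B$ spoil the argument. My fix is to pass to $A^-:=\{a\in A:\notpropagate{B}{\{a\}}{\sv-F}\}$, the nodes of $A$ to which $B$ does \emph{not} propagate; since $\notpropagate{B}{A}{\sv-F}$ in Case 1, $A^-\neq\emptyset$. I would take $S$ to be a source component of the subgraph of $G_{-F}$ induced on $A^-$, so that $S\subseteq A^-\subseteq A$ and $S$ is strongly connected in $G_{-F}$. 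To show $\propagate{S}{\sv-F-S}{\sv-F}$, suppose not; then Theorem~\ref{t_nec_2} gives $\propagate{(\sv-F)-S}{S}{\sv-F}$. For each $s\in S$ take the $f+1$ disjoint $((\sv-F)-S,s)$-paths and, on each, consider its last node $w$ lying outside $S$; because $S$ is a source component within $A^-$, no edge enters $S$ from $A^-\setminus S$, so $w\in B\cup D$ where $D:=A\setminus A^-$. The suffixes from these $w$'s are pairwise disjoint (sub-paths of disjoint paths) and yield $\propagate{(B\cup D)}{\{s\}}{\sv-F}$. Since $D$ consists exactly of nodes with $\propagate{B}{\{\cdot\}}{\sv-F}$, we have $\propagate{B}{D}{\sv-F}$, and a composition argument (below) upgrades this to $\propagate{B}{\{s\}}{\sv-F}$, contradicting $s\in A^-$.

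The composition step I rely on is: if $\propagate{B}{D}{\sv-F}$ and there are $f+1$ disjoint $(B\cup D,s)$-paths, then there are $f+1$ disjoint $(B,s)$-paths. I would prove it by Menger's theorem in $G_{-F}$: otherwise some $K\subseteq(\sv-F)-B-\{s\}$ with $|K|\le f$ separates $B$ from $s$; among the $f+1$ disjoint $(B\cup D,s)$-paths one, say $R$, avoids $K$, and its source must lie in $D$ (a source in $B$ would already give a $B$-to-$s$ path avoiding $K$); applying $\propagate{B}{D}{\sv-F}$ to that source gives a $B$-path into it avoiding $K$, and concatenating with $R$ yields a $B$-to-$s$ walk avoiding $K$, a contradiction. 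I expect this composition lemma, together with the correct choice of $A^-$, to be the crux of the proof; the source-component existence, the easy reductions, and the incoming-neighbor count are all routine by comparison.
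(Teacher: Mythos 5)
Your proof is correct, but it takes a genuinely different route from the paper's. The paper reuses the machinery already built for the necessity proof: for Case 1 it invokes Lemma \ref{lemma:prop2} to extract a non-empty $A'\subseteq A$ whose incoming neighbors in $(\sv-F)-A'$ number at most $f$, collects those neighbors into a set $F_1$, and takes $S$ to be the source component of the reduced graph $G_{F,F_1}$ (Definition \ref{def:reduced}), in which all outgoing edges of $F_1$ are deleted; the propagation property then comes from Corollary \ref{cor:1:prop}, which rests on Lemma \ref{lemma:prop} and hence on the Condition~1 form of the necessary condition. For Case 2 the paper likewise picks an arbitrary $F_1$ with $|F_1|=f$ and uses the source component of $G_{F,F_1}$. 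You avoid Condition~1 and the reduced-graph construction altogether and argue directly from the propagate form: in Case 2 your observation that a source component $C$ of $G_{-F}$ satisfies $\notpropagate{(\sv-F)-C}{C}{\sv-F}$ (no path even enters $C$), so that the condition of Theorem \ref{t_nec_2} forces $\propagate{C}{(\sv-F)-C}{\sv-F}$, is shorter than the paper's treatment; in Case 1 your set $A^-$ (the nodes of $A$ to which $B$ does not propagate) plays the role of the paper's $A'$, and your composition step --- $\propagate{B}{D}{\sv-F}$ together with $f+1$ disjoint $(B\cup D,s)$-paths yields $f+1$ disjoint $(B,s)$-paths --- is exactly the paper's Lemma \ref{lemma:AtoBC} (instantiated with $B$, $D$, $\{s\}$), which the paper proves only for the 2-clique analysis and never uses in its proof of this claim. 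What the paper's route buys is economy, since everything it needs was already proved for necessity; what your route buys is self-containedness relative to the statement of Theorem \ref{t_nec_2}, at the cost of proving the composition lemma from scratch.

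Two harmless slips. First, in your composition lemma, Menger's theorem does not let you place the cut inside $(\sv-F)-B-\{s\}$: a minimum cut may have to contain nodes of $B$ (if $B$ is a single node $b$ with two internally disjoint paths to $s$, the only cut of size one is $\{b\}$, since disjoint $(B,s)$-paths must have distinct sources). The correct statement gives $K\subseteq(\sv-F)-\{s\}$ with $|K|\leq f$ meeting every $(B,s)$-path excluding $F$; your argument never actually uses $K\cap B=\emptyset$, so it survives unchanged. Second, for step (j): since $k\in F$ and self-loops are excluded, at most $|F|-1\leq f-1$ incoming neighbors of $k$ lie in $F$, so $k$ has at least $f+2$ incoming neighbors in $\sv-F$ (as the paper notes); your bound of $f+1$ is weaker but still suffices for the required set $N_k$.
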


\begin{proof}
The proof of the first claim is proved in Appendix \ref{a_claim:S}.

Now, we prove the second claim.
Consider nodes in set $F$. As shown in Corollary \ref{cor:2f+1} in Section \ref{nec_2},
when $f>0$, each node in $\sv$ has at least $2f+1$ incoming neighbors.
Since $|F|\leq f$,
for each $k\in F$ there must exist at least
$f+2$ incoming neighbors in $\sv-F$. 
Thus, the desired set $N_k$ exists, satisfying
the requirement in step (j) of Algorithm BC.
\end{proof}

\comment{++++++++++
Appendix \ref{a_claim:S} shows that the required set $S$ exists in both the cases. 

Now, we show the following claim:

\begin{claim}
Each node in set $F$ has enough incoming neighbors in $\sv-F$ to perform step (j) of Algorithm BC with $f>0$.
\end{claim}

\begin{proof}
Consider nodes in set $F$. As shown in Corollary \ref{cor:2f+1} in Section \ref{nec_2},
when $f>0$, each node in $\sv$ has at least $2f+1$ incoming neighbors.
Since $|F|\leq f$,
for each $k\in F$ there must exist at least
$f+2$ incoming neighbors in $\sv-F$. 
Thus, the desired set $N_k$ exists, satisfying
the requirement in step (j) of Algorithm BC.
\end{proof}

+++++++++++++++}

%

%

\subsection{Correctness of Algorithm BC for $f>0$}
\label{s:correct}

Recall that by assumption, $F^*$ is the \underline{actual set of faulty nodes} in
the network ($0\leq |F^*|\leq f$).
Thus, the set of fault-free nodes is $\sv-F^*$. When discussing a certain INNER loop iteration, we sometimes add
\underline{superscripts $\ssss$\, and $\eeee$}\, to $v_i$ for node $i$ to indicate whether we are referring to $v_i$ at the start, or at the end, of that
INNER loop iteration, respectively. We first show that INNER loop preserves validity.


\begin{lemma}
\label{l_valid_1}
For any given INNER loop iteration, for each fault-free node $j\in \sv-F^*$, there exists a fault-free node $s\in \sv-F^*$ such that $v_j^\eeee=v_s^\ssss$. 
\end{lemma}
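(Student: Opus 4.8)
The plan is to prove Lemma~\ref{l_valid_1} by tracing how the value $v_j$ at each fault-free node $j$ can change during a single INNER loop iteration, and showing that whenever it changes, the new value is copied (through the procedures \Propagate\ and \Equality) from the $v^\ssss$ value of some fault-free source node. Since the only places where a $v$ variable is reassigned are step (d) in Case~1, step (i) in Case~2, and step (j) of STEP~2, it suffices to examine each of these and show the assigned value is some fault-free node's starting value; if $v_j$ is never reassigned, then $v_j^\eeee = v_j^\ssss$ and we take $s=j$.

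The core observation I would establish first is a \emph{provenance} property of the two procedures: any $t_i$ that ends up in $\{0,1\}$ (i.e.\ $t_i \neq \perp$) after \Propagate\ or \Equality\ must equal $t_s$ of some fault-free node $s$ at the start of that procedure. For \Equality($D$), a fault-free node $j$ keeps $t_j\in\{0,1\}$ only if all values it receives along the $F$-excluding paths agree with its own $t_j$; since $D$ contains at least one fault-free node (indeed $j$ itself), the surviving value equals a fault-free node's $t$. For \Propagate($P,D$), a fault-free node $i\in D$ sets $t_i\in\{0,1\}$ only when all $f+1$ values along the $f+1$ \emph{disjoint} $(P,i)$-paths excluding $F$ agree; because the paths are disjoint and exclude $F\supseteq$ (in the iteration of interest, but here $F$ is arbitrary, so I argue for any $F$) at most $f$ of them can be corrupted by faulty nodes, hence at least one path is entirely fault-free and delivers the unmodified $t_s$ of its fault-free source $s\in P$. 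Thus the agreed value equals $t_s$ for a fault-free source. Since each source's $t_s$ was itself set to that node's $v_s^\ssss$ at the start of the iteration (step (a)/(e) for nodes in $S$ or $A$), the chain of custody traces back to a fault-free starting value.

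With the provenance property in hand, I would handle the reassignments. In step (d), a fault-free $j\in\sv-F-S$ sets $v_j:=t_j$ only if $t_j\neq\perp$; by the \Propagate\ provenance this $t_j$ equals some fault-free source's $t_s = v_s^\ssss$, giving the desired $s$. Step (i) in Case~2 is analogous. Step (j) assigns $v_k$ for $k\in F$ only when the $f+1$ received values are identical; among $k$'s $f+1$ chosen incoming neighbors in $\sv-F$ at least one is fault-free, so the identical value equals that neighbor's $v$, which by the preceding cases is some fault-free node's $v^\ssss$. One subtlety is that nodes in $F$ may be fault-free (since $F$ is an arbitrary set of size $\le f$, not necessarily the faulty set $F^*$); the lemma must cover fault-free nodes that happen to lie in $F$, and step (j) is exactly the mechanism that keeps their values valid.

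The \textbf{main obstacle} I anticipate is the disjoint-paths counting argument inside \Propagate: I must argue carefully that with $f+1$ pairwise disjoint $(P,i)$-paths excluding $F$, and at most $f$ faulty nodes total, at least one path is free of faulty nodes, so the adversary cannot forge unanimous agreement on a value that no fault-free source actually holds. Disjointness (paths share only the endpoint $i$) is what guarantees a single faulty node can corrupt at most one path, and excluding $F$ matters only when reasoning about the specific iteration $F=F^*$; for the validity lemma, which must hold for \emph{every} $F$, I rely purely on disjointness together with $|F^*|\le f$. I expect the bookkeeping of which nodes are fault-free versus in $F$ versus in $F^*$ to be the delicate part, but the structural argument itself is a clean induction over the algorithmic steps.
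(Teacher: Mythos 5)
Your proposal is correct and takes essentially the same route as the paper's proof: the paper defines the set $Z$ of fault-free nodes' starting values and traces through steps (a)--(d), (e)--(i), and (j) that every fault-free node's $t$ stays in $Z\cup\{\perp\}$ and its $v$ stays in $Z$, using exactly your two key observations --- that among $f+1$ disjoint paths at least one is entirely fault-free, and that a fault-free node in $F$ sees at least one fault-free neighbor among the $f+1$ nodes in $N_k$. Your ``provenance'' property is just a modular repackaging of the paper's $Z$-membership argument, including the two-level chaining in Case~2 (step (f) feeding step (h)).
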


\begin{proof}
To avoid cluttering the notation, for a set of nodes $X$, we use the phrase

\hspace{1in}a fault-free node $j\in X$

\noindent
as being equivalent to

 \hspace{1in}a fault-free node $j\in X-F^*$

\noindent
because all the fault-free nodes in any set $X$ must also be in $X-F^*$.

Define \underline{set $Z$} as the set of values of $v_i$
at all fault-free $i\in\sv$ at the start of the
INNER loop iteration under consideration, i.e., $Z = \{ v_i^\ssss ~ | ~i\in \sv-F^* ~\}$.

We first prove the claim in the lemma for the fault-free nodes in $\in\sv-F$,
and then for the fault-free nodes in $F$.
Consider the following two cases in the INNER loop iteration.
\begin{itemize}
\item {\bf Case 1:} $\propagate{A}{B}{\sv-F}$ and $\notpropagate{B}{A}{\sv-F}$:

Observe that, in Case 1, $v_i$ remains unchanged for all
fault-free $i\in S$. Thus, $v_i^\eeee=v_i^\ssss$ for
$i\in S$, and hence, the claim of the lemma is trivially
true for these nodes. We will now prove the claim for fault-free $j\in\sv-F-S$.

\begin{itemize}
\item step (a):
Consider a fault-free node $i\in S$.
At the end of step (a), $t_i$ is equal to $v_i^\ssss$. Thus,
$t_i\in Z$.

\item step (b):
In step (b), step 2 of \Equality($S$) either keeps $t_i$ unchanged at
fault-free node $i\in S$
or modifies it to be $\perp$. Thus, now $t_i\in Z\cup\{\perp\}$.

\item step (c):
Consider a fault-free node $j\in \sv-F-S$. During \Propagate($S,\sv-F-S$),
$j$ receives $f+1$ values along $f+1$ disjoint paths originating
at nodes in $S$. Therefore, at least one of the $f+1$ values
is received along a path that contains only fault-free nodes;
suppose that the value received by node $j$ along this fault-free path
is equal to $\alpha$. As observed above in step (b),
$t_i$ at all fault-free nodes $i\in S$ is in $Z\cup\{\perp\}$. Thus,
$\alpha\in Z\cup\{\perp\}$.
Therefore, at fault-free node $j\in \sv-F-S$, step 2 of \Propagate($S,\sv-F-S$) will result
in $t_j\in\{\alpha,\perp\}\subseteq Z\cup\{\perp\}$.

\item step (d):
Then it follows that, in step (d), at fault-free $j\in\sv-F-S$, if $v_j$ is updated, then 
$v_j^\eeee\in Z$. On the other hand, if $v_j$ is not updated, then $v_j^\eeee=v_j^\ssss\in Z$.
\end{itemize}

\item {\bf Case 2:} $\propagate{A}{B}{\sv-F}$ and $\propagate{B}{A}{\sv-F}$:

Observe that, in Case 2, $v_j$ remains unchanged for all
fault-free $j\in A\cap S$; thus $v_j^\eeee=v_j^\ssss$
for these nodes. Now, we prove the claim in the lemma for 
fault-free $j\in\sv-F-(A\cap S)$.

\begin{itemize}

\item step (e):
For any fault-free node $i\in A$,
at the end of step (e), $t_i\in Z$.

\item
step (f):
Consider a fault-free node $m\in S-A$. During \Propagate($A,S-A$),
$m$ receives $f+1$ values along $f+1$ disjoint paths originating
at nodes in $A$. Therefore, at least one of the $f+1$ values
is received along a path that contains only fault-free nodes;
suppose that the value received by node $m$ along this fault-free path
is equal to $\gamma\in Z$.
Therefore, at node $m\in S-A$, \Propagate($A,S-A$) will result in
$t_m$ being set to a value in $\{\gamma,\perp\}\subseteq  Z\cup\{\perp\}$. Now, for $m\in S\cap A$, $t_m$ is not modified in step (f), and therefore,
for fault-free $m\in S\cap A$, $t_m\in Z$. 
Thus, we can conclude that, at the end of step (f), for all
fault-free nodes
$m\in S$, $t_m\in Z\cup\{\perp\}$.

\item
step (g):
In step (g), at each $m\in S$, \Equality($S$) either keeps $t_m$ unchanged,
or modifies it to be $\perp$. Thus, at the end of step (g), for
all fault-free $m\in S$, $t_m$ remains in $Z\cup\{\perp\}$.

\item
step (h):
Consider a fault-free node $j\in \sv-F-S$. During \Propagate($S,\sv-F-S$),
$j$ receives $f+1$ values along $f+1$ disjoint paths originating
at nodes in $S$. Therefore, at least one of the $f+1$ values
is received along a path that contains only fault-free nodes;
suppose that the value received by node $j$ along this fault-free path
is equal to $\beta$. As observed above, after step (g),
for each fault-free node $m\in S$, $t_m\in Z\cup\{\perp\}$.
Therefore, $\beta\in Z\cup\{\perp\}$, and at node $j\in \sv-F-S$,
\Propagate($S,\sv-F-S$) will result in
$t_j$ being set to a value in $\{\beta,\perp\}\subseteq Z\cup\{\perp\}$.

\item step (i):
From the discussion of steps (g) and (h) above,
it follows that, in step (i), if $v_j$ is updated
at a fault-free $j\in \sv-F-(S\cap A)$, then $v_j^\eeee\in Z$;
on the other hand, if $v_j$ is not modified,
then $v_j^\eeee=v_j^\ssss\in Z$.
\end{itemize}
\end{itemize}
 Now, consider a fault-free node $k\in F$. Step (j) uses set
$N_k\subset\sv-F$ such that $|N_k|=f+1$. As shown above, at the start of step (j), $v_j^\eeee\in Z$ at all fault-free $j\in \sv-F$. Since $|N_k| = f+1$, at least one of the nodes in $N_k$ is fault-free. Thus, of the $f+1$ values received by node $k$, at least one value must be in $Z$. It follows that if node $k$ changes $v_k$ in step (j), then the new value will also in $Z$; on the other hand, if node $k$ does not change $v_k$, then it remains equal to $v_k^\ssss \in Z$.
\end{proof}

\begin{lemma}
\label{l_validity}
Algorithm BC satisfies the validity property for Byzantine consensus.
\end{lemma}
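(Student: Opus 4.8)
The plan is to deduce validity directly from Lemma \ref{l_valid_1} by a straightforward induction over the sequence of INNER loop iterations. Recall that a value $v_i$ is called \emph{valid} if it equals the input of some fault-free node, and that the validity property requires the final output of every fault-free node to equal a fault-free node's input. Since each fault-free node's output is simply its value $v_i$ when Algorithm BC terminates, it suffices to show that $v_i$ at every fault-free node $i\in\sv-F^*$ remains valid throughout the entire execution, and in particular at termination.

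First I would establish the base case: before the first INNER loop iteration, $v_i$ at each fault-free node $i$ equals node $i$'s own input, and since $i$ is itself fault-free, $v_i$ is valid. For the inductive step, I would view the whole run of Algorithm BC as a single linear sequence of INNER loop iterations, obtained by concatenating the INNER loop iterations across all OUTER loop choices of $F$, and observe that the \eeee\ state of one iteration coincides with the \ssss\ state of the next. Taking as inductive hypothesis that every fault-free node's $v$ value is valid at the start of a given iteration, Lemma \ref{l_valid_1} supplies, for each fault-free node $j$, a fault-free node $s$ with $v_j^\eeee=v_s^\ssss$. By the inductive hypothesis $v_s^\ssss$ equals the input of some fault-free node, hence so does $v_j^\eeee$; thus every fault-free node's value is again valid at the end of the iteration.

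By induction, at termination every fault-free node's value $v_i$ is valid, i.e.\ equals the input of some fault-free node, which is exactly the validity property required for Byzantine consensus.

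I do not expect any serious obstacle here: the substantive content is carried entirely by Lemma \ref{l_valid_1}, so this statement is essentially a corollary. The only point that deserves a line of care is the bookkeeping observation that the per-iteration guarantee of Lemma \ref{l_valid_1} composes correctly across the full execution, precisely because the \eeee\ state of each iteration is the \ssss\ state of the following one, so that validity is preserved step by step from the initial inputs all the way to the final outputs.
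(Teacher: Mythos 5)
Your proof is correct and follows essentially the same route as the paper: the paper likewise observes that initial values are valid and invokes Lemma \ref{l_valid_1} iteration by iteration to conclude validity at termination. Your version merely spells out explicitly the induction and the bookkeeping fact that each iteration's end state is the next iteration's start state, which the paper leaves implicit.
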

\begin{proof}
Recall that the state $v_i$ of a fault-free node $i$ is {\em valid} if it equals the input
at a fault-free node.
For each fault-free $i\in \sv$, initially, $v_i$ is valid. 
Lemma \ref{l_valid_1} implies that after each INNER loop iteration, $v_i$ remains valid
at each fault-free node $i$.
Thus, when Algorithm BC terminates, $v_i$ at each
fault-free node $i$ will satisfy the {\em validity} property
for Byzantine consensus, as stated in Section \ref{s_intro}.
\end{proof}

\begin{lemma}
\label{l_termination}
Algorithm BC satisfies the termination property for Byzantine consensus.
\end{lemma}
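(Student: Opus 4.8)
The plan is to show that Algorithm BC halts after a bounded number of synchronous rounds, after which every fault-free node $i$ holds a decision value $v_i\in\{0,1\}$. Termination then follows immediately, since the output of node $i$ is defined to be the final value of $v_i$, and this value stays in $\{0,1\}$ throughout (the initial value is a binary input, and steps (d) and (i) update $v_j:=t_j$ only when $t_j\neq\perp$, so by Lemma \ref{l_valid_1} the updated value lies in $\{0,1\}$).

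First I would argue that each invocation of \Propagate and \Equality completes in a bounded number of rounds. In both procedures, nodes forward values along directed paths that exclude $F$, and any such path has length at most $n-1$; so by the synchrony assumption (each message on a link is delivered within a bounded interval), forwarding a value from a source to its destination takes at most $n-1$ rounds. Crucially, no fault-free node ever blocks waiting for a message: by the stated convention in \Propagate, a missing expected message is treated as $\perp$, and in \Equality a node that does not receive the expected values from all of $D$ simply sets its $t$ to $\perp$. Hence even if the faulty nodes withhold or delay messages, each procedure call terminates within a bounded number of rounds.

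Next I would bound the number of loop iterations. For a fixed $F$ chosen in the OUTER loop, STEP 1 of the INNER loop performs a constant number of procedure calls (steps (a)--(d) in Case 1, or steps (e)--(i) in Case 2), and STEP 2 performs step (j), in which each $k\in F$ collects $f+1$ values from $N_k$; by the synchrony assumption this too completes in a bounded number of rounds. Thus each INNER loop iteration takes bounded time. The INNER loop ranges over partitions $A,B$ of $\sv-F$, of which there are fewer than $2^{\,|\sv-F|}\le 2^{n}$, so the INNER loop performs finitely many iterations for each $F$. The OUTER loop ranges over all $F\subset\sv$ with $0\le |F|\le f$, of which there are exactly $\sum_{k=0}^{f}\binom{n}{k}$, a finite number independent of the execution.

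Combining these observations, the total number of rounds executed by Algorithm BC is bounded by a finite quantity depending only on $n$ and $f$. When the algorithm terminates, each fault-free node $i$ has a value $v_i\in\{0,1\}$, which it outputs; hence every fault-free node eventually decides, establishing the termination property. I do not expect a genuine obstacle here: the only point requiring care is verifying that no procedure call can cause a fault-free node to wait indefinitely, which is handled by the synchrony assumption together with the $\perp$-default convention for missing messages; everything else is a routine finiteness count over the two nested loops.
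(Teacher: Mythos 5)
Your proof is correct and follows essentially the same route as the paper's: both arguments rest on the synchrony assumption together with the finiteness of the graph, which bounds the number of OUTER and INNER loop iterations. Your version simply spells out the details the paper leaves implicit (the $\perp$-default preventing any fault-free node from blocking, and the explicit counts of partitions and subsets $F$), which is a fair elaboration rather than a different approach.
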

\begin{proof}
Recall that we are assuming a synchronous system, and the graph $G(\sv,\se)$
is finite. Thus, Algorithm BC performs a finite number of OUTER loop iterations,
and a finite number of INNER loop iterations for each
choice of $F$ in the OUTER loop,
the number of iterations being a function of graph $G(\sv,\se)$.
Hence, the termination property is satisfied.
\end{proof}

\begin{lemma}
\label{l_agreement}
Algorithm BC satisfies the agreement property for Byzantine consensus.
\end{lemma}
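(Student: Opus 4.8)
The plan is to show that all fault-free nodes hold a common value when Algorithm BC terminates, and the argument rests on two observations. First, agreement, once reached, is never lost: if all fault-free nodes share a value $c$ at the start of some INNER loop iteration, then the set $Z$ defined in the proof of Lemma \ref{l_valid_1} equals $\{c\}$, so that lemma forces $v_j^\eeee=c$ for every fault-free $j$. Since the state $v$ is modified only inside INNER iterations (steps (d), (i), (j), all covered by Lemma \ref{l_valid_1}), this preservation holds across every INNER iteration in every OUTER iteration. Hence it suffices to show that agreement is reached no later than the end of the OUTER loop iteration in which $F=F^*$; every subsequent iteration merely preserves it. Throughout I use that when $F=F^*$ every node on any path excluding $F$ is fault-free, so all messages inside \Propagate and \Equality are faithful, and there are no fault-free nodes in $F$ to treat in step (j).

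The technical core is a claim about a single INNER loop iteration with $F=F^*$: if even one fault-free node changes its $v$ value (in step (d) or (i)), then every fault-free node ends that iteration holding one common value $c\in\{0,1\}$. I would prove this by tracing $t$. Because every path excluding $F=F^*$ consists of fault-free nodes, \Propagate$(S,\sv-F-S)$ can deliver a non-$\perp$ value to a node only when all $f+1$ of its sources in $S$ forward the same value, and, since $S$ is strongly connected in $G_{-F}$, \Equality$(S)$ leaves the nodes of $S$ holding a common value in $\{0,1\}$ exactly when they already agreed. Hence any $v$-update forces all of $S$ to carry one value $c\in\{0,1\}$ after \Equality$(S)$. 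Given this, the final \Propagate (step (c) or (h)) sets $t_j=c$ at every $j\in\sv-F-S$, every other updated node likewise obtains $c$, and the nodes excluded from the update already hold $v=c$; thus all of $\sv-F^*$ ends with value $c$. The same accounting yields the fact I need for existence: if $A$ is uniform with value $c$ at the start of the iteration, then $S$ is uniformly $c$ after \Equality$(S)$ — directly when $S\subseteq A$ in Case 1, and in Case 2 because step (f) (using the side condition $\propagate{A}{(S-A)}{\sv-F}$) makes the nodes of $S-A$ carry $c$ as well — so agreement on $c$ again follows.

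To conclude, I would fix the OUTER loop iteration with $F=F^*$. If all fault-free nodes already agree at its start, preservation finishes the proof. Otherwise let $V_0,V_1$ be the non-empty sets of fault-free nodes holding $0$ and $1$ at that start, and consider the INNER loop iteration that processes the partition $\{V_0,V_1\}$ of $\sv-F^*$ (a valid $S$ exists by Claim \ref{claim:inner_loop}, and Theorem \ref{t_nec_2} lets us relabel so that $\propagate{A}{B}{\sv-F^*}$ with $A$ equal to $V_0$ or $V_1$). Either some earlier INNER iteration of this OUTER iteration already changed a fault-free value, in which case that iteration achieved agreement by the core claim and preservation has maintained it; or no value changed beforehand, so $A$ is still uniform when $\{V_0,V_1\}$ is processed and agreement is reached there. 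In either case all fault-free nodes agree by the end of the $F=F^*$ iteration, and preservation carries this through the remaining OUTER iterations to termination, which is exactly the agreement property.

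The main obstacle is the core claim of the second paragraph — that a single non-$\perp$ value surfacing anywhere forces global uniformity of $S$, and hence of all of $\sv-F^*$. This depends essentially on $F=F^*$ (so that no path carries corrupted data and \Equality genuinely detects any disagreement across the strongly connected $S$) and on the Case-2 side condition $\propagate{A}{(S-A)}{\sv-F}$; the delicate bookkeeping is verifying that the nodes left unchanged in step (i), namely those in $A\cap S$, already hold the value being propagated, so that the updated and non-updated nodes end with the same value.
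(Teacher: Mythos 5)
Your proposal is correct and follows essentially the same approach as the paper's proof: you wait for the OUTER iteration with $F=F^*$, observe that fault-free values stay frozen until an iteration in which the set $S$ becomes uniform (your ``update $\Rightarrow$ global agreement'' dichotomy is a repackaging of the paper's deciding/non-deciding classification and its Phase~2 analysis), guarantee such an iteration via the partition of fault-free nodes by value (the paper's ``conformant partition''), and invoke Lemma~\ref{l_valid_1} to preserve agreement through all later INNER and OUTER iterations. The bookkeeping you flag as delicate --- that the non-updated nodes in $S$ (Case~1) or $A\cap S$ (Case~2) already hold the propagated value --- is exactly the point the paper's Phase~2 argument also verifies.
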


\begin{proofSketch}
The complete proof is in Appendix \ref{a_l_agreement}. Recall that $F^*$ denotes the actual set of faulty nodes in the network ($0 \leq |F^*|\leq f$). Since the OUTER loop considers all possible $F\subset \sv$ such that $|F|\leq f$, eventually, the OUTER loop will be performed with $F=F^*$.
 We will show that when OUTER loop is performed with $F=F^*$, {\em agreement} is achieved.
After agreement
is reached when $F=F^*$, Algorithm BC may perform the OUTER loop with other choices of set $F$. However, due to Lemma \ref{l_valid_1}, the {\em agreement} among fault-free nodes is still preserved.
(Also, due to Lemma \ref{l_valid_1}, before the OUTER loop with $F=F^*$ is performed, $v_i$ at each fault-free node remains valid.)

Now, consider the OUTER loop with $F=F^*$.
We will say that an INNER loop iteration with $F=F^*$ is ``deciding'' if one of the following
conditions is true: (i) in Case 1 of the INNER loop iteration, after step (b) is performed,
all the nodes in set $S$ have an identical value for variable $t$,
or (ii) in Case 2 of the INNER loop iteration, after step (g) is performed,
all the nodes in set $S$ have an identical value for variable $t$.
As elaborated in Appendix \ref{a_l_agreement},
when $F=F^*$, at least one of the INNER loop
iterations must be a {\em deciding} iteration. 
Let us partition the INNER loop iterations when $F=F^*$ into three phases:
\begin{itemize}
\item Phase 1: INNER loop iterations before the first deciding iteration with $F=F^*$.
\item Phase 2: The first deciding iteration with $F=F^*$.
\item Phase 3: Remaining INNER loop iterations with $F = F^*$.
\end{itemize}

From the pseudo-code for~\Propagate~and~\Equality, observe that when $F=F^*$, all paths
used in the INNER loop iterations {\bf exclude} $F=F^*$. That is, all these paths contain only
fault-free nodes, since $F^*$ is the actual set of faulty nodes.
In each INNER loop iteration in Phase 1, we can show that value $v_i$ for each fault-free node $i$ remains unchanged from previous INNER loop iteration. 
As elaborated in Appendix \ref{a_l_agreement},
this together with fact that the value $v_i \in \{0,1\}$ for each fault-free node $i$, ensures that a {\em deciding} INNER loop iteration is eventually performed when
$F=F^*$ (e.g., when set $A$ contains the fault-free nodes with $v$ value equal to $0$, and set $B$ contains the remaining fault-free nodes, or vice-versa).
 In Phase 2, Algorithm BC achieves agreement among fault-free nodes due to the fact that nodes in set $S$ reliably propagate an identical value to all the other nodes. Finally, in Phase 3,
 due to Lemma \ref{l_valid_1},
 agreement achieved in the previous phase is preserved.
 Therefore, at the end of the OUTER loop with $F=F^*$, agreement is achieved.
\end{proofSketch}

\begin{theorem}
\label{t_correct}
Algorithm BC satisfies the agreement, validity, and termination conditions.
\end{theorem}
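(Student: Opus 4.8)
The plan is to assemble Theorem \ref{t_correct} directly from the three property-specific lemmas already established for Algorithm BC. By definition, a Byzantine consensus algorithm is correct exactly when it satisfies agreement, validity, and termination (as stated in Section \ref{s_intro}). Lemma \ref{l_validity} supplies validity, Lemma \ref{l_termination} supplies termination, and Lemma \ref{l_agreement} supplies agreement; invoking these three in turn immediately yields the theorem. There is no additional combinatorial or topological work to do at this level of assembly.

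Before quoting those lemmas, however, I would first record that the algorithm is \emph{well-defined} on every graph $G(\sv,\se)$ satisfying the condition of Theorem \ref{t_nec_2}. This is precisely the content of Claim \ref{claim:inner_loop}: in both Case 1 and Case 2 of each INNER loop iteration a suitable strongly connected set $S$ with the required propagation properties exists, and each node $k\in F$ has enough incoming neighbors in $\sv-F$ so that step (j) can be carried out. Since the standing assumption throughout Section \ref{s_sufficiency} is that $G$ meets the Theorem \ref{t_nec_2} condition, every step of every INNER loop iteration is executable, so the three lemmas indeed apply to a genuine run of the algorithm.

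The real substance lies in the supporting lemmas rather than in this final step, so I do not expect an obstacle at the level of Theorem \ref{t_correct} itself. If I had to point to the delicate ingredient, it is Lemma \ref{l_agreement}: one must argue that when the OUTER loop reaches $F=F^*$, at least one INNER loop iteration is \emph{deciding}, and that once agreement is reached it is preserved through the remaining iterations. Preservation follows from the invariant of Lemma \ref{l_valid_1} (each fault-free node's end-of-iteration value equals some fault-free node's start-of-iteration value, so a unanimous value cannot be disturbed), while the existence of a deciding iteration relies on the Phase~1/2/3 partition and on exhibiting, e.g., the partition that separates the $0$-valued from the $1$-valued fault-free nodes. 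Granting those appendix arguments, Theorem \ref{t_correct} is an immediate corollary of Lemmas \ref{l_validity}, \ref{l_termination}, and \ref{l_agreement}.
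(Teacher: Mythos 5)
Your proposal matches the paper's proof exactly: Theorem \ref{t_correct} is obtained by simply citing Lemmas \ref{l_validity}, \ref{l_termination}, and \ref{l_agreement}. Your additional remark that Claim \ref{claim:inner_loop} guarantees the algorithm is well-defined is a reasonable (and correct) piece of bookkeeping, but it does not change the argument, which is the same as the paper's.
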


\begin{proof}
The theorem follows from Lemmas \ref{l_validity}, \ref{l_termination} and \ref{l_agreement}.
\end{proof}

\subsection{Application to Multi-Valued Consensus}

Algorithm BC can be used to solve a particular version of {\em multi-valued} consensus with the following properties: 

\begin{itemize}
\item \textbf{Agreement}: the output (i.e., decision) at all the fault-free nodes must be identical.

\item \textbf{Validity}: If all fault-free nodes have the same input, then the output of every fault-free node equals its input.

\item \textbf{Termination}: every fault-free node eventually decides on an output.

\end{itemize}
Under these conditions, if all the fault-free nodes do not have the same multi-valued input,
then it is possible for
the fault-free nodes to agree on a value that is not an input at any fault-free node.
This multi-valued consensus problem for $L$-bit input values can be solved by executing
$L$ instance of Algorithm BC, one instance for each bit of the input, on graphs
that satisfy the condition stated in Theorem \ref{t_nec_2}.
The 1-bit output of each of the $L$ instances put together form the $L$-bit output
of the multi-valued consensus problem. Correctness of this procedure follows from
Theorem \ref{t_correct}.

If the above validity condition for multi-valued consensus is made stronger, to require that the
output value must be the multi-valued input of a fault-free node, then the condition
in Theorem \ref{t_nec_2} is not sufficient for inputs that can take 3 or more distinct values.


\section{Conclusion}
\label{s_conclusion}

For nodes with binary inputs, we present a {\em tight} necessary and sufficient condition for achieving Byzantine consensus in synchronous {\em directed} graphs. The condition is shown to be necessary using traditional state-machine approach \cite{impossible_proof_lynch,dolev_82_BG, welch_book}. Then, we provide a constructive proof of sufficiency by presenting a new Byzantine consensus algorithm for graphs satisfying the necessary condition. The algorithm can also be used to solve multi-valued consensus.

Two open problems are of further interest: 

\begin{itemize}
\item  Algorithm BC presented in Section \ref{s_sufficiency} has exponential round complexity. The optimal round complexity for directed graphs is presently unknown.

\item It is not known whether one can efficiently determine that a given graph satisfies the condition in Theorem \ref{t_nec_2} or not.

\end{itemize}


\appendix

\newpage

\centerline{\Large\bf Appendices}



\section{Necessity Proof of Theorem \ref{t_nec_2}}
\label{a_1and2}

This appendix presents the proof of necessity of the condition stated in Theorem \ref{t_nec_2}.
We first present an alternative form of the necessary condition, named {\em Condition 1} below. We use the
well-known state-machine approach \cite{impossible_proof_lynch,dolev_82_BG, welch_book} to show the necessity
of {\em Condition 1}. Then, we prove that the condition stated in Theorem \ref{t_nec_2} is
equivalent to {\em Condition 1}.

\subsection{Necessary Condition 1}

Necessary condition 1 is stated in Theorem \ref{t_nec_1} below.
Its proof uses the familiar proof technique based on state machine approach.
Although the proof of Theorem \ref{t_nec_1} is straightforward, we include it here
for completeness.
Readers may omit the proof of Theorem \ref{t_nec_1} in this section without lack of continuity.

We first define relations $\Zightarrow$ and $\not\Zightarrow$ that are used subsequently.
These relations are defined for disjoint sets. 
Two sets are disjoint if their intersection is empty.
For convenience of presentation, we adopt the convention that
sets $A$ and $B$ are disjoint if either one of them is empty.
More than two sets are disjoint if they are pairwise disjoint.

\begin{definition}
\label{def:absorb}
For disjoint sets of nodes $A$ and $B$, where $B$ is non-empty:
\begin{itemize}
\item $A \Zightarrow B$ iff set $A$ contains at least
 $f+1$ distinct incoming neighbors of $B$.

 That is, $|~\{ i~|~(i,j)\in \se,~i\in A,~j\in B\}~| > f$.
\item $A\not\Zightarrow B$ iff $A\Zightarrow B$ is {\em not} true.
\end{itemize}
\end{definition}

The theorem below states {\em Condition 1}, and proves its necessity. 

\begin{theorem}
\label{t_nec_1}
Suppose that a correct Byzantine consensus algorithm exists for $G(\scriptv,\scripte)$.
For any partition
\footnote{Sets $X_1,X_2,X_3,...,X_p$ are said to form a partition of set $X$ provided that (i) $\cup_{1\leq i\leq p} X_i = X$, and (ii) $X_i\cap X_j=\emptyset$ if $i\neq j$.}
 $L, C, R, F$ of $\scriptv$, such that
both $L$ and $R$ are non-empty, and $|F|\leq f$, either $L\cup C\Zightarrow R$, 
or $R\cup C\Zightarrow L$.
\end{theorem}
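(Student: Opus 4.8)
The plan is to prove the contrapositive: assuming the stated condition fails, I construct executions of any hypothetical correct algorithm that force a violation of agreement, thereby establishing necessity. So suppose there is a partition $L,C,R,F$ of $\sv$ with $L,R$ non-empty, $|F|\le f$, and both $L\cup C\not\Zightarrow R$ and $R\cup C\not\Zightarrow L$. By Definition \ref{def:absorb}, let $P\subseteq L\cup C$ be the set of incoming neighbors of $R$ lying in $L\cup C$, and let $Q\subseteq R\cup C$ be the set of incoming neighbors of $L$ lying in $R\cup C$; then $|P|\le f$ and $|Q|\le f$. The key structural consequence, which drives everything, is that every edge of $\se$ entering $R$ from outside $R$ originates in $P\cup F$, and every edge entering $L$ from outside $L$ originates in $Q\cup F$. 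Thus $P\cup F$ (of size $\le 2f$) is the entire incoming interface of $R$ and splits into two pieces $P$ and $F$, each of size $\le f$; symmetrically, $L$'s incoming interface $Q\cup F$ splits into $Q$ and $F$.

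Next I would set up three executions. In $\Pi_0$ the faulty set is $Q$ and every node outside $Q$ has input $0$; by validity all fault-free nodes, in particular all of $L$ (since $L\cap Q=\emptyset$ as $Q\subseteq R\cup C$), decide $0$. In $\Pi_1$ the faulty set is $P$ and every node outside $P$ has input $1$; by validity all fault-free nodes, in particular all of $R$ (since $R\cap P=\emptyset$), decide $1$. The hybrid $\Pi_H$ has faulty set $F$, with inputs $0$ on $L$ and $1$ on $R$ (inputs on $C$ are immaterial). In $\Pi_H$ the Byzantine nodes of $F$ present two faces across the cut: on their out-links into $L$ they replay the messages an honest $F$ would send in $\Pi_0$, and on their out-links into $R$ they replay the messages an honest $F$ would send in $\Pi_1$; this is allowed because the failure model explicitly permits a faulty node to send mismatching messages to different neighbors. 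Correspondingly, the faulty $Q$ in $\Pi_0$ replays what an honest $Q$ sends into $L$ in $\Pi_H$, and the faulty $P$ in $\Pi_1$ replays what an honest $P$ sends into $R$ in $\Pi_H$. I then claim that $\Pi_H$ is indistinguishable from $\Pi_0$ to every node of $L$, and from $\Pi_1$ to every node of $R$. Granting this, the fault-free nodes of $L$ decide $0$ and the fault-free nodes of $R$ decide $1$ in the single execution $\Pi_H$ (whose only faulty set $F$ is disjoint from both $L$ and $R$), contradicting agreement.

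The indistinguishability claim is the heart of the argument and the main obstacle. I would prove it by induction on synchronous rounds, showing that each node of $L$ is in the same state at every round in $\Pi_H$ and $\Pi_0$ (and symmetrically for $R$ and $\Pi_1$). The inductive step uses the structural fact above: a node of $L$ receives round-$t$ messages only from inside $L$ (identical by the induction hypothesis) and across its external interface $Q\cup F$, and on exactly those interface links the two executions are forced to deliver identical messages by the mutual replay definitions. The delicate point is that these replay definitions are mutually recursive between $\Pi_H$ and $\Pi_0$ (the faulty $Q$ in $\Pi_0$ copies the honest $Q$ in $\Pi_H$, whose behavior in turn depends on messages it receives from the faulty $F$ in $\Pi_H$, and so on); this apparent circularity is resolved by defining all three executions simultaneously round by round, since a node's round-$t$ message depends only on states through round $t-1$. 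Directedness is exactly what makes the construction possible: it guarantees that $P$ and $Q$ control all information flow across the cut in the relevant direction, so each direction can be masked by a single faulty set of size $\le f$, and the two interface pieces of each cut are never required to be faulty in the same execution. The boundary case $F=\emptyset$ is subsumed, where $\Pi_H$ is a failure-free execution in which $L$ and $R$ nevertheless disagree.
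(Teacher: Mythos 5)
Your proof is correct, and at the level of scenarios it coincides exactly with the paper's: your $\Pi_1$, $\Pi_0$, $\Pi_H$ are the paper's executions $\alpha_1$ (faulty set $N_R$, which is your $P$, all inputs $M$), $\alpha_2$ (faulty set $N_L$, which is your $Q$, all inputs $m$), and $\alpha_3$ (faulty set $F$, inputs split across the cut), with the same two indistinguishability pairs, the same appeal to validity in the first two executions, and the same appeal to agreement in the third. Where you genuinely depart is in the device used to make the three executions simultaneously well-defined. The paper avoids the mutual recursion you identify by constructing an auxiliary network $\sn$ containing multiple copies of each node of $G$ (three copies of $C$, two each of $L$, $R$, $F$), joined by solid, ``broadcast,'' and ``hanging'' edges; a single honest execution $\beta$ of the algorithm on $\sn$ is then projected onto $G$ in three ways, with the faulty nodes of each $\alpha_i$ played by designated copies, so consistency is inherited from $\beta$ and no induction is needed. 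Your alternative --- defining the faulty behavior by cross-execution replay and resolving the circularity by a simultaneous round-by-round induction --- is sound in a synchronous system and arguably more elementary, since it requires no auxiliary graph and no edge taxonomy; that is what it buys. Its cost is one proof obligation you have not fully discharged: for the simultaneous induction to be well-founded, \emph{every} message of every faulty node in every execution must be specified, yet you pin down only the cross-cut messages ($F$ into $L$ and into $R$; $Q$ into $L$; $P$ into $R$). This matters because honest $F$'s behavior in $\Pi_0$ --- which your faulty $F$ replays into $L$ in $\Pi_H$ --- depends on what faulty $Q$ sends to $F$ in $\Pi_0$, so those messages cannot be left floating. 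The patch is one line (let each faulty node behave honestly, with the ambient input, on all links you did not constrain), and with it your argument is complete.
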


We first describe the intuition behind the proof, followed by a formal proof. The proof is by contradiction.

Suppose that there exists a partition $L,C,R,F$ where $L,R$ are non-empty and $|F|\leq f$ such that
$C\cup R \not\Zightarrow L$, and $L\cup C \not\Zightarrow R$. Assume that the nodes in $F$ are faulty, and the nodes in sets $L, C, R$ are fault-free. Note that fault-free nodes are not aware of the identity of the faulty nodes.

Consider the case when all the nodes in $L$ have input $m$, and all the nodes in $R \cup C$ have input $M$, where $m \neq M$. Suppose that the nodes in $F$ (if non-empty) behave to nodes in $L$ as if nodes in $R\cup C\cup F$ have input $m$, while behaving to nodes in $R$ as if nodes in $L\cup C\cup F$ have input $M$. This behavior by nodes in $F$ is possible, since the nodes in $F$ are all assumed to be faulty here. 

Consider nodes in $L$.
Let $N_L$ denote the set of incoming neighbors of $L$ in $R\cup C$.
Since $R\cup C\not\rightarrow L$, $|N_L|\leq f$.
Therefore, nodes in $L$ cannot distinguish between the following two scenarios:
 (i) all the nodes in $N_L$ (if non-empty) are faulty, rest of the nodes are fault-free,  and all the fault-free nodes have input $m$, and (ii) all the nodes in $F$ (if non-empty) are faulty, rest of the nodes are fault-free, and fault-free nodes have input either $m$ or $M$. In the first scenario, for validity, the output at nodes in $L$ must be $m$. Therefore, in the second scenario as well, the output at the nodes in $L$ must be $m$. We can similarly show that the output at the nodes in $R$ must be $M$. Thus, if the condition in Theorem \ref{t_nec_1} is not satisfied, nodes in $L$ and $R$ can be forced to decide on distinct values, violating the agreement property. Now, we present the formal proof. Note that the formal proof relies on traditional state-machine approach \cite{impossible_proof_lynch, welch_book}. We include it here for completeness.

\paragraph{Proof of Theorem \ref{t_nec_1}:}
~\\

\begin{proof}
The proof is by contradiction.
Suppose that a correct Byzantine consensus algorithm, say ALGO, exists in $G(\sv, \se)$, and there exists a partition $F,L,C,R$ of $\sv$ such that $C \cup R \not\Zightarrow L$ and $L \cup C \not\Zightarrow R$. Thus, $L$ has at most $f$ incoming neighbors in $R\cup C$,
and $R$ has at most $f$ incoming neighbors in $L\cup C$.
Let us define:
\begin{eqnarray*}
N_L & = & \mbox{set of incoming neighbors of $L$ in $R\cup C$} \\
N_R & = & \mbox{set of incoming neighbors of $R$ in $L\cup C$}
\end{eqnarray*}
Then,
\begin{eqnarray}
|N_L|& \leq &f \label{e_nl} \\
|N_R|& \leq &f \label{e_nr}
\end{eqnarray}

The behavior of each node $i\in\sv$ when
using ALGO can be modeled by a state machine
that characterizes the behavior of each node $i\in\sv$. \\

We construct a new network called $\sn$, as illustrated in Figure \ref{sm_1-1}.
In $\sn$, there are three copies of each node in $C$,
and two copies of each node in $L\cup R\cup F$.
In particular, C0 represents one copy of the nodes in $C$,
C1 represents the second copy of the nodes in $C$,
and 
C2 represents the third copy of the nodes in $C$.
Similarly, R0 and R2 represent the two copies of the nodes in $R$,
L0 and L1 represent the two copies of the nodes in $L$, and
F1 and F2 represent the two copies of the nodes in $F$.
Even though the figure shows just one vertex for C1, it represents
all the nodes in $C$ (each node in $C$ has a counterpart in the nodes
represented by C1). Same correspondence holds for other vertices
in Figure \ref{sm_1-1}.


\begin{figure}[p]
\centering
\includegraphics[width=220mm,bb=0 0 960 720]{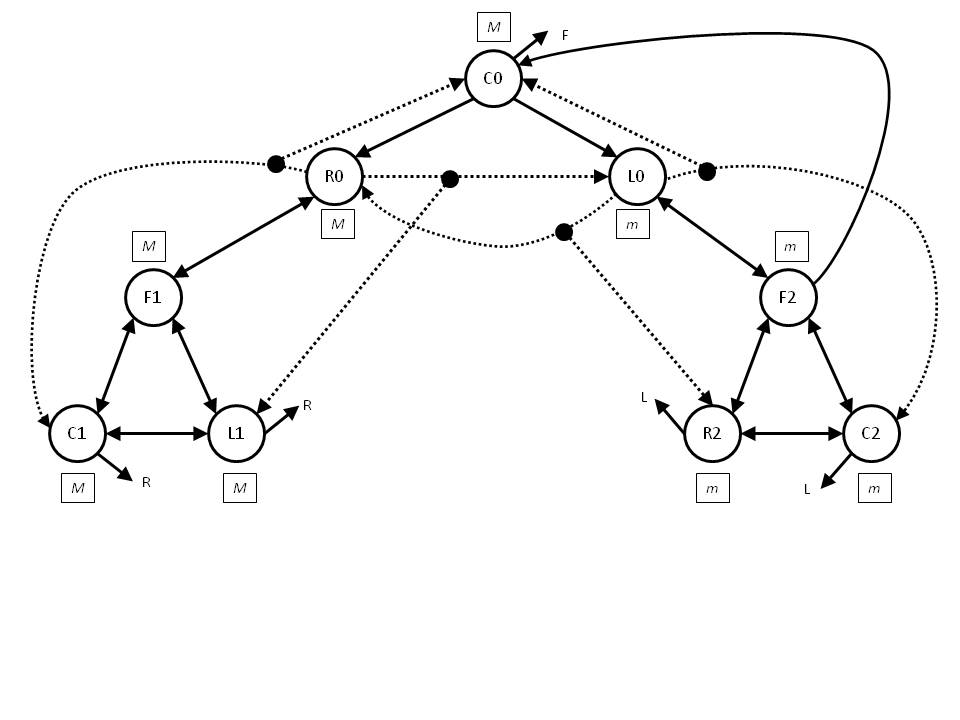}
\caption{Network $\sn$}
\label{sm_1-1}
\end{figure}

~

The communication links in $\sn$ are derived using the communication
graph $G(\sv,\se)$. The figure shows solid edges and dotted edges,
and also edges that do not terminate on one end.
We describe all three types of edges below.
\begin{itemize}
\item {\em Solid edges:}
If a node $i$ has a link to node $j$
in $G(\sv,\se)$, i.e., $(i, j) \in \se$,
then each copy of node $j$ in $\sn$ will have a link from
{\bf one of the copies} of node $i$ in $\sn$.
Exactly which copy of node $i$ has link to a copy of
node $j$ is represented with the edges shown in 
Figure \ref{sm_1-1}.
For instance,
the directed edge from vertex R0 to vertex F1 in Figure \ref{sm_1-1} indicates
that, {\bf if} for $r\in R$ and $k\in F$, link $(r,k)\in\se$,
then there is a link in $\sn$ from the copy of $r$ in R0 to
the copy of $k$ in F1.
Similarly,
the directed edge from vertex F2 to vertex L0 in Figure \ref{sm_1-1} indicates
that, {\bf if} for $k\in F$ and $l\in L$, link $(k,l)\in\se$,
then there is a link from the copy of $k$ in F2 to the copy of $l$ in L0.
Other solid edges in Figure \ref{sm_1-1} represent other communication
links in $\sn$ similarly.

\item {\em Dotted edges:}
Dotted edges are defined similar to the solid edges, with the difference being that
the dotted edges emulate a broadcast operation. 
Specifically, in certain cases, if link $(i,j)\in\se$, then one copy of node $i$
in $\sn$ may have links to {\bf two} copies of node $j$ in $\sn$, with both copies
of node $j$ receiving identical messages from the same copy of node $i$. 
This should be viewed as a ``broadcast'' operation that is being
emulated {\em unbeknownst} to the
nodes in $\sn$.
There are four such ``broadcast edges'' in the figure, shown as
dotted edges.
The broadcast edge from L0 to R0 and R2 indicates that
{\bf if} for $l\in L$ and $r\in R$, link $(l,r)\in\se$,
then messages from the copy of node $l$ in L0 are broadcast to the
copies of node $r$ in R0 and R2 both.
Similarly, the broadcast edge from R0 to C0 and C1 indicates that
{\bf if} for $r\in R$ and $c\in C$, link $(r,c)\in\se$,
then messages from the copy of node $r$ in R0 are broadcast to the
copies of node $c$ in C0 and C1 both.
There is also a broadcast edge from L0 to C0 and C2,
and another broadcast edge from R0 to L0 and L1. 

\item {\em ``Hanging'' edges:}
Five of the edges in Figure \ref{sm_1-1} do not terminate at any vertex.
One such edge originates at each of the vertices C1, L1, R2, C2, and C0,
and each such edge is labeled as R, L or F, as explained next.
A {\em hanging} edge signifies that the corresponding transmissions are discarded
silently without the knowledge of the sender.
In particular, the {\em hanging} edge originating at L1 with label R indicates the following:
if for $l\in L$ and $r\in R$, $(r,l)\in\se$, then transmissions by
the copy of node $l$ in L1 to node $r$ are silently discarded {\em without the knowledge} of
the copy of node $l$ in L1.
Similarly, the {\em hanging} edge originating at C0 with label F indicates the following:
if for $c\in C$ and $k\in F$, $(c,k)\in\se$, then transmissions by
the copy of node $c$ in C0 to node $k$ are silently discarded {\em without the knowledge} of
the copy of node $c$ in C0.

It is possible to avoid using such ``hanging'' edges by introducing additional vertices
in $\sn$. We choose the above approach to make the representation more compact.

\end{itemize}
Whenever $(i,j)\in\se$, in network $\sn$, each copy of node $j$ has
an incoming edge from {\bf one copy of} node $i$, as discussed above.
The broadcast and hanging edges defined above are consistent with our
{\em communication model} in Section \ref{s_intro}. As noted there,
each node, when sending a message, simply puts the message in the send
buffer. Thus, it is possible for us to emulate hanging edges by
discarding messages from the buffer, or broadcast edges by replicating
the messages into two send buffers. (Nodes do not read messages in
send buffers.) 

Now, let us assign input of $m$ or $M$, where $m\neq M$,
to each of the nodes in $\sn$. The inputs are shown next to the
vertices in small rectangles Figure \ref{sm_1-1}.
For instance, $M$ next to vertex C1 means that each node
represented by C1 has input $M$ (recall that C1 represents
one copy of each node in $C$). 
Similarly, $m$ next to vertex L0 means that each node
represented by L0 has input $m$.


Let $\beta$ denote a particular execution of ALGO in $\sn$ given the input specified above. Now, we identify three executions of ALGO in $G(\sv, \se)$ with a different set of nodes of size $\leq f$ behaving faulty. The behavior of the nodes is modeled by the corresponding nodes in $\sn$.

\begin{itemize}
\item {\bf Execution $\alpha_1$:}

Consider an execution $\alpha_1$ of ALGO in $G(\sv, \se)$, where the incoming neighbors of nodes in $R$ that are in $L$ or $C$, i.e., nodes in $N_R$, are faulty, with the rest of the nodes being fault-free. In addition, all the fault-free nodes
have inputs $M$. Now, we describe the behavior of each node.

\begin{itemize}
\item The behavior of fault-free nodes in $R$, $F$, $C-N_R$ and $L-N_R$ is modeled by the corresponding nodes in R0, F1, C1, and L1 in $\sn$. For example, nodes in $F$ send to their outgoing neighbors in $L$ the messages sent in $\beta$ by corresponding nodes in F1 to their outgoing neighbors in L1.

\item The behavior of the faulty nodes (i.e., nodes in $N_R$) is modeled by the behavior of the senders for the incoming links at the nodes in R0. In other words, faulty nodes are sending to their outgoing neighbors in $R$ the messages sent in $\beta$ by corresponding nodes in C0 or L0 to their outgoing neighbors in R0.
\end{itemize}
Recall from (\ref{e_nr}) that $|N_R|\leq f$. Since ALGO is correct in $G(\sv,\se)$, the nodes in $R$ must agree on $M$, because all the fault-free nodes in network I have input $M$.

\item {\bf Execution $\alpha_2$:}

Consider an execution $\alpha_2$ of ALGO in $G(\sv, \se)$, where the incoming neighbors of nodes in $L$ that are in $R$ or $C$, i.e., nodes in $N_L$, are faulty, with the rest of the nodes being fault-free. In addition, all the fault-free
nodes have inputs $m$. Now, we describe the behavior of each node.

\begin{itemize}
\item The behavior of fault-free nodes in $R-N_L$, $F$, $C-N_L$, and $L$ is modeled by the corresponding nodes in R2, F2, C2, and L0 in $\sn$. For example, nodes in $F$ send to their outgoing neighbors in $R$ the messages sent in $\beta$ by corresponding nodes in F2 to their outgoing neighbors in R2.

\item The behavior of the faulty nodes (i.e., nodes in $N_L$) is modeled by the behavior of the senders for the incoming links at the nodes in L0. In other words, faulty nodes are sending to their outgoing neighbors in $L$ the messages sent in $\beta$ by corresponding nodes in C0 or R0 to their outgoing neighbors in L0.
\end{itemize}
Recall from (\ref{e_nl}) that $|N_L|\leq f$. Since ALGO is correct in $G(\sv,\se)$, the nodes in $L$ must agree on $m$, because all the fault-free nodes in network II have input $m$.

\item {\bf Execution $\alpha_3$:}

Consider an execution $\alpha_3$ of ALGO in $G(\sv, \se)$, where the nodes in F are faulty, with the rest of the nodes being fault-free. In addition, nodes in $R \cup C$ have inputs $M$, and the nodes in $L$ have inputs $m$. Now, we describe the behavior of each node.

\begin{itemize}
\item The behavior of fault-free nodes in $R$, $C$ and $L$ is modeled by the corresponding nodes in R0, C0, and L0 in $\sn$. For example, nodes in $R$ are sending to their outgoing neighbors in $L$ the messages sent in $\beta$ by corresponding nodes in R0 to their outgoing neighbors in L0.

\item The behavior of the faulty nodes (i.e., nodes in $F$) is modeled by the nodes in F1 and F2. In particular, faulty nodes in $F$ send to their outgoing neighbors in $R$ the messages sent in $\beta$ by corresponding nodes in F1 to their outgoing neighbors in R0. Similarly the faulty nodes in $F$ send to their outgoing neighbors in $L$ the messages sent in $\beta$ by corresponding nodes in F2 to their outgoing neighbors in L0.  
\end{itemize}

Then we make the following two observations regarding $\alpha_3$:

\begin{itemize}
\item Nodes in $R$ must decide $M$ in $\alpha_3$
because, by construction, nodes in $R$ cannot distinguish between $\alpha_1$ and $\alpha_3$. 
Recall that nodes in $R$ decide on $M$ in $\alpha_1$.

\item Nodes in $L$ must decide $m$.
because by construction, nodes in $L$ cannot distinguish between $\alpha_2$ and $\alpha_3$.
Recall that nodes in $L$ decide on $m$ in $\alpha_2$.

\end{itemize}
Thus, in $\alpha_3$, the fault-free nodes in $R$ and $L$ decide on different
values, even though $|F|\leq f$. This violates the agreement condition,
contradicting the assumption that ALGO
is correct in $G(\sv,\se)$.

\end{itemize}

\comment{+++++++++++ old podc submission++++++++++++++

Consider three sub-networks of $\sn$. In each case, we will identify a set of
$\geq n-f$ nodes in $\sn$ as being fault-free. The behavior of the faulty nodes is modeled
by the rest of $\sn$.

\begin{itemize}

\item {\bf Sub-network I} consists of nodes in R0, C1, L1 and F1.
Let the incoming neighbors of nodes in R0 that are {\underline not} in R0 or F1 be faulty,
with the rest of the nodes being fault-free.
The behavior of the faulty nodes (i.e., incoming neighbors of R0 that are not in R0 or F1)
is modeled by the behavior of the senders for the incoming links at the nodes in R0.
Recall from (\ref{e_nr}) that $|N_R|\leq f$.
Since ALGO is correct in $G(\sv,\se)$, it must be correct
in sub-network I. Therefore, the nodes in R0 must agree on $M$,
because all the fault-free nodes in sub-network I have input $M$.

\item {\bf Sub-network II} consists of nodes in L0, C2, R2 and F2.
Let the incoming neighbors of nodes in L0 that are {\underline not} in L0 or F2 be faulty,
with the rest of the nodes being fault-free.
The behavior of the faulty nodes (i.e., incoming neighbors of L0 that are not in L0 or F2)
is modeled by the behavior of the senders for the incoming links at the nodes in L0.
Recall from (\ref{e_nl}) that $|N_L|\leq f$.
Since ALGO is correct in $G(\sv,\se)$, it must be correct
in sub-network II. Therefore, the nodes in L0 must agree on $m$,
because all the fault-free nodes in sub-network II have input $m$.

\item {\bf Sub-network III} consists of nodes in C0, L0, R0 and F1. In this case, the fault-free nodes are the nodes in C0, L0, and R0, with the nodes in F1 being faulty. The behavior of the faulty nodes (i.e., nodes in F1) is modeled by the nodes in F1 and F2. In particular, faulty nodes are sending to their outgoing neighbors in R0 the messages sent by F1 to R0 in $\sn$, and to their outgoing neighbors in L0 the messages sent by F2 to L0 in $\sn$.  
Note that $|C0\cup L0\cup R0|\geq n-f$.
Therefore, since ALGO is correct in $G(\sv,\se)$, it must be correct
in sub-network III. Therefore, the nodes in L0 and R0 must agree on
an identical value.
However, this requirement contradicts with sub-networks I and II,
where nodes in R0 agree on $M$, and nodes in L0 agree on $m$, respectively.
\end{itemize}
The contradiction identified above proves that the condition in Theorem \ref{t_nec_1} is necessary.

++++++++++++++}
\end{proof}

\subsection{Equivalence of the Conditions Stated in Theorem \ref{t_nec_2} and Condition 1}

In this section, we first prove that Condition 1 (the condition in Theorem \ref{t_nec_1}) implies the condition in Theorem \ref{t_nec_2},
and then prove that the condition in Theorem \ref{t_nec_2} implies Condition 1 (the condition in Theorem \ref{t_nec_1}). Thus, the two conditions are proved to be equivalent.
We first prove the two lemmas below. The proofs use the following version of Menger's theorem \cite{Graph_theory_west}.

\begin{theorem}[Menger's Theorem]
Given a graph $G(\scriptv,\scripte)$ and two nodes $x, y \in \sv$, then a set $S \subseteq \sv - \{x,y\}$ is an $(x,y)$-cut if there is no $(x,y)$-path excluding $S$, i.e., every path from $x$ to $y$ must contain some nodes in $S$.
\end{theorem}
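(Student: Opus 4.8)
The statement to establish is the full content of Menger's theorem: for distinct $x,y\in\sv$, the minimum size of an $(x,y)$-cut equals the maximum number of pairwise disjoint $(x,y)$-paths. (The sentence shown fixes the meaning of an $(x,y)$-cut; Menger's theorem is the equality relating the \emph{minimum} such cut to disjoint paths, and that is what the subsequent lemmas invoke.) One direction is immediate weak duality: if $G$ contains $d$ pairwise disjoint $(x,y)$-paths, then any $(x,y)$-cut must contain at least one internal vertex of each of these paths, so it has size at least $d$; hence the maximum number of disjoint paths never exceeds the minimum cut size. All the work is in the reverse inequality.

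The plan is to reduce the vertex-connectivity statement to the edge version through the standard vertex-splitting construction, and then apply the max-flow--min-cut theorem with integral capacities. First I would build an auxiliary directed network $G'$: replace each internal vertex $w\in\sv-\{x,y\}$ by two vertices $w_{\mathrm{in}},w_{\mathrm{out}}$ joined by an internal arc $(w_{\mathrm{in}},w_{\mathrm{out}})$ of capacity $1$, leaving $x$ and $y$ unsplit as source and sink; for each original arc $(i,j)\in\se$ install an arc from $i_{\mathrm{out}}$ to $j_{\mathrm{in}}$ of capacity $\infty$. Routing one unit of flow through the internal arc of $w$ corresponds to passing a path through $w$, and the unit capacity of that arc forces any decomposition into paths to be internally disjoint.

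Next I would verify the two correspondences carefully. An integral $x$-$y$ flow of value $d$ in $G'$ decomposes into $d$ edge-disjoint directed paths from $x$ to $y$ (standard flow decomposition); because every internal arc has capacity $1$, no internal vertex is used twice, so these pull back to $d$ pairwise disjoint $(x,y)$-paths in $G$. Conversely, any finite-capacity $x$-$y$ cut in $G'$ can contain only unit internal arcs, since the $\infty$-arcs can never belong to a finite cut; the set of vertices $w$ whose internal arc lies in such a cut is then an $(x,y)$-cut of $G$ of the same size, and minimality transfers in both directions. Max-flow--min-cut in $G'$ now gives equality between the integral maximum flow and the minimum finite cut, which under these correspondences is precisely Menger's equality in $G$.

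The main obstacle I anticipate is not the reduction bookkeeping but supplying max-flow--min-cut itself if a self-contained argument is wanted: one must show via augmenting paths (Ford--Fulkerson) that a maximum flow exists, that it is attained integrally when all capacities are integers or $\infty$, and that the reachability set of the source in the residual network yields a cut whose capacity equals the flow value. If instead max-flow--min-cut is taken as known (as in the cited text), the only delicate point is confirming that a minimum cut of $G'$ never selects an $\infty$-arc, so that every finite minimum cut genuinely corresponds to a vertex cut of $G$. As a flow-free alternative I would fall back on the classical induction on $|\se|$, deleting a suitably chosen arc and case-splitting on whether some minimum cut is ``internal,'' which recovers the same equality without invoking network flows.
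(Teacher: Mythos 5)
The paper never actually proves this statement: Menger's theorem is invoked as a classical black box, cited to West's textbook, and the sentence printed as the ``theorem'' is really only the definition of an $(x,y)$-cut. So there is no in-paper proof to compare against; the right question is whether your argument supplies the fact the paper actually uses downstream (in Lemmas \ref{lemma:prop}, \ref{lemma:prop2}, \ref{lemma:nec2-to-1}, and \ref{lemma:AtoBC}), namely: if there are at most $f$ pairwise disjoint $(x,y)$-paths and no edge $(x,y)$, then some set of at most $f$ vertices, disjoint from $\{x,y\}$, meets every $(x,y)$-path. You diagnosed this correctly, and your proof is a sound, standard route: weak duality in one direction, and for the hard direction the vertex-splitting reduction (unit-capacity internal arcs $w_{\mathrm{in}} \to w_{\mathrm{out}}$, infinite-capacity arcs for edges of $G$) followed by integral max-flow--min-cut, with the key verification that a finite cut in the auxiliary network consists only of internal arcs and hence pulls back to a vertex cut of $G$. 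Two remarks. First, your construction is stated for directed graphs, which is exactly the generality the paper needs (all its applications are to directed paths, in augmented graphs $G_{+v}$, with an excluded set $F$ handled by deleting $F$ first), so no gap there. Second, the degenerate case of a direct edge $(x,y)$ deserves an explicit sentence: in that case no vertex cut exists at all and the minimum cut of $G'$ is infinite, which is precisely why the paper checks ``there is no direct edge from $v$ to $b$'' before each invocation; your formulation is consistent with this but should state it. What your approach buys is self-containedness modulo max-flow--min-cut (or the inductive fallback you mention); what the paper's citation buys is brevity, which is a reasonable choice for a textbook result.
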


\begin{lemma}
\label{lemma:prop}
Assume that Condition 1 (the condition in Theorem \ref{t_nec_1}) holds for $G(\sv,\se)$.
For any partition
$A, B, F$ of $\scriptv$, where $A$
is non-empty, and $|F| \leq f$, if $B \not\Zightarrow A$,
then $\propagate{A}{B}{\sv-F}$.
\end{lemma}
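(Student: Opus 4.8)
The plan is to argue by contradiction. Suppose Condition~1 (the condition in Theorem~\ref{t_nec_1}) holds for $G(\sv,\se)$ but $\notpropagate{A}{B}{\sv-F}$. By Definition~\ref{def:propagate} this forces $B\neq\emptyset$ together with the existence of a node $b\in B$ admitting at most $f$ disjoint $(A,b)$-paths excluding $F$. First I would pass to the graph $G_{-F}$ and invoke Menger's theorem to convert this path bound into a vertex cut: there is a set $X\subseteq(\sv-F)-\{b\}$ with $|X|\le f$ that meets every $(A,b)$-path excluding $F$. Since Menger as stated is phrased for a single source--sink pair, the one genuinely fiddly point is handling the \emph{set} source $A$; I would do this by attaching a virtual source with an edge into every node of $A$, so that internally disjoint paths from it to $b$ correspond exactly to disjoint $(A,b)$-paths with distinct sources, and the resulting minimum cut $X$ avoids $b$ but may meet $A$.

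Next I would expose the structure of this directed cut. Let $P$ be the set of nodes of $(\sv-F)-X$ reachable from $A$ along paths excluding $F\cup X$, and let $Q=((\sv-F)-X)-P$. The three properties I will extract are: (i) $b\in Q$, so $Q\neq\emptyset$; (ii) $Q\cap A=\emptyset$, since $Q$ avoids $X$ and each node of $A-X$ reaches itself and hence lies in $P$, whence $Q\subseteq B$; and (iii) there is \emph{no} edge from $P$ to $Q$, because any such edge would extend reachability and place its head in $P$. Consequently every incoming neighbor of $Q$ that lies outside $F$ must sit in $X$, so $Q$ has at most $|X|\le f$ incoming neighbors outside $Q\cup F$.

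The crux is then to build a partition that violates Condition~1. I would take $R=A$, $L=Q$, $C=B-Q$, and keep the fault set $F$; this is a genuine partition of $\sv$ with both $R$ and $L$ non-empty and $|F|\le f$. For this partition neither alternative in Condition~1 can hold. On one hand $L\cup C=B$, so $L\cup C\Zightarrow R$ is precisely $B\Zightarrow A$, which is false by the hypothesis $B\not\Zightarrow A$. On the other hand $R\cup C=A\cup(B-Q)\subseteq P\cup X$, and the incoming neighbors of $L=Q$ lying in $P\cup X$ are all in $X$ (none in $P$, by property (iii)), giving at most $|X|\le f$ of them, so $R\cup C\not\Zightarrow L$. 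Both disjuncts failing contradicts Condition~1, which completes the proof.

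The only real obstacle is the correct use of a \emph{directed} min-cut, whose one-sidedness defeats naive choices: if one sets $R=P$ (the full reachable closure), the edges from $Q$ into $P$ may be plentiful, so one side of Condition~1 survives. The decisive idea is to take $R=A$ \emph{itself}, so that the hypothesis $B\not\Zightarrow A$ directly kills one alternative, while the cut's sink component $Q\subseteq B$ kills the other through the absence of $P\to Q$ edges. I would also double check that $X$ may intersect $A$ without harming the argument, since in any case $X\subseteq R\cup C$, and that the degenerate case $B=\emptyset$ is dispatched immediately by clause (i) of Definition~\ref{def:propagate}.
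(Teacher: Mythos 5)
Your proof is correct and follows essentially the same route as the paper's: a contradiction argument that attaches a virtual source to $A$, invokes Menger's theorem to extract a cut of size at most $f$, and then builds a reachability-based partition of $\sv$ violating Condition 1, where one disjunct fails precisely because it coincides with the hypothesis $B\not\Zightarrow A$ and the other fails because all in-neighbors of the sink-side set lie in the cut. The only difference is cosmetic: the paper's sink-side set consists of the nodes that can still reach $b$ avoiding the cut (backward reachability), whereas yours is the complement of the set forward-reachable from $A$; both sets contain $b$ and have in-neighbors confined to the cut, and your swap of the labels $L$ and $R$ is immaterial by the symmetry of Condition 1.
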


\begin{proof}
Suppose that $A,B,F$ is a partition of $\sv$, where
$A$ is non-empty, $|F|\leq f$, and $B\not\rightarrow A$.
If $B=\emptyset$, then by Definition \ref{def:propagate},
the lemma is trivially true.
In the rest of this proof, assume that $B\neq\emptyset$.

%

Add a new (virtual) node $v$ to graph $G$, such that, (i) $v$ has no
incoming edges, (ii) $v$ has an outgoing edge to each node
in $A$, and (iii) $v$ has no outgoing edges to any node that is not in $A$.
Let $G_{+v}$ denote the graph resulting after the addition of $v$ to $G(\sv,\se)$
as described above. 

We want to prove that $\propagate{A}{B}{\sv-F}$.
Equivalently,\footnote{\label{f_prop} {\tt Footnote:} {\em Justification}:
Suppose that $\propagate{A}{B}{\sv-F}$. By the definition of $\propagate{A}{B}{\sv-F}$,
for each $b\in B$,
there exist at least $f+1$ disjoint $(A,b)$-paths excluding $F$;  these
paths only share
node $b$. Since $v$ has outgoing links to all the nodes in $A$, this implies
that there exist $f+1$ disjoint $(v,b)$-paths excluding $F$ in $G_{+v}$;
these paths only share
nodes $v$ and $b$.
Now, let us prove the converse. Suppose that there exist
$f+1$ disjoint $(v,b)$-paths excluding $F$ in $G_{+v}$. Node $v$ has
outgoing links
only to the nodes in $A$, therefore, from the $(f+1)$ disjoint $(v,b)$-paths
excluding $F$, if we
delete node $v$ and its outgoing links, then the shortened paths
are disjoint ($A,b$)-paths excluding $F$.}
we want to prove that, in graph $G_{+v}$, for each $b\in B$, there exist
$f+1$ disjoint ($v,b$)-paths excluding $F$.
We will prove this claim by contradiction.

Suppose that $\notpropagate{A}{B}{\sv-F}$, and
therefore, there exists a node $b\in B$ such that there are at most $f$
disjoint $(v,b)$ paths excluding $F$ in $G_{+v}$. By construction, there is no direct edge from $v$ to $b$. Then Menger's theorem \cite{Graph_theory_west} implies that
there exists a set $F_1\subseteq (A\cup B)-\{b\}$ with
$|F_1|\leq f$, such that, in graph $G_{+v}$, there is no $(v,b)$-path
excluding $F\cup F_1$. In other words, all $(v,b)$-paths
excluding $F$ contain at least one node in $F_1$.

Let us define the following sets $L,R,C$.
Some of the sets defined in this proof are illustrated in Figure \ref{fig:lemma6}.
\begin{itemize}
\item $L=A$.
	
	$L$ is non-empty, because $A$ is non-empty.
\item $R~=~\{~ i ~ | ~ i\in B-F_1~ \mbox{and 
there exists an ($i,b$)-path excluding $F\cup F_1$}\} $.

Thus, $R\subseteq B-F_1\subseteq B$.\\
Note that $b\in R$. Thus, $R$ is non-empty.

\item $C=B-R$. 

Thus, $C\subseteq B$. Since $R\subseteq B$, it follows
	that $R\cup C=B$.
 
\end{itemize}

\begin{figure}[tbhp]
\centering
\includegraphics[scale=0.6, bb=-30 -30 928 316]{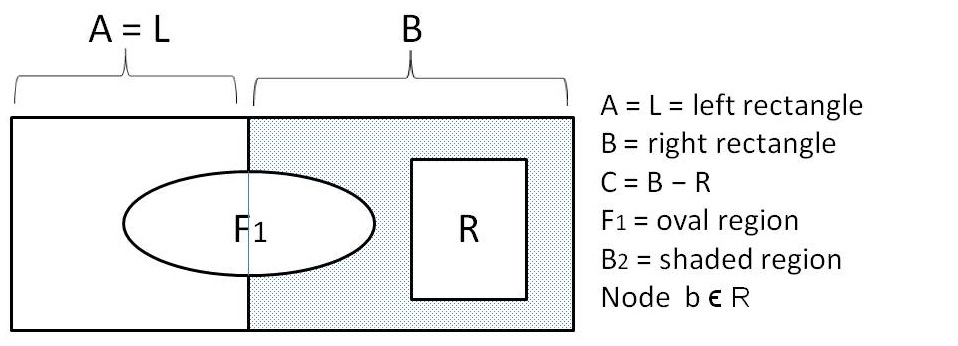}
\caption{Illustration for the proof of Lemma \ref{lemma:prop}}
\label{fig:lemma6}
\end{figure}

~

Observe that $L,R,C$ are disjoint sets, because $A$ and $B$ are disjoint, and
$L\cup R\cup C = A\cup B$.
Since set $F_1\subseteq A\cup B$, $L=A$, and $R\cap F_1=\emptyset$,
we have
$F_1\subseteq L\cup C$, and $F_1\cap B\subseteq C$.
Thus, set $C$ can be partitioned into disjoint sets $B_1$ and $B_2$
such that
\begin{itemize}
\item $B_1=C\cap F_1=B\cap F_1\subseteq C\subseteq B$, and
\item $B_2=C-B_1\subseteq C\subseteq B$. Note that $B_2\cap F_1=\emptyset$.
\end{itemize}
We make the following observations:
\begin{itemize}
\item For any $x\in A-F_1=L-F_1$ and $y\in R$, $(x,y)\not\in \se$.

	{\em Justification}:
	Recall that virtual node $v$ has a directed edge to $x$.
	If edge $(x,y)$ were to exist then there would be a $(v,b$)-path
	via nodes $x$ and $y$ excluding $F\cup F_1$
	(recall from definition of $R$ that $y$ has
	a path to $b$ excluding $F\cup F_1$).
	This contradicts the definition of set $F_1$.
\item For any $p\in B_2$, and $q\in R$, $(p,q)\not\in \se$.
	
	{\em Justification}:
	If edge $(p,q)$ were to exist, then there would be a $(p,b)$-path
	via node $q$ excluding $F\cup F_1$,
        since $q$ has a $(q,b)$-path excluding $F\cup F_1$.
	Then node $p$ should have been in $R$ by the definition of $R$.
	This is a contradiction to the assumption that $p\in B_2$, since
	$B_2\cap R\subseteq C\cap R=\emptyset$.
\end{itemize}
Thus, all the incoming neighbors of set $R$ are contained in $F \cup F_1$ (note that
$F_1=(A\cap F_1)\cup B_1$).
Recall that $F_1\subseteq L\cup C$.
Since $|F_1|\leq f$, it follows that
\begin{eqnarray}
L\cup C\not\Zightarrow R
\label{e1}
\end{eqnarray}

Recall that $B\not\Zightarrow A$.
By definitions of $L,R,C$ above, we have $A=L$ and $B=C\cup R$.
Thus,
\begin{eqnarray}
C\cup R\not\Zightarrow L
\label{e2}
\end{eqnarray}
(\ref{e1}) and (\ref{e2}) contradict the condition in Theorem \ref{t_nec_1}.
Thus, we have proved that
$\propagate{A}{B}{\sv-F}$.
\end{proof}

\begin{lemma}
\label{lemma:prop2}
Assume that Condition 1 (the condition in Theorem \ref{t_nec_1}) holds for $G(\sv,\se)$.
Consider a partition
$A, B, F$ of $\scriptv$, where $A, B$
are both non-empty, and $|F| \leq f$.
If $\notpropagate{B}{A}{\sv-F}$ then there exist $A'$ and $B'$ such
\begin{itemize}
\item $A'$ and $B'$ are both non-empty,
\item $A'$ and $B'$ form a partition of $A\cup B$,
\item $A'\subseteq A$ and $B\subseteq B'$, and
\item $ B'\not\Zightarrow A'$.
\end{itemize}
\end{lemma}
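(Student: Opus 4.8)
The plan is to reuse the virtual-node and Menger machinery from the proof of Lemma \ref{lemma:prop}, but now to \emph{read off} the partition $A',B'$ directly from the cut instead of deriving a contradiction. First I would add a virtual source node $v$ to $G$ having an outgoing edge to every node of $B$ and no other edges, obtaining $G_{+v}$. By the same virtual-node equivalence used in the proof of Lemma \ref{lemma:prop}, $\propagate{B}{A}{\sv-F}$ holds iff for every $a\in A$ there are $f+1$ disjoint $(v,a)$-paths excluding $F$ in $G_{+v}$; hence the hypothesis $\notpropagate{B}{A}{\sv-F}$ yields a node $a\in A$ admitting at most $f$ such disjoint paths. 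Since $a\in A$ and $A\cap B=\emptyset$, node $v$ has no direct edge to $a$, so Menger's theorem gives a set $F_2\subseteq (A\cup B)-\{a\}$ with $|F_2|\leq f$ such that $G_{+v}$ contains no $(v,a)$-path excluding $F\cup F_2$.

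Next I would define
\[
A' = \{\, i \in A - F_2 \mid \text{there is a } (i,a)\text{-path excluding } F\cup F_2 \,\}, \qquad B' = (A\cup B) - A'.
\]
This $A'$ is the exact analogue of the set $R$ in the proof of Lemma \ref{lemma:prop}, transplanted to the $A$-side. Three of the four required properties are then immediate: $a\in A'$, so $A'$ is non-empty; $A'\subseteq A-F_2\subseteq A$; and because $A'\subseteq A$ we get $B\subseteq B'$, so $B'$ is non-empty and $A',B'$ partition $A\cup B$.

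The property $B'\not\Zightarrow A'$ is the crux, and I would prove it by showing that every incoming neighbor of $A'$ lying in $A\cup B$ must in fact lie in $F_2$. Let $i$ be such an incoming neighbor, so $i\notin A'$ and $(i,j)\in\se$ for some $j\in A'$, and suppose toward a contradiction that $i\in (A\cup B)-F_2$. Because $j\in A'$ has a $(j,a)$-path excluding $F\cup F_2$ and $i\notin F\cup F_2$, prepending the edge $(i,j)$ produces an $(i,a)$-path excluding $F\cup F_2$. If $i\in A$, then $i\in A-F_2$ reaches $a$, forcing $i\in A'$ and contradicting $i\notin A'$. If $i\in B$, then $v$ has an edge to $i$, so $v\to i\to\cdots\to a$ is a $(v,a)$-path excluding $F\cup F_2$, contradicting the choice of $F_2$. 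Hence every incoming neighbor of $A'$ inside $A\cup B$ belongs to $F_2$; in particular the incoming neighbors of $A'$ that lie in $B'$ number at most $|F_2|\leq f$, which is precisely $B'\not\Zightarrow A'$.

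Finally I would remark that, as in the corresponding step of Lemma \ref{lemma:prop}, this argument does not actually invoke Condition 1; the assumption is carried along only to match the surrounding development. The step I expect to require the most care is the two-case analysis of the incoming neighbor $i$, specifically the observation that $i\in B$ reactivates the virtual edge out of $v$ and so contradicts the cut $F_2$.
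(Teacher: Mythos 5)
Your proof is correct and takes essentially the same route as the paper's: the same virtual source node, the same application of Menger's theorem to extract a cut set ($F_2$ here, $F_1$ in the paper), and the same reachability-to-$a$ set playing the role of $A'$. The only difference is organizational --- the paper defines three sets $L$, $R$, $C$ (nodes reaching $a$, nodes reachable from the virtual node, and the rest) and derives $L\subseteq A$ from disjointness of $L$ and $R$, whereas you restrict $A'$ to $A-F_2$ by definition and absorb the $B$-side case into the second branch of your incoming-neighbor analysis; the two bookkeeping schemes are interchangeable, and your closing observation that Condition 1 is never actually invoked also matches the paper's proof.
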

\begin{proof}
Suppose that 
$\notpropagate{B}{A}{\sv-F}$.

Add a new (virtual) node $w$ to graph $G$, such that, (i) $w$ has no
incoming edges, (ii) $w$ has an outgoing edge to each node
in $B$, and (iii) $w$ has no outgoing edges to any node that is not in $B$.
Let $G_{+w}$ denote the graph resulting after addition of $w$ to $G(\sv,\se)$
as described above.

Since $\notpropagate{B}{A}{\sv-F}$, for some node $a \in A$
there exist at most $f$ disjoint $(B,a)$-paths excluding $F$.
Therefore, there exist at most $f$ disjoint $(w,a)$-paths excluding
$F$ in $G_{+w}$.\footnote{See footnote \ref{f_prop}.}
Also by construction, $(w, a) \not\in \se$. Then, by Menger's theorem \cite{Graph_theory_west}, there must exist
$F_1\subseteq (A\cup B)-\{a\}$, $|F_1|\leq f$,
such that, in graph $G_{+w}$, all ($w,a$)-paths excluding $F$ 
contain at least one node in $F_1$.

Define the following sets (also recall that $\sv-F=A\cup B$):
\begin{itemize}
\item
$L ~=~\{~i~|~i\in \sv-F-F_1 ~\mbox{and there exists
	an $(i,a)$-path excluding $F\cup F_1$}~\}.$

\item
$R ~=~\{~j~|~j\in \sv-F-F_1 ~\mbox{and there exists in $G_{+w}$
	a $(w,j)$-path excluding $F\cup F_1$}~\}.$

Set $R$ contains $B-F_1$ since all nodes in $B$ have edges from $w$.

\item $C = \sv-F-L-R = (A\cup B) - L - R$.\\ Observe that $F_1\subseteq C$ (because
	nodes of $F_1$ are not in $L\cup R$). Also, by definition of
	$C$, sets $C$ and $L\cup R$ are disjoint.
\end{itemize}
Observe the following:
\begin{itemize}
\item
Sets $L$ and $R$ are disjoint, and set $L\subseteq A-F_1\subseteq A$.
Also, $A\cup B=L\cup R\cup C$.

{\em Justification}: $F_1\cap L=F_1\cap R=\emptyset$. By definition of $F_1$,
	all $(w,a)$-paths excluding $F$ contain at least one node in $F_1$.
If $L\cap R$ were to be non-empty, we can find a $(w,a)$-path excluding
$F\cup F_1$, which is a contradiction.

Note that $\sv-F-F_1=(A\cup B)-F_1$; therefore, $L\subseteq (A\cup B)-F_1$.
$B-F_1\subseteq R$, since all nodes in $B-F_1$ have links from $w$.
Since $L$ and $R$ are disjoint, it follows that $(B-F_1)\cap L=\emptyset$,
and therefore, $(A-F_1)\cap L=L$; that is, $L\subseteq A-F_1\subseteq A$.

%

\item For any $x\in C-F_1$ and $y\in L$, $(x,y)\not\in\se$.

{\em Justification}: If such a link were to exist, then
	$x$ should be in $L$, which is a contradiction
	(since $C$ and $L$ are disjoint).

\item There are no links from nodes in $R$ to nodes in $L$.

{\em Justification}: If such a link were to exist,
	it would contradict the definition of $F_1$, since we
	can now find a $(w,a)$-path excluding $F\cup F_1$.
\end{itemize}
Thus, all the incoming neighbors of set $L$ must be contained in
$F \cup F_1$. Recall that $F_1\subseteq C$ and $|F_1|\leq f$. Thus, 
\begin{eqnarray}
\label{e3}
R\cup C \not\Zightarrow L
\end{eqnarray}
Now define, $A' = L$,  $B' = R\cup C$.
Observe the following:
\begin{itemize}
\item $A'$ and $B'$ form a partition of $A\cup B$.

{\em Justification}:
$L,R,C$ are disjoint sets, therefore $A'=L$ and
$B'=R\cup C$ are disjoint. By the definition of
sets $L,R,C$ it follows that
$A'\cup B' = L\cup(R \cup C)=\sv-F=A\cup B$.

\item $A'$ is non-empty and $A'\subseteq A$.

{\em Justification}:
By definition of set $L$, set $L$ contains node $a$.
Thus, $A'=L$ is non-empty. We have already argued that $L\subseteq A$.
Thus, $A'\subseteq A$.


\item $B'$ is non-empty and $B\subseteq B'$.

{\em Justification}:
Recall that $L,R,C$ are disjoint,
and $L\cup R\cup C=A\cup B$. Thus, by definition of $C$,
$R\cup C=(A\cup B)-L$. Since $L\subseteq A$, it follows that
$B\subseteq R\cup C=B'$.
Also, since $B$ is non-empty, $B'$ is also non-empty.

\item $B' \not\Zightarrow A'$

{\em Justification}: Follows directly from (\ref{e3}),
and the definition of $A'$ and $B'$. 
\end{itemize}
This concludes the proof.
\end{proof}

~


\paragraph{Necessity Proof of Theorem \ref{t_nec_2}}

We now prove that Condition 1 (the condition in Theorem \ref{t_nec_1}) implies the condition in Theorem \ref{t_nec_2}.

\begin{proof}

Assume that Condition 1 (the condition in Theorem \ref{t_nec_1}) is satisfied by graph
$G(\sv,\se)$. Consider a partition of $A,B,F$ of $\sv$ such that
$A,B$ are non-empty and $|F|\leq f$.
Then, we must show that either $\propagate{A}{B}{\sv-F}$
or $\propagate{B}{A}{\sv-F}$.

\noindent
Consider two possibilities:
\begin{itemize}
\item $\propagate{B}{A}{\sv-F}$: In this case, the proof is complete.
\item $\notpropagate{B}{A}{\sv-F}$:
Then by Lemma \ref{lemma:prop2} in Appendix \ref{a_1and2}, there exist non-empty sets $A',B'$
that form a partition of $A\cup B$ such that
$A'\subseteq A$, $B\subseteq B'$, and $B'\not\Zightarrow A'$.
Lemma \ref{lemma:prop} in Appendix \ref{a_1and2} then implies that $\propagate{A'}{B'}{\sv-F}$.


Because $\propagate{A'}{B'}{\sv-F}$,
for each $b\in B'$, there exist $f+1$ disjoint $(A',b)$-paths excluding $F$.
Since $B\subseteq B'$, it then follows that,
for each $b\in B\subseteq B'$, there exist $f+1$ disjoint $(A',b)$-paths excluding $F$.
Since $A'\subseteq A$, and $F\cap A=\emptyset$, each $(A',b)$-path excluding $F$ is also a
$(A,b)$-path excluding $F$. Thus, 
for each $b\in B$, there exist $f+1$ disjoint $(A,b)$-paths excluding $F$.
Therefore, $\propagate{A}{B}{\sv-F}$.

\end{itemize}
\end{proof}

The proof above shows that
the {\em Condition 1} implies the condition
in Theorem \ref{t_nec_2}. That fact, and Lemma \ref{lemma:nec2-to-1} below,
together prove that the two forms of the condition are equivalent. Therefore, by Theorem \ref{t_nec_1}, the condition in Theorem \ref{t_nec_2} is necessary.

\begin{lemma}
\label{lemma:nec2-to-1}
The condition stated in Theorem \ref{t_nec_2} implies the condition stated in Theorem \ref{t_nec_1} (i.e., {\em Condition 1}).
\end{lemma}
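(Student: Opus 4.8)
The plan is to prove Lemma~\ref{lemma:nec2-to-1} by contradiction, converting a hypothetical violation of Condition~1 into a violation of the condition in Theorem~\ref{t_nec_2} over the \emph{same} failure set $F$. So I would assume the condition in Theorem~\ref{t_nec_2} holds, and suppose for contradiction that Condition~1 fails: there is a partition $L,C,R,F$ of $\sv$ with $L,R$ non-empty, $|F|\le f$, and both $L\cup C\not\Zightarrow R$ and $R\cup C\not\Zightarrow L$. The key move is to collapse the four-set partition into a three-set partition by grouping $C$ with $L$: set $A=L\cup C$ and $B=R$. Then $A$ and $B$ are non-empty and $A\cup B=\sv-F$, so $A,B,F$ is a legal partition for Theorem~\ref{t_nec_2}. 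The goal is to establish $\notpropagate{A}{B}{\sv-F}$ and $\notpropagate{B}{A}{\sv-F}$ simultaneously, which directly contradicts Theorem~\ref{t_nec_2}.

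The bridge between the two conditions is a cut argument that turns ``$f+1$ disjoint paths'' into ``$f+1$ distinct incoming neighbors.'' For the first direction, I would fix any $r\in R$ and examine $(A,r)$-paths excluding $F$. Each such path starts in $A=L\cup C$ and ends in $R$, so it has a first edge $(x,y)$ with $y\in R$; by minimality $x\notin R$, whence $x\in L\cup C$ and $x$ is an incoming neighbor of $R$. Since disjoint $(A,r)$-paths share only the endpoint $r$ and each crossing tail $x\neq r$, distinct paths contribute distinct tails $x$. Hence the number of disjoint $(A,r)$-paths excluding $F$ is bounded by the number of incoming neighbors of $R$ lying in $L\cup C$, which is at most $f$ because $L\cup C\not\Zightarrow R$. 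Thus fewer than $f+1$ disjoint $(A,r)$-paths exist, giving $\notpropagate{A}{B}{\sv-F}$.

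The second direction is symmetric but must invoke the other hypothesis: I would fix some $l\in L\subseteq A$ and consider $(R,l)$-paths excluding $F$; the first edge entering $L$ has its tail in $R\cup C$ and is an incoming neighbor of $L$, so $R\cup C\not\Zightarrow L$ caps the number of disjoint $(R,l)$-paths at $f$, yielding $\notpropagate{B}{A}{\sv-F}$. The main subtlety I expect to manage is the asymmetric role of $C$: in the first direction $C$ sits inside $A$ alongside $L$, while in the second the crossing tails legitimately range over $R\cup C$, and these two groupings are exactly what make the two hypotheses $L\cup C\not\Zightarrow R$ and $R\cup C\not\Zightarrow L$ usable. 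A second point requiring care is the claim that ``disjoint'' $(S,j)$-paths force the crossing tails to be distinct; this relies on the fact that the tails lie outside the target set and hence cannot coincide with the shared endpoint $j$, so disjointness of the interior vertices does the rest. Once both non-propagation facts are in hand, they contradict Theorem~\ref{t_nec_2} for the partition $A,B,F$, so Condition~1 (the condition in Theorem~\ref{t_nec_1}) must hold.
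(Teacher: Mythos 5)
Your proof is correct, and it follows the same overall contrapositive strategy as the paper: take a partition $L,C,R,F$ violating Condition 1, collapse it into a two-set partition of $\sv-F$, and refute propagation in both directions by bounding disjoint paths with an incoming-neighbor cut. The differences are twofold. First, your decomposition is the dual of the paper's: you take $A=L\cup C$, $B=R$, whereas the paper takes $A=L$, $B=R\cup C$. Both groupings succeed, and for the same underlying reason --- in either case the witness nodes are drawn from $L$ and $R$, every path into a node $r\in R$ that excludes $F$ must cross into $R$ through one of the at most $f$ incoming neighbors of $R$ lying in $L\cup C$, and every path into a node $l\in L$ that excludes $F$ must cross into $L$ through one of the at most $f$ incoming neighbors of $L$ lying in $R\cup C$. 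Second, where the paper cites Menger's theorem to convert the small cut into the bound of $f$ on the number of disjoint paths, you argue directly that disjoint paths contribute distinct crossing tails (distinct because disjoint paths share only the terminal node, and a crossing tail lies outside the target set, hence cannot equal that terminal node). This is more elementary and self-contained --- the direction of Menger's theorem the paper invokes is exactly this pigeonhole observation --- and you correctly isolate the one subtlety (distinctness of the tails across disjoint paths) on which the counting rests.
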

\begin{proof}
We will prove the lemma by showing that, if {\em Condition 1} is violated, then the condition stated in Theorem \ref{t_nec_2}
is violated as well. 

Suppose that the {\em Condition 1} is violated. Then there exists a partition $L,C,R,F$ of $\sv$ such that $L,R$ are both non-empty, $|F|\leq f$, $L\cup C\not\Zightarrow R$ and $R\cup C\not\Zightarrow L$.

Since $L\cup C\not\Zightarrow R$, for any node $r\in R$, there exists a set $F_r$, $|F_r|\leq f$, such that all the $(L\cup C,r)$-paths excluding $F$ contain at least one node in $F_r$. Since $L\subseteq L\cup C$, Menger's theorem \cite{Graph_theory_west} implies that there are at most $f$ disjoint  $(L,r)$-paths excluding $F$.  Thus, because $r\in R\cup C$, $\notpropagate{L}{R\cup C}{\sv-F}$. 

Similarly, since $R\cup C\not\Zightarrow L$, for any node $l\in L$, there exists a set $F_l$, $|F_l|\leq f$, such that all the $(R\cup C,l)$-paths excluding $F$  contain at least one node in $F_l$. Menger's theorem \cite{Graph_theory_west} then implies that there are at most $f$ disjoint $(R\cup C,l)$-paths excluding $F$.  Thus, $\notpropagate{R\cup C}{L}{\sv-F}$.

Define $A=L$, and $B=R\cup C$. Thus, $A,B,F$ is a partition of $\sv$ such that $|F|\leq f$ and $A,B$ are non-empty. The two conditions derived above imply that
$\notpropagate{A}{B}{\sv-F}$ and $\notpropagate{B}{A}{\sv-F}$,  violating the condition stated in Theorem \ref{t_nec_2}.
\end{proof}

\comment{+++++++++++++

\section{Proof of Claim (i) in Corollary \ref{cor:2f+1} in Section \ref{nec_2}}
\label{a_cor:2f+1}

\begin{proof}
Claim (ii) in the corollary is proved in the main body of the paper already. Now, we present the proof of Claim (i).

Since $n\geq 3f+1$ is a necessary condition for Byzantine consensus in
undirected graphs \cite{impossible_proof_lynch, welch_book}, it follows that $n\geq 3f+1$ is also necessary for directed graphs.  As presented below, this necessary condition can also be derived from Theorem \ref{t_nec_2}.

For $f=0$, the corollary is trivially true.
Now consider $f>0$.
 The proof is by contradiction. Suppose that $n\leq 3f$.
As stated in Section \ref{s_intro}, we assume $n\geq 2$, since consensus for $n=1$ is trivial.
Partition $\sv$ into three subsets $A,B,F$ such that $|F|\leq f$,
$0<|A|\leq f$, and $0<|B|\leq f$. Such a partition can be found because
$2\leq |\sv|\leq 3f$. 
Since $A,B$ are both non-empty, and contain at most $f$ nodes each, we have 
$\notpropagate{A}{B}{\sv-f}$ and $\notpropagate{B}{A}{\sv-f}$, violating the condition in Theorem \ref{t_nec_2}.

\comment{++++++++++++++ old+++++
Claim (ii) in the corollary is proved in the main body of the paper already. Now, we present the proof of Claim (i).

Since $n\geq 3f+1$ is a necessary condition for Byzantine consensus in
undirected graphs \cite{impossible_proof_lynch, welch_book}, it follows that $n\geq 3f+1$ is also necessary for directed graphs.  As presented below, this necessary condition can also be derived from Theorem \ref{t_nec_1}.

For $f=0$, the corollary is trivially true.
Now consider $f>0$.
 The proof is by contradiction. Suppose that $n\leq 3f$.
As stated in Section \ref{s_intro}, we assume $n\geq 2$, since consensus for $n=1$ is
trivial.
Partition $\sv$ into three subsets $L,R,F$ such that $|F|\leq f$,
$0<|L|\leq f$, and $0<|R|\leq f$. Such a partition can be found because
$2\leq |\sv|\leq 3f$. 
Define $C=\emptyset$.
Since $L,R$ are both non-empty, and contain at most $f$ nodes each, we have $L\cup C \not\Zightarrow R$
and $R\cup C\not\Zightarrow L$, violating the condition in Theorem \ref{t_nec_1}.
++++++++++++}
\end{proof}
++++++++++}

\section{2-clique Network}
\label{a_l_2clique}

In this section, we present a family of graphs, namely 2-clique network. We will prove that the graph satisfies the necessary condition in Theorem \ref{t_nec_2}, but each pair of nodes may not be able to communicate reliably with each other. 

\begin{definition}
\label{def:dual}
A graph $G(\sv, \se)$ consisting of $n = 6f+2$ nodes, where $f$ is a positive even integer, is said to be a {\em 2-clique network} if all the following properties are satisfied: 

\begin{itemize}
\item It includes two disjoint cliques, each consisting of $3f+1$ nodes.
Suppose that the nodes in the two cliques are specified by sets $K_1$ and $K_2$,
respectively, where $K_1=\{u_1,u_2,\cdots, u_{3f+1}\}\subset \scriptv$,
and $K_2=\scriptv-K_1=\{w_1,w_2,\cdots, w_{3f+1}\}$.
Thus, $(u_i,u_j)\in\scripte$ and $(w_i,w_j)\in\scripte$, for $1\leq i,j\leq 3f+1$ and $i \neq j$,
\item $(u_i,w_i)\in\scripte$, for $1\leq i\leq \frac{3f}{2}$ 
		and $i=3f+1$, and
\item $(w_i,u_i)\in\scripte$, for $\frac{3f}{2}+1\leq i\leq 3f$ 
		and $i=3f+1$.
\end{itemize}
\end{definition}
Figure \ref{f:2-core} is the 2-clique network for $f=2$. Note that Section \ref{s_sufficiency} proves that Byzantine consensus is possible in all graphs that satisfy the necessary condition. Therefore, consensus is possible in the 2-clique network as well. 

~

We first prove the following lemma for any graph $G(\sv, \se)$ that satisfies the necessary
condition.

\begin{lemma}
\label{lemma:AtoBC}
Let $A, B, C, F$ be disjoint subsets of $\sv$ such that $|F| \leq f$ and $A, B, C$ are non-empty. Suppose that $\propagate{A}{B}{\sv-F}$ and $\propagate{A \cup B}{C}{\sv-F}$. Then, $\propagate{A}{B\cup C}{\sv-F}$.
\end{lemma}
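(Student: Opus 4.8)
The plan is to handle the two kinds of target nodes separately. For any $x \in B$, the desired $f+1$ disjoint $(A,x)$-paths excluding $F$ are given directly by the hypothesis $\propagate{A}{B}{\sv-F}$, so nothing needs to be done there. The entire content of the lemma lies in showing, for each fixed $c \in C$, that there exist $f+1$ disjoint $(A,c)$-paths excluding $F$. Following the device used in Lemma~\ref{lemma:prop}, I would add a virtual node $v$ with an outgoing edge to every node of $A$ and no other edges, forming $G_{+v}$; by the equivalence recorded in footnote~\ref{f_prop}, it suffices to exhibit $f+1$ disjoint $(v,c)$-paths excluding $F$ in $G_{+v}$. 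Since $v$ has no edge to $c$ (because $c\notin A$), Menger's theorem applies in its cut form.

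I would then argue by contradiction. Suppose there are at most $f$ disjoint $(v,c)$-paths excluding $F$ in $G_{+v}$. Menger's theorem yields a cut $T$ with $|T|\le f$, $T\subseteq \sv-F-\{c\}$ and $v\notin T$, such that every $(v,c)$-path excluding $F$ meets $T$. Now invoke the second hypothesis $\propagate{A\cup B}{C}{\sv-F}$: it produces $f+1$ disjoint $(A\cup B,c)$-paths excluding $F$, sharing only the node $c$. Since $c\notin T$ and these paths have pairwise-distinct nodes apart from $c$, each node of $T$ lies on at most one of them, so the $f$-element set $T$ meets at most $f$ of the $f+1$ paths; hence at least one such path $Q$ avoids $T$ entirely. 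Let $s\in A\cup B$ be the source of $Q$. If $s\in A$, then prefixing $Q$ with the edge $v\to s$ gives a $(v,c)$-path excluding $F$ that avoids $T$, contradicting the choice of $T$.

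The remaining case $s\in B$ is where the real work lies, and is the step I expect to be the main obstacle, since it requires splicing two path systems while preserving avoidance of $T$. Here I would apply $\propagate{A}{B}{\sv-F}$ to the single node $s$: via the same virtual-node equivalence there are $f+1$ disjoint $(v,s)$-paths excluding $F$ in $G_{+v}$, sharing only $v$ and $s$. Since $Q$ avoids $T$ and passes through $s$, we have $s\notin T$, so again $T$ meets at most $f$ of these $f+1$ internally-disjoint paths and at least one of them, say $R$, avoids $T$. Concatenating $R$ (from $v$ to $s$) with $Q$ (from $s$ to $c$) gives a walk from $v$ to $c$ that avoids $T$ and excludes $F$; extracting a simple path from this walk yields a genuine $(v,c)$-path excluding $F$ that avoids $T$, contradicting that $T$ is a $(v,c)$-cut. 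This contradiction establishes the $f+1$ disjoint $(v,c)$-paths, and translating back through footnote~\ref{f_prop} gives $f+1$ disjoint $(A,c)$-paths excluding $F$ for every $c\in C$; together with the trivial $B$ case, this is precisely $\propagate{A}{B\cup C}{\sv-F}$. The two counting steps (an $f$-set blocks at most $f$ of $f+1$ disjoint paths) and the walk-to-path extraction are the only places needing care; everything else is bookkeeping.
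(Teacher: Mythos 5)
Your proof is correct, and it reaches the conclusion by a genuinely different internal route than the paper, even though both arguments are contradictions built on Menger's theorem plus the pigeonhole fact that a set of at most $f$ nodes cannot meet all of $f+1$ pairwise disjoint paths. The paper negates the full conclusion $\propagate{A}{B\cup C}{\sv-F}$, obtains a blocked node $v$ (which must lie in $C$) and a cut $P$ with $|P|\leq f$, and then argues with reachability sets: it defines $X$ as the set of nodes reachable from $A$ avoiding $F\cup P$, shows $A\cup B\subseteq X\cup P$ (this is where $\propagate{A}{B}{\sv-F}$ enters), and concludes that $P$ also separates $A\cup B$ from $v$, contradicting $\propagate{A\cup B}{C}{\sv-F}$. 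You instead fix $c\in C$, take the Menger cut $T$ for the pair $(v,c)$ in the virtual-node graph, use $\propagate{A\cup B}{C}{\sv-F}$ to extract one path $Q$ to $c$ that survives $T$, and then --- in the only nontrivial case, source $s\in B$ --- use $\propagate{A}{B}{\sv-F}$ to splice a surviving $(A,s)$-path onto $Q$, producing a path that crosses the cut and contradicting the cut property itself rather than the second hypothesis. The two arguments are essentially contrapositive rearrangements of one another: the paper pushes the cut forward to swallow all of $A\cup B$, while you pull a surviving path backward into $A$. What your version buys is a more constructive, per-node argument whose delicate steps (the two pigeonhole counts and the walk-to-path extraction after splicing) are made explicit; what the paper's version buys is compactness and consistency with the reachability-set technique it already uses in Lemmas \ref{lemma:prop} and \ref{lemma:prop2}, where the same style of cut-propagation argument appears. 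Your handling of the splice --- noting $s\notin T$ because $s$ lies on $Q$, and extracting a simple path from the concatenated walk --- is exactly the care that step needs, so there is no gap.
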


\begin{proof}
The proof is by contradiction. Suppose that 

\begin{itemize}
\item $\propagate{A}{B}{\sv-F}$, 
\item $\propagate{A \cup B}{C}{\sv-F}$, and
\item $\notpropagate{A}{B \cup C}{\sv-F}$.

\end{itemize}
The first condition above implies that $|A|\geq f+1$.
By Definition \ref{def:propagate} and Menger's Theorem \cite{Graph_theory_west}, the third condition implies that there exists a node $v \in B\cup C$ and a set of nodes $P \subseteq \sv-F-\{v\}$ such that $|P| \leq f$, and all $(A, v)$-paths excluding $F$ contain at least one node in $P$.
In other words, there is no $(A,v)$-path excluding $F\cup P$.
Observe that, because
$\propagate{A}{B}{\sv-F}$, $v$ cannot be in $B$; therefore
$v$ must belong to set $C$.

Let us define the sets $X$ and $Y$ as follows:
\begin{itemize}
\item Node $x\in X$ if and only if $x\in \scriptv-F-P$
	and there exists an $(A,x)$-path
	excluding $F\cup P$.
	It is possible that $P\cap A\neq\emptyset$; thus, the $(A,x)$-path
	cannot contain any nodes in $P\cap A$. 
\item Node $y\in Y$ if and only if  $y\in\scriptv-F-P$ and there exists an $(y,v)$-path
	excluding $F\cup P$.
\end{itemize}

By the definition of $X$ and $Y$, it follows that
for any $x\in X,~y\in Y$, there cannot be any
$(x,y)$-path excluding $F\cup P$.
Also, since $\propagate{A}{B}{\scriptv-F}$, for each $b\in B-P$,
there must exist an $(A,b)$-path excluding $F\cup P$; thus,
$B-P\subseteq X$, and $B\subseteq X\cup P$.
Similarly, $A \subseteq X\cup P$, and therefore, $A\cup B\subseteq X\cup P$.

By definition of $X$, there are no $(X\cup P, v)$-paths excluding
$F\cup P$. Therefore, because $A\cup B\subseteq X\cup P$, there are
no $(A\cup B,v)$-paths excluding $F\cup P$. Therefore,
since $v\in C$, $\notpropagate{A\cup B}{C}{\scriptv-F}$. This is a contradiction to the second condition above.
\end{proof}

~

Now, we use Lemma \ref{lemma:AtoBC} to prove the following Lemma.

\begin{lemma}
\label{l_2clique}
Suppose that $G(\scriptv,\scripte)$ is a 2-clique network.
Then graph $G$ satisfies the condition in Theorem \ref{t_nec_2}.
\end{lemma}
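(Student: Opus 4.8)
The plan is to reduce the whole statement to counting vertex-disjoint paths across the two cliques, exploiting that each clique hands us ``free'' internal paths. Write $K_1,K_2$ for the two cliques and, for a partition $A,B,F$ with $A,B$ nonempty and $|F|\le f$, set $A_\ell=A\cap K_\ell$, $B_\ell=B\cap K_\ell$, $F_\ell=F\cap K_\ell$ for $\ell\in\{1,2\}$. The basic building block is an intra-clique fact: if a set $X$ has at least $f+1$ members inside a clique $K_\ell$, then for every target $t\in K_\ell-F$ the $f+1$ single-edge clique paths from those members to $t$ are disjoint and exclude $F$, so $\propagate{X}{D}{\sv-F}$ for any $D\subseteq K_\ell-F$. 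Since $|K_\ell-F_\ell|\ge (3f+1)-f=2f+1$, in each clique at least one of $A,B$ occupies $\ge f+1$ nodes. First I would dispose of the case where one set, say $A$, dominates both cliques; then every node of $B$ lies in a clique where $A$ has $f+1$ members, so $\propagate{A}{B}{\sv-F}$ and we are done. The only genuine case left is the mixed one: $A$ dominates $K_1$ (so $|A_1|\ge f+1$) while $B$ dominates $K_2$ (so $|B_2|\ge f+1$), whence necessarily $|B_1|\le f$ and $|A_2|\le f$.

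In the mixed case I would use Lemma~\ref{lemma:AtoBC} to split each propagation into an internal part and a cross part. Since $\propagate{A}{B_1}{\sv-F}$ holds internally, it suffices (when $B_1\neq\emptyset$; if $B_1=\emptyset$ one argues directly) to establish $\propagate{A\cup B_1}{B_2}{\sv-F}$ in order to conclude $\propagate{A}{B}{\sv-F}$; symmetrically, since $\propagate{B}{A_2}{\sv-F}$ holds internally, establishing $\propagate{B\cup A_2}{A_1}{\sv-F}$ yields $\propagate{B}{A}{\sv-F}$. The source set $A\cup B_1$ equals $(K_1-F_1)\cup A_2$, and any path from its $K_1$-part to a node of $B_2$ must traverse a forward cross edge $(u_i,w_i)$. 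Using an explicit cut — the nodes of $A_2$ together with the $K_2$-endpoints of the clear forward edges that land in $B_2$ — and Menger's theorem, the maximum number of disjoint paths to each $w_j\in B_2$ is exactly $|A_2|+G$, where $G$ counts the forward columns $i$ with $u_i\notin F$ and $w_i\in B_2$ (this count is the same for every target $w_j$). Thus $\propagate{A}{B}{\sv-F}$ reduces to $|A_2|+G\ge f+1$, and symmetrically $\propagate{B}{A}{\sv-F}$ reduces to $|B_1|+H\ge f+1$, where $H$ counts the backward columns $i$ with $w_i\notin F$ and $u_i\in A_1$.

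The heart of the proof — and the step I expect to be the main obstacle — is to show that these two inequalities cannot both fail. Write $d_i=[u_i\in F]+[w_i\in F]$ for the number of faults in column $i$, and let $\psi_F$ (resp.\ $\psi_B$) be the number of forward (resp.\ backward) columns with $d_i\ge 1$. Since $G\ge (\frac{3f}{2}+1)-|A_2|-\psi_F$, a short rearrangement shows that failure of $|A_2|+G\ge f+1$ forces $\psi_F\ge \frac{f}{2}+1$; symmetrically, failure of the other inequality forces $\psi_B\ge \frac{f}{2}+1$. On the other hand, the forward columns $\{1,\dots,\frac{3f}{2}\}\cup\{3f+1\}$ and the backward columns $\{\frac{3f}{2}+1,\dots,3f\}\cup\{3f+1\}$ overlap only in column $3f+1$, so every column is counted at most once in $\psi_F+\psi_B$ except column $3f+1$, which is counted twice. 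Hence $\psi_F+\psi_B\le \sum_i d_i + [d_{3f+1}\ge 1]=|F|+1\le f+1$, contradicting $(\frac{f}{2}+1)+(\frac{f}{2}+1)=f+2$. Therefore at least one direction of propagation holds, which completes the mixed case and the proof.

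The two delicate points are pinning down the exact cut value $|A_2|+G$ (so that the reduction is an equivalence rather than merely a sufficient condition, which is what lets me translate failure of propagation into a clean lower bound on $\psi_F$), and carrying out the forward/backward column bookkeeping so that the overlap at column $3f+1$ is charged correctly; the hypothesis that $f$ is even is exactly what keeps $\frac{3f}{2}$ and the bounds $\frac{f}{2}+1$ integral.
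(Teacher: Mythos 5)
Your proof is correct, and at its core it rediscovers the same counting the paper relies on, but packaged in a more laborious structure. The paper's own proof has no domination case analysis and no Menger/cut computation: it runs the column count once, independently of the partition $A,B$ --- essentially your $\psi_F+\psi_B\le |F|+1\le f+1$ argument with the double-counting of column $3f+1$ --- to conclude that at least one direction across the cut keeps $f+1$ fault-free cross links; say forward. Because the forward links are pairwise vertex-disjoint and $K_2$ is a clique, this immediately gives whole-clique-to-whole-clique propagation $\propagate{K_1-F}{K_2-F}{\sv-F}$. Then whichever of $A,B$ has $f+1$ nodes in the \emph{source} clique $K_1$ (say $A$, so $\propagate{A_1}{B_1}{\sv-F}$ by single-edge clique paths) propagates to everything via Lemma \ref{lemma:AtoBC} applied to $(A_1,B_1,A_2\cup B_2)$, yielding $\propagate{A}{B}{\sv-F}$; the identity of the set dominating the target clique never matters, which is precisely what makes your mixed case, with its partition-dependent quantities $G$, $H$, $|A_2|$, $|B_1|$, unnecessary. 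Two small flaws in your write-up, neither fatal: (i) the claimed Menger equality (maximum number of disjoint paths $=|A_2|+G$) is shaky as stated, since your proposed cut may contain the target $w_j$ itself or source nodes, but you never actually need it --- your contradiction only uses the constructive direction, namely that $|A_2|+G\ge f+1$ yields $f+1$ explicitly disjoint paths; (ii) the inference ``$A$ dominates $K_1$ and $B$ dominates $K_2$, whence $|B_1|\le f$ and $|A_2|\le f$'' is a non sequitur on its own (both sets can dominate the same clique) and is valid only inside the residual case where neither set dominates both cliques, though as it happens these two bounds are never used afterwards.
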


\begin{proof}


Consider a partition $A, B, F$ of $\sv$, where $A$ and $B$ are both non-empty, and $|F| \leq f$. Recall from Definition \ref{def:dual} that $K_1,K_2$ also
form a partition of $\sv$.

Define $A_1 = A \cap K_1, A_2 = A \cap K_2, B_1 = B \cap K_1,B_2 = B \cap K_2, F_1=F\cap K_1$ and $F_2=F\cap K_2$.

Define $\scripte'$ to be the set of directed links from the nodes
in $K_1$ to the nodes in $K_2$, or vice-versa.
Thus, there are
$\frac{3f}{2}+1$ directed links in $\scripte'$
from the nodes in $K_1$ to the nodes in $K_2$,
and the same number of links from the nodes in $K_2$ to the nodes
in $K_1$. 
Each pair of links in $\scripte'$, with the exception of
the link pair between $a_{3f+1}$ and $b_{3f+1}$, is node disjoint.
Since $|F|\leq f$, it should be easy to see that, at least
one of the two conditions below is true:
\begin{list}{}{}
\item{(a)} There are at least $f+1$ directed links from the nodes in $K_1-F$ to the nodes in $K_2-F$. 
\item{(b)} There are at least $f+1$ directed links from the nodes in $K_2-F$ to nodes the in $K_1-F$. 
\end{list}
Without loss of generality, suppose that condition (a) is true.
Therefore, since $|K_1-F|\geq 2f+1$
and the nodes in $K_2-F$ form a clique, it follows that $\propagate{K_1-F}{K_2-F}{\sv - F}$.
Then, because $K_1-F=A_1\cup B_1$ and $K_2-F=A_2\cup B_2$, 
we have
\begin{eqnarray}
\propagate{A_1\cup B_1}{A_2\cup B_2}{\scriptv-F}.
\label{e_topo_1}
\end{eqnarray}

$|K_1-F|\geq 2f+1$ also implies that
either $|A_1|\geq f+1$ or $|B_1|\geq f+1$.
Without loss of generality, suppose that $|A_1|\geq f+1$.
Then, since the nodes in $A_1\cup B_1$ form a clique, it follows that
$\propagate{A_1}{B_1}{\scriptv-F_1-K_2}$
(recall that $\scriptv-F_1-K_2=A_1\cup B_1$). Since $\scriptv-F_1 - K_2 \subset
\scriptv-F$, we have 
\begin{eqnarray}
\propagate{A_1}{B_1}{\scriptv-F}
\label{e_topo_2}
\end{eqnarray}

(\ref{e_topo_1}) and (\ref{e_topo_2}), along with 
Lemma \ref{lemma:AtoBC} above imply that
$\propagate{A_1}{B_1 \cup A_2 \cup B_2}{\sv-F}$. Therefore,
$\propagate{A_1}{B_1 \cup B_2}{\sv-F}$, and $\propagate{A_1 \cup A_2}{B_1 \cup B_2}{\sv-F}$. 
Since $A=A_1\cup A_2$ and $B=B_1\cup B_2$, $\propagate{A}{B}{\sv - F}$.
\end{proof}

\section{Proof of Corollary \ref{cor:2f+1}}
\label{a_cor:2f+1}

\begin{proof}
Since $n\geq 3f+1$ is a necessary condition for Byzantine consensus in
undirected graphs \cite{impossible_proof_lynch, welch_book}, it follows that $n\geq 3f+1$ is also necessary for directed graphs.  As presented below, this necessary condition can also be derived from Theorem \ref{t_nec_2}.

For $f=0$, condition (i) in the corollary is trivially true.
Now consider $f>0$.
 The proof is by contradiction. Suppose that $n\leq 3f$.
As stated in Section \ref{s_intro}, we assume $n\geq 2$, since consensus for $n=1$ is trivial.
Partition $\sv$ into three subsets $A,B,F$ such that $|F|\leq f$,
$0<|A|\leq f$, and $0<|B|\leq f$. Such a partition can be found because
$2\leq |\sv|\leq 3f$. 
Since $A,B$ are both non-empty, and contain at most $f$ nodes each, we have 
$\notpropagate{A}{B}{\sv-F}$ and $\notpropagate{B}{A}{\sv-F}$, violating the condition in Theorem \ref{t_nec_2}.
Thus, $n\geq 3f+1$ is a necessary condition.

Now, for $f>0$, we show that it is necessary for each node to have at least $2f+1$ incoming neighbors. The proof is by contradiction.
Suppose that for some node $i \in \sv$, the number of incoming neighbors
is at most $2f$. Partition $\sv-\{i\}$ into two sets $L$ and $F$
such that $L$ is non-empty and contains at most $f$ incoming neighbors of $i$,
and $|F|\leq f$. It should be easy to see that such $L,F$ can be found, since node $i$ has at most $2f$ incoming neighbors.

Define $A = \{i\}$ and $B = \sv - A - F = L$.
Thus, $A,B,F$ form a partition of $\sv$.
Then, since $f>0$ and $|A|=1, |B| = |L| > 0$, it follows that $\notpropagate{A}{B}{\sv-F}$.
Also, since $B$ contains at most $f$ incoming neighbors of node $i$, and
set $A$ contains only node $i$, there are at most $f$ node-disjoint $(B,i)$-paths. Thus, $\notpropagate{B}{A}{\sv-F}$. The above two conditions violate the necessary condition stated in Theorem \ref{t_nec_2}. 
\end{proof}

\section{Source Component}
\label{a_SC}

We introduce some definitions and results that are useful in the other appendices.

\begin{definition}
\label{def:decompose}
{\bf Graph decomposition:}
Let $H$ be a subgraph of $G(\sv,\se)$. Partition graph $H$ into non-empty strongly connected components,
 $H_1,H_2,\cdots,H_h$, where $h$ is a non-zero integer dependent on graph $H$,
 such that nodes $i,j\in H_k$ if and only if
there exist $(i,j)$- and $(j,i)$-paths both excluding nodes outside $H_k$.

Construct a graph $H^d$ wherein each strongly connected component $H_k$ above is represented by vertex $c_k$, and there is an edge from vertex $c_k$ to vertex $c_l$ if and only if the nodes in $H_k$ have directed paths in $H$ to the nodes in $H_l$.
\end{definition}
It is known that the decomposition graph $H^d$ is a directed {\em acyclic} graph \cite{dag_decomposition}.

\begin{definition}
\label{def:source_comp}
{\bf Source component}:
Let $H$ be a directed graph, and let $H^d$ be its decomposition as per Definition~\ref{def:decompose}. Strongly connected component $H_k$ of $H$ is said to be a {\em source component} if the corresponding vertex $c_k$ in $H^d$ is \underline{not} reachable from any other vertex in $H^d$. 
\end{definition}


\begin{definition}
\label{def:reduced} {\bf Reduced Graph:}
For a given graph $G(\scriptv,\scripte)$, and sets $F\subset\scriptv$,
$F_1\subset \sv-F$, such that $|F|\leq f$ and $|F_1|\leq f$, reduced
graph $G_{F,F_1}(\scriptv_{F,F_1},\scripte_{F,F_1})$ is defined as
follows: (i)
$\scriptv_{F,F_1}=\scriptv-F$, and (ii) $\scripte_{F,F_1}$ is obtained by removing from $\scripte$
all the links incident on the nodes in $F$, and all the outgoing links from nodes
in $F_1$. 
That is, $\se_{F,F_1}=\se-\{(i,j)~|~i\in F ~~\mbox{or} ~~ j\in F\}
			- \{(i,j)~|~i\in F_1\}$.
\end{definition}

\begin{corollary}
\label{cor:1:prop}
Suppose that graph $G(\scriptv,\scripte)$ satisfies the condition stated in
Theorem \ref{t_nec_1}. For any $F \subset \scriptv$
and $F_1\subset \sv-F$, such that $|F| \leq f$ and
$|F_1|\leq f$, let $S$ denote the set of nodes in the source component of
$G_{F,F_1}$.
Then, $\propagate{S}{\sv-F-S}{\sv-F}$.
\end{corollary}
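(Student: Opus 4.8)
The plan is to derive this corollary directly from Lemma \ref{lemma:prop}, which already converts a ``few incoming neighbors'' hypothesis into a propagation conclusion under Condition 1. Concretely, I would set $A = S$ and $B = \sv - F - S$, so that $A, B, F$ form a partition of $\sv$ with $A$ non-empty (the condensation DAG of $G_{F,F_1}$ always has at least one source component, and $\sv - F \neq \emptyset$ since $n > f$). Lemma \ref{lemma:prop} then yields $\propagate{S}{\sv-F-S}{\sv-F}$ as soon as I establish $(\sv-F-S) \not\Zightarrow S$, i.e. that $S$ has at most $f$ incoming neighbors lying in $\sv - F - S$.

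The heart of the argument is therefore to count the incoming neighbors of $S$ that survive the edge deletions defining $G_{F,F_1}$. Because $S$ is a source component of $G_{F,F_1}$, there is no edge of $G_{F,F_1}$ from any node of $\sv - F - S$ into $S$: such an edge would make the component containing $S$ reachable from another component in the decomposition graph, contradicting Definition \ref{def:source_comp}. Now take any node $j \in \sv - F - S$ that is an incoming neighbor of $S$ in the original graph $G$, say $(j,s) \in \se$ with $s \in S$. Since neither $j$ nor $s$ lies in $F$, the edge $(j,s)$ is not deleted on account of incidence with $F$; by Definition \ref{def:reduced} the only remaining way for it to be absent from $G_{F,F_1}$ is that it is an outgoing edge of a node in $F_1$, that is, $j \in F_1$.

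Hence every incoming neighbor of $S$ contained in $\sv - F - S$ belongs to $F_1$, and since $|F_1| \leq f$ there are at most $f$ of them. By Definition \ref{def:absorb} this is exactly $(\sv - F - S) \not\Zightarrow S$. Applying Lemma \ref{lemma:prop} with $A = S$ and $B = \sv - F - S$ (using the hypothesis that $G$ satisfies Condition 1) then gives $\propagate{S}{\sv-F-S}{\sv-F}$, completing the proof.

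I do not anticipate a serious obstacle: the single point requiring care is the implication ``an edge into $S$ present in $G$ but absent in $G_{F,F_1}$ must have its tail in $F_1$'', which must appeal precisely to the two deletion rules of Definition \ref{def:reduced} together with the fact that both endpoints avoid $F$. The only other bookkeeping is to confirm that $S$ is non-empty before invoking Lemma \ref{lemma:prop}, since that lemma requires $A$ to be non-empty.
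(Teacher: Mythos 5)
Your proposal is correct and follows essentially the same route as the paper's own proof: both arguments observe that, since $S$ is a source component of $G_{F,F_1}$, every incoming neighbor of $S$ lying in $\sv-F-S$ (in the original graph $G$) must belong to $F_1$, so $(\sv-F-S)\not\Zightarrow S$, and then invoke Lemma \ref{lemma:prop} with $A=S$, $B=\sv-F-S$. Your explicit appeal to the two deletion rules of Definition \ref{def:reduced} merely spells out a step the paper states more tersely, and the empty-$B$ case you defer to Lemma \ref{lemma:prop} is the same case the paper dispatches via Definition \ref{def:propagate}.
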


\begin{proof}
Since $G_{F,F_1}$ contains non-zero number of nodes, its source component $S$ must be non-empty.
If $\sv-F-S$ is empty, then the corollary follows trivially by Definition \ref{def:propagate}.
Suppose that $\sv-F-S$ is non-empty.
Since $S$ is a source component in $G_{F,F_1}$, it has
no incoming neighbors in $G_{F,F_1}$; therefore, all of the incoming neighbors
of $S$ in $\sv-F$ in graph $G(\sv,\se)$ must belong to $F_1$. Since $|F_1|\leq f$, we have,
\[
(\scriptv - S - F) \not\Zightarrow S
\]
Lemma \ref{lemma:prop} in Appendix \ref{a_1and2} then implies that
\[
\propagate{S}{\sv-F-S}{\sv-F}
\]
\end{proof}

~

\begin{lemma}
\label{l_connected}
For any $F\subset\sv$, $F_1\subset\sv-F$, such that $|F|\leq f$, $|F_1|\leq f$: 
\begin{itemize}
\item The source component of $G_{F,F_1}$ is strongly connected in $G_{-F}$. ($G_{-F}$ is defined in Definition \ref{def:G-F} in Section \ref{s_sufficiency}.)
\item The source component of $G_{F,F_1}$ does not contain any nodes in $F_1$.
\end{itemize}
\end{lemma}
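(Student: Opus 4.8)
The plan is to treat the two bullets separately, handling strong connectivity first (it is purely structural) and then disjointness from $F_1$ (where the real content lies). Throughout I write $S$ for the source component of $G_{F,F_1}$ and rely on one elementary observation, immediate from Definitions \ref{def:G-F} and \ref{def:reduced}: the graphs $G_{F,F_1}$ and $G_{-F}$ share the vertex set $\sv-F$, and $\se_{F,F_1}$ is contained in the edge set of $G_{-F}$ (the reduced graph merely deletes, in addition, the outgoing links of $F_1$). Consequently every directed path in $G_{F,F_1}$ is also a directed path in $G_{-F}$.

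For the first bullet, I would argue as follows. By Definition \ref{def:decompose}, $S$ being a strongly connected component of $G_{F,F_1}$ means that for every pair $i,j\in S$ there is an $(i,j)$-path inside $G_{F,F_1}$. By the containment observation above, each such path is also an $(i,j)$-path in $G_{-F}$, so Definition \ref{def:strong} gives that $S$ is strongly connected in $G_{-F}$. This bullet needs nothing beyond the inclusion of edge sets.

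For the second bullet I would proceed by contradiction, supposing some $p\in S\cap F_1$. Since $p\in F_1$, all of its outgoing links are deleted in $G_{F,F_1}$, so $p$ has no outgoing edge there. If $|S|\ge 2$, then strong connectivity of $S$ inside $G_{F,F_1}$ would force a path, hence an outgoing edge, from $p$ to some other node of $S$ --- a contradiction; thus $S=\{p\}$ must be a singleton. Now $F_1\neq\emptyset$ forces $f\ge 1$, and Corollary \ref{cor:2f+1} gives $n\ge 3f+1$, whence $|\sv-F-S|=(n-|F|)-1\ge 2f\ge 2>0$, so $\sv-F-S$ is non-empty. Corollary \ref{cor:1:prop} then yields $\propagate{S}{\sv-F-S}{\sv-F}$, so for any $b\in\sv-F-S$ there exist $f+1$ disjoint $(S,b)$-paths excluding $F$; their source nodes are distinct elements of $S$, forcing $|S|\ge f+1\ge 2$ and contradicting $|S|=1$.

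The main obstacle is precisely this singleton case of the second bullet. The easy ``$p$ has no outgoing edge'' argument disposes of $|S|\ge 2$ at once, but an isolated $F_1$-node can genuinely be a source component of an arbitrary digraph (for instance, a lone node whose only links were outgoing, placed beside a disjoint $2$-cycle), so the statement is simply false without a hypothesis on $G$. The resolution is to lean on the standing assumption --- inherited from the setting of Corollary \ref{cor:1:prop} --- that $G$ satisfies the condition of Theorem \ref{t_nec_2}: propagation of a source component to a non-empty remainder forces at least $f+1$ distinct path-sources inside it, ruling out singletons. I would state this dependence explicitly at the start of the proof. (The same ingredients incidentally establish that the source component is unique, justifying the phrase ``the source component,'' although uniqueness is not needed for either bullet.)
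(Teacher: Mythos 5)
Your proof is correct, and for the first bullet and for the multi-node case of the second bullet it coincides with the paper's own argument: every path inside $G_{F,F_1}$ is a path in $G_{-F}$ because $\se_{F,F_1}$ is contained in the edge set of $G_{-F}$, and a node of $F_1$ has no outgoing links in $G_{F,F_1}$, so it cannot belong to a strongly connected component of size at least two. Where you genuinely depart from the paper is the singleton case, and that is the most valuable part of your write-up. The paper disposes of the second bullet in one stroke (``a node in $F_1$ cannot have paths to any other node \dots due to the connectedness requirement of a source component''), which silently ignores the possibility that the source component is a singleton $\{p\}$ with $p\in F_1$: a node of $F_1$ all of whose incoming neighbors lie in $F\cup F_1$ is isolated in $G_{F,F_1}$ and is, under Definition \ref{def:source_comp}, a legitimate source component, exactly as your counterexample shows. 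This case cannot be excluded on purely structural grounds; it is ruled out only by the standing assumption of Section \ref{s_sufficiency} that $G$ satisfies the condition of Theorem \ref{t_nec_2}, which you invoke through Corollaries \ref{cor:2f+1} and \ref{cor:1:prop} together with the source-counting observation (this is Lemma \ref{l_propagate_f+1}) to force $|S|\geq f+1\geq 2$. Your use of Corollary \ref{cor:2f+1} is not circular, since its proof in Appendix \ref{a_cor:2f+1} derives $n\geq 3f+1$ directly from the condition of Theorem \ref{t_nec_2} rather than from the existence of an algorithm, and the paper itself uses the corollary in exactly this way inside the sufficiency argument (Claim \ref{claim:inner_loop}). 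The trade-off, then: the paper's proof is shorter and nominally condition-free, but as written it is incomplete --- equivalently, it implicitly presumes uniqueness of the source component, which itself requires the condition --- whereas your proof makes the hidden hypothesis explicit, closes the gap, and, as you note, the same counting argument also yields uniqueness, justifying the article ``the'' in the lemma statement.
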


\begin{proof}
By Definition \ref{def:decompose}, each pair of nodes $i,j$ in the source
component of graph $G_{F,F_1}$ has at least one $(i,j)$-path 
and at least one $(j,i)$-path consisting of nodes only in $G_{F,F_1}$,
i.e., excluding nodes in $F$.

Since $F_1\subset \sv-F$, $G_{F,F_1}$ contains other nodes besides $F_1$.
Although nodes of $F_1$ belong to graph $G_{F,F_1}$, the
nodes in $F_1$ do not have any outgoing links in $G_{F,F_1}$.
Thus, a node in $F_1$ cannot have paths to any other node in $G_{F,F_1}$.
Then, due to the connetedness requirement of a source component,
it follows that no nodes of $F_1$ can be in the source component.
\end{proof}

\section{Sufficiency for $f=0$}
\label{a_f_0}

The proof below uses the terminologies and results presented in Appendix \ref{a_SC}.
We now prove that, when $f=0$, the necessary condition in Theorem \ref{t_nec_2} is
sufficient to achieve consensus.

\begin{proof}

When $f=0$,
suppose that the graph $G$ satisfies the necessary condition in Theorem \ref{t_nec_2}.
Consider the source component $S$ in reduced graph $G_{\emptyset, \emptyset} = G$, i.e., in the reduced graph where $F = F_1 = \emptyset$, as per Definition \ref{def:reduced} in Appendix \ref{a_SC}. Note that by definition, $S$ is non-empty. Pick a node $i$ in the source component. By Lemma \ref{l_connected} in Appendix \ref{a_SC}, $S$ is strongly connected in $G$, and thus $i$ has a directed path to each of the nodes in $S$. By Corollary \ref{cor:1:prop} in Appendix \ref{a_SC}, because $F=\emptyset$, $\propagate{S}{\sv-S}{\sv}$, i.e., for each node $j \in \sv-S$, an $(S, j)$-path exists. Since $S$ is strongly connected, an $(i, j)$-path also exists.  Then consensus can be achieved simply by node $i$ routing its input to all the other nodes, and requiring all the nodes to adopt node $i$'s input as the output (or decision) for the consensus.  It should be easy to see that termination, validity and agreement properties are all satisfied.

\end{proof}






\section{Proof of Claim \ref{claim:inner_loop} in Section \ref{subsec:inner}}
\label{a_claim:S}

The proof below uses the terminologies and results presented in Appendix \ref{a_SC}. We first prove a simple lemma.

\begin{lemma}
\label{l_propagate_f+1}
Given a partition $A,B,F$ of $\sv$ such that $B$ is non-empty,
and $|F|\leq f$,
if $\propagate{A}{B}{\sv-F}$, then
size of $A$ must be at least $f+1$.
\end{lemma}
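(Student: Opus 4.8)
The plan is to unpack the definition of propagation and then exploit the fact that disjoint $(A,b)$-paths must originate at distinct source nodes, all of which lie in $A$. Since $B$ is non-empty, I would fix an arbitrary node $b\in B$. Because $\propagate{A}{B}{\sv-F}$ and $B\neq\emptyset$, clause (i) of Definition \ref{def:propagate} does not apply, so clause (ii) must hold; in particular, for this $b$ there exist at least $f+1$ disjoint $(A,b)$-paths excluding $F$.

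Next I would invoke the terminology established for paths from a set: two $(A,b)$-paths are ``disjoint'' precisely when they share only the node $b$, with all remaining nodes---\emph{including the source nodes on the paths}---being pairwise distinct. Consequently the $f+1$ disjoint $(A,b)$-paths have $f+1$ pairwise distinct source nodes. By the definition of an $(A,b)$-path, each of these source nodes belongs to $A$. Hence $A$ contains at least $f+1$ distinct nodes, which is exactly $|A|\geq f+1$.

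I expect no real obstacle here, as the statement is an immediate consequence of the definitions. The only subtlety worth flagging explicitly is that, for $(S,j)$-paths, ``disjoint'' is defined to require distinct source nodes as well (rather than merely internally node-disjoint paths sharing a common source); it is precisely this stipulation that lets me count $f+1$ distinct elements of $A$. Once that point is noted, the counting argument on source nodes concludes the proof.
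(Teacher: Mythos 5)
Your proof is correct and follows essentially the same route as the paper's: both fix a node $b\in B$, invoke the definition of $\propagate{A}{B}{\sv-F}$ to obtain $f+1$ disjoint $(A,b)$-paths excluding $F$, and count the pairwise-distinct source nodes, all of which lie in $A$. The subtlety you flag---that disjointness of $(S,j)$-paths requires distinct source nodes---is precisely the point the paper's proof also rests on, so nothing further is needed.
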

\begin{proof}
By definition, there must be at least $f+1$ disjoint $(A,b)$-paths
excluding $F$ for each $b\in B$. Each of these $f+1$ disjoint paths
will have a distinct {\em source} node in $A$. Therefore, such $f+1$ disjoint paths
can only exist if $A$ contains at least $f+1$ distinct nodes.
\end{proof}

~

~

\noindent

We now prove the claim (i) in Section \ref{subsec:inner}. \\



\noindent{\bf Proof of Claim \ref{claim:inner_loop} in Section \ref{subsec:inner}:}

The second claim of Claim \ref{claim:inner_loop} is proved in the main body already. Now, we present the proof of the first claim:

\noindent
{\em The required set $S$ exists in both Case 1 and 2 of each INNER loop.}

~

Consider the two cases in the INNER loop.
\begin{itemize}
\item{Case 1:~}
 $\propagate{A}{B}{\sv-F}$ and $\notpropagate{B}{A}{\sv-F}$:

Since $\notpropagate{B}{A}{\sv-F}$, by Lemma \ref{lemma:prop2} in Appendix \ref{a_1and2},
there exist non-empty sets $A',B'$ that form a partition of
$A\cup B=\sv-F$  such that $A'\subseteq A$ and
\[
B'\not\Zightarrow A'
\]
Let $F_1$ be the set of incoming neighbors of $A'$ in $B'$.
Since $B'\not\Zightarrow A'$, $|F_1|\leq f$. Then
$A'$ has no incoming neighbors in $G_{F,F_1}$. Therefore, the
source component of $G_{F,F_1}$ 
must be contained within $A'$. (The definition of source component is in Appendix \ref{a_SC}.)
Let $S$ denote the set of nodes in this source component.
Since $S$ is the source component, 
by Corollary \ref{cor:1:prop} in Appendix \ref{a_SC},
\[
\propagate{S}{\sv-S-F}{\sv-F}.
\]
Since $S\subseteq A'$ and $A'\subseteq A$, $S\subseteq A$.
Then, $B\subseteq (A\cup B)-S =\sv-S-F$; therefore, $\sv-S-F$ is non-empty.
Also, since $\propagate{S}{\sv-S-F}{\sv-F}$, set $S$ must be non-empty
(by Lemma \ref{l_propagate_f+1} above).
By Lemma \ref{l_connected} in Appendix \ref{a_SC}, $S$ is strongly connected in $G_{-F}$. (The definition of $G_{-F}$ is in Section \ref{s_sufficiency}.)
Thus, set $S$ as required in Case 1 exists.

\item{Case 2:~}
 $\propagate{A}{B}{\sv-F}$ and $\propagate{B}{A}{\sv-F}$:

Recall that we consider $f>0$ in Section \ref{subsec:inner}. 


By Corollary \ref{cor:2f+1} in Section \ref{nec_2}, since $|\sv|=n>3f$, $|A\cup B|=|\sv-F|>2f$. 
In this case, we pick an arbitrary non-empty set $F_1\subset A\cup B= \sv-F$ such that $|F_1|=f>0$, and find the source component of $G_{F,F_1}$. 
Let the set of nodes in the source component be denoted as $S$.
Since $S$ is the source component, 
by Corollary \ref{cor:1:prop} in Appendix \ref{a_SC},
\[
\propagate{S}{\sv-F-S}{\sv-F}
\]
Also, since $\propagate{A}{B}{\sv-F}$, and $(S-A)\subseteq B$,
we have $\propagate{A}{(S-A)}{\sv-F}$.
Also, since $\sv-S-F$ contains $F_1$, and $F_1$ is non-empty,
$\sv-S-F$ is non-empty; also, since $\propagate{S}{\sv-S-F}{\sv-F}$,
set $S$ must be non-empty (by Lemma \ref{l_propagate_f+1} above).
By Lemma \ref{l_connected} in Appendix \ref{a_SC}, $S$ is strongly connected in $G_{-F}$.
Thus, set $S$ as required in Case 2 exists.
\end{itemize}


\fillbox

~

\comment{++++++
Consider nodes in set $F$. As shown in Corollary \ref{cor:2f+1} in Section \ref{nec_2},
when $f>0$, each node in $\sv$ has at least $2f+1$ incoming neighbors.
Since $|F|\leq f$,
for each $k\in F$ there must exist at least
$f+2$ incoming neighbors in $\sv-F$. 
Thus, the desired set $N_k$ exists, satisfying
the requirement in step (j) of Algorithm BC.

+++++++++++++ f+1??? ++++++++++++}

\fillbox

\section{Proof of Lemma \ref{l_agreement}}
\label{a_l_agreement}

The proof below uses the terminologies and results presented in Appendix \ref{a_SC}. Now, we present the proof of Lemma \ref{l_agreement}.

\begin{proof}
Recall that $F^*$ denotes the \underline{actual set of faulty nodes} in
the network ($0\leq |F^*|\leq f$).

Since the OUTER loop of Algorithm BC considers all possible $F\subseteq \sv$
such that $|F|\leq f$, eventually, the OUTER loop will be
performed with $F=F^*$.

In the INNER loop for $F=F^*$, different partitions $A,B$
of $\sv-F=\sv-F^*$ will be considered.
We will say that such a partition $A,B$ is a ``conformant'' partition
if $v_i=v_j$ for all $i,j\in A$, 
and $v_i=v_j$ for all $i,j\in B$.
A partition $A,B$ that is not conformant is said to be ``non-conformant''. Further, we will say that an INNER loop iteration is a ``deciding'' iteration if one of the following condition is true. 

\begin{itemize}
\item[C1]: The $A,B$ partition of $\sv-F$ considered in the iteration is conformant.

In Case 1 with conformant partition, every node in $S$ has the same value $t$  after step (a). Hence, in the end of step (b), every node in $S$ has the same value $t$. Now, consider Case 2 with conformant partition. Denote the value of all the nodes in $A$ by $\alpha$ ($\alpha \in \{0,1\}$). Then, in step (e), each node
 $i$ in $A$ (including $S \cap A$) sets $t_i$ equal to $\alpha$. In step (f), all the nodes in $S \cap B$ receive identical values $\alpha$ from nodes in $A$, and hence, they set value $t$  equal to $\alpha$. Therefore, every node in $S$ has the same value $t$ at the end of step (g).

\item[C2]: The $A,B$ partition of $\sv-F$ considered in the iteration is non-conformant; however,
the values at the nodes are such that, at the end of step (b) of Case 1, or at the end of step (g) of Case 2 (depending on which case applies), every node in the corresponding set $S$ has the same value $t$. (The definition of source component is in Appendix \ref{a_SC}.) That is, for all $i, j \in S,~ t_i = t_j$.
\end{itemize}
In both C1 and C2, all the nodes in the corresponding source component $S$ have the identical value $t$ in the deciding iteration (in the end of step (b) of Case 1, and in the end of step (g) of Case 2). The iteration that is not deciding is said to be ``non-deciding''. 

\begin{claim}
\label{claim:deciding}
In the INNER loop with $F=F^*$, value $v_i$ for each fault-free node $i$ will stay unchanged in every non-deciding iteration.
\end{claim}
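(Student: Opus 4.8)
The plan is to exploit the defining feature of the OUTER loop iteration with $F=F^*$: since $F^*$ is exactly the set of faulty nodes and every path used by \Propagate~and \Equality~excludes $F=F^*$, all of those paths consist solely of fault-free nodes. Consequently no faulty node can corrupt any value carried along them, and I can reason about the $t$-values exactly as in the proof of Lemma~\ref{l_valid_1}. The target is to show that in a non-deciding iteration the common value reached inside $S$ is $\perp$, and that this $\perp$ then propagates to every destination node, so that the guard $t_j\neq\perp$ in steps (d) and (i) is never satisfied and hence no $v_j$ is overwritten.

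The key step is a dichotomy for the \Equality($S$) call (step (b) in Case 1, step (g) in Case 2). Because all paths are fault-free, every node $j\in S$ receives the \emph{true} value $t_i$ from every other $i\in S$; thus every node in $S$ applies the same test (all received values together with its own $t_j$ are equal and lie in $\{0,1\}$) to the identical collection of values, and therefore reaches the same verdict. Either they all retain a common value $\alpha\in\{0,1\}$, or they all set $t:=\perp$. Hence at the end of \Equality($S$) the nodes of $S$ necessarily share a single $t$-value. Since the iteration is non-deciding, the nodes of $S$ do not all share a value in $\{0,1\}$, so this common value must be $\perp$. I would record this as the first step of the proof; note that it is robust to whatever the pre-\Equality~values were (in Case 2 some node in $S-A$ may already carry $\perp$ after step (f), which simply forces the ``fail'' branch).

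With $t_i=\perp$ for all $i\in S$, I then analyze the final \Propagate($S,\sv-F-S$) call (step (c) in Case 1, step (h) in Case 2). Each destination node $j\in\sv-F-S$ collects $f+1$ values along $f+1$ disjoint fault-free $(S,j)$-paths, and every one of these equals $\perp$; by the rule of \Propagate, $t_j:=\perp$. I now conclude case by case. In Case 1 the only nodes that could change $v$ are those in $\sv-F-S$ at step (d), whose guard $t_j\neq\perp$ fails; nodes in $S$ never modify $v$ in Case 1. In Case 2 the nodes that could change $v$ at step (i) are those in $\sv-F-(A\cap S)=(\sv-F-S)\cup(S-A)$: the nodes of $\sv-F-S$ just received $\perp$, while the nodes of $S-A$ are \emph{not} in the destination set of the \Propagate~in step (h) and hence still carry the $\perp$ assigned by \Equality~in step (g); in both subsets the guard fails, and nodes in $A\cap S$ never modify $v$. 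Therefore no fault-free node updates $v$, which is the claim. (Step (j) concerns only nodes in $F=F^*$, which are faulty, so it is irrelevant to fault-free $v_i$.)

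The main obstacle is conceptual rather than computational: one must recognize that for $F=F^*$ the \Equality~step \emph{always} homogenizes $S$, so a non-deciding iteration is precisely one in which that homogenized value is $\perp$ — there is no ``genuinely mixed'' outcome in $S$ to handle separately. The only real bookkeeping subtlety is the Case 2 update set, where one must verify that the nodes of $S-A$, although contained in the step-(i) update set $\sv-F-(A\cap S)$, lie outside the destination set of the step-(h) \Propagate~and therefore retain the $\perp$ produced by \Equality~in step (g).
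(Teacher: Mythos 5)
Your proof is correct and follows essentially the same route as the paper's: with $F=F^*$ all paths used by \Propagate~and \Equality~are fault-free, a non-deciding iteration leaves every node of $S$ with $t=\perp$ after \Equality, the final \Propagate~then delivers only $\perp$ so the guards $t_j\neq\perp$ in steps (d) and (i) fail, nodes in $S$ (Case 1) and $A\cap S$ (Case 2) never modify $v$ anyway, and step (j) involves only the faulty nodes $F=F^*$. If anything, your explicit dichotomy at \Equality~is slightly more careful than the paper's wording, which deduces the all-$\perp$ outcome from the existence of a pair $j,k\in S$ with $t_j\neq t_k$ and thereby glosses over the Case-2 possibility that all inputs to \Equality~are already $\perp$ (your argument covers this case as well).
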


\begin{proof}
Suppose that $F=F^*$, and the INNER loop iteration under consideration is a non-deciding iteration. Observe that since the paths used in procedures {\tt Equality} and {\tt Propagate} exclude $F$, none of the faulty nodes can affect the outcome of any INNER loop iteration when $F=F^*$. Thus, during \Equality($S$) (step (b) of Case 1, and step (g) of Case 2), each node in $S$ can receive the value from other nodes in $S$ correctly. Then, every node in $S$ will set value $t$ to be $\perp$ in the end of \Equality($S$), since by the definition of non-deciding iteration, there is a pair of nodes $j, k \in S$ such that $t_j \neq t_k$. Hence, every node in $\sv - F - S$ will receive $f+1$ copies of $\perp$ after \Propagate($S, \sv-F-S$) (step (c) of Case 1, and step (h) of Case 2), and will set value $t$ to $\perp$. Finally, at the end of the INNER loop iteration, the value $v$ at each node stays unchanged based on the following two observations:

\begin{itemize}
\item nodes in $S$ in Case 1, and in $A \cap S$ in Case 2, will not change value $v$ as specified by Algorithm BC, and

\item $t_i = \perp$ for each node $i \in \sv-F-S$ in Case 1, and for each node $i \in \sv-F-(A \cap S)$
in Case 2.
\end{itemize}
Thus, no node in $\sv-F$ will change their $v$ value (where $F=F^*$).

Note that by assumption, there is no fault-free node in $F=F^*$, and hence, we do not need to consider STEP 2 of the INNER loop. Therefore, Claim \ref{claim:deciding} is proved.
\end{proof}

~

Let us divide the INNER loop iterations for $F=F^*$ into three phases:
\begin{itemize}
\item Phase 1: INNER loop iterations before the first deciding iteration with $F=F^*$.

\item Phase 2: The first deciding iteration with $F=F^*$.

\item Phase 3: Remaining INNER loop iterations with $F=F^*$.

\end{itemize}

\begin{claim}
\label{claim:phase2}
At least one INNER loop iteration with $F=F^*$ is a deciding iteration.
\end{claim}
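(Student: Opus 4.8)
The plan is to argue by contradiction, using Claim \ref{claim:deciding} as the main lever. Suppose that \emph{no} INNER loop iteration with $F=F^*$ is a deciding iteration, so that every such iteration is non-deciding. By Claim \ref{claim:deciding}, in each non-deciding iteration the value $v_i$ of every fault-free node $i$ stays unchanged. Under our supposition this applies to all iterations for $F=F^*$, so the $v$ values of all fault-free nodes remain fixed throughout the entire OUTER loop iteration with $F=F^*$. I will write $v_i$ for this constant value at each fault-free node $i$.

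Next I would exploit the fact that when $F=F^*$, every node in $\sv-F^*$ is fault-free and holds a value in $\{0,1\}$. Define $A_0=\{i\in\sv-F^* \mid v_i=0\}$ and $A_1=\{i\in\sv-F^*\mid v_i=1\}$. By Corollary \ref{cor:2f+1}, $n\geq 3f+1$, so $|\sv-F^*|\geq n-f\geq 2f+1\geq 3$; hence $\sv-F^*$ has at least two nodes and a partition into two non-empty sets exists. There are two cases. If all fault-free nodes share the same value, then one of $A_0,A_1$ equals $\sv-F^*$ and the other is empty, but then \emph{any} partition of $\sv-F^*$ into two non-empty sets is conformant, since each side trivially has constant value. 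If the fault-free nodes do not all agree, then $A_0$ and $A_1$ are both non-empty and $(A_0,A_1)$ is itself a conformant partition.

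In either case I obtain a conformant partition $A,B$ of $\sv-F^*$ with both parts non-empty. By the condition in Theorem \ref{t_nec_2}, either $\propagate{A}{B}{\sv-F^*}$ or $\propagate{B}{A}{\sv-F^*}$, so after the renaming performed by the INNER loop this partition is among those examined when $F=F^*$. Because the $v$ values never change under the contradiction hypothesis, this partition is still conformant at the moment the INNER loop examines it; by condition C1 in the definition of a deciding iteration, that iteration is therefore deciding. This contradicts the assumption that no iteration with $F=F^*$ is deciding, which completes the proof.

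I expect the main obstacle to be the bookkeeping of \emph{when} conformance is evaluated: a partition's conformance is judged against the current $v$ values, which could in principle change between iterations. The point of Claim \ref{claim:deciding} is precisely to freeze these values under the contradiction hypothesis, so that a partition which is conformant with respect to the starting values remains conformant when it is actually examined. The secondary point requiring care is verifying that the value-based partition is genuinely one of the partitions the INNER loop iterates over, which follows from the Theorem \ref{t_nec_2} condition guaranteeing propagation in at least one direction.
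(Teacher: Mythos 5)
Your proof is correct and takes essentially the same approach as the paper's: both rest on the value-based conformant partition, on Claim \ref{claim:deciding} to freeze the $v$ values across non-deciding iterations, and on the fact that the INNER loop (thanks to the condition in Theorem \ref{t_nec_2}) must eventually examine that partition while it is still conformant, making it deciding via C1. Your contradiction framing simply repackages the paper's direct argument (which instead allows that a C2 deciding iteration may occur before the conformant partition is reached), and your explicit treatment of the case where all fault-free nodes already agree is a minor rigor improvement over the paper's phrasing, which tacitly assumes both value classes are non-empty.
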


\begin{proof}
The input at each process is in $\{0,1\}$. Therefore,
by repeated application of Lemma \ref{l_valid_1} in Section \ref{s:correct}, it is always true that $v_i \in \{0,1\}$ for each fault-free node $i$. Thus, when the OUTER iteration for $F=F^*$ begins, a conformant partition exists (in particular, set $A$ containing all fault-free nodes with $v$ value $0$, and set $B$ containing the remaining fault-free nodes, or vice-versa.) 
By Claim \ref{claim:deciding}, nodes in $\sv - F$ will not change values during non-deciding iterations.
Then, since the INNER loop considers all partitions of $\sv-F$, the INNER loop will eventually consider
either the above conformant partition, or sometime prior to considering the above conformant
partition, it will consider a non-conformant partition with properties in (C2) above.
\end{proof}

~

Thus, Phase 2 will be eventually performed when $F=F^*$. Now, let us consider each phase separately:
\begin{itemize}

\item Phase 1: Recall that all the nodes in $\sv-F=\sv-F^*$ are fault-free. By Claim \ref{claim:deciding}, the $v_i$ at each fault-free node $i \in \sv-F$ stays unchanged.

\item Phase 2: Now, consider the first deciding iteration of the INNER loop. 

Recall from Algorithm BC that a suitable set $S$ is identified in each INNER loop iteration.
 We will show that in the deciding iteration, every node in $S$ will have the same $t$ value. Consider two scenarios:

\begin{itemize}
\item The partition is non-conformant: Then by definition of deciding iteration, we can find an $\alpha \in \{0,1\}$ such that $v_i=\alpha$ for all $i\in S$ after step (b) of Case 1, or after step (g) of Case 2.

\item The partition is conformant: Let $v_i = \alpha$ for all $i \in A$ for $\alpha \in \{0,1\}$. Such an $\alpha$ exists because the partition is conformant.

\begin{itemize}
\item Case 1: In this case, recall that $S \subseteq A$. Therefore, after steps (a) and (b) both, $t_j$ at all $j\in S$ will be identical, and equal to $\alpha$.

\item Case 2: This is similar to Case 1. At the end of step (e), for all nodes
$i\in A$, $t_i=\alpha$.
After step (f), for all nodes $i\in S\cup A$, $t_i=\alpha$.
Therefore, after step (g),  for all nodes $i\in S$, $t_i$ will remain equal to $\alpha$.
\end{itemize}

\end{itemize}

Thus, in both scenarios above, we found a set $S$ and $\alpha$ such that for all $i \in S$, $t_i = \alpha$ after step (b) in Case 1, and after step (g) in Case 2.

Then, consider the remaining steps in the deciding iteration.

\begin{itemize}
\item Case 1: During \Propagate($S,\sv-F-S$), each node $k\in\sv-F-S$ will receive $f+1$ copies of $\alpha$ along $f+1$ disjoint paths, and set $t_k=\alpha$ in step (c). Therefore, each node $k\in \sv-F-S$ will update its $v_k$ to be $\alpha$ in step (d).
(Each node $p\in S$ does not modify its $v_p$, which is
already equal to $\alpha$.)

\item Case 2: After step (h), $t_j=\alpha$ for all $j\in (\sv-F-S)\cup S$. Thus, each node  $k\in \sv-F-(A\cap S)$ will update $v_k$ to be $\alpha$.
(Each node $p\in A\cap S$ does not modify its $v_p$, which is
already equal to $\alpha$.)

\end{itemize}

Thus, in both cases, at the end of STEP 1 of the INNER loop, for
all $k\in\sv-F=\sv-F^*$, $v_k=\alpha$. 

Since all nodes in $F^*$ are faulty, agreement has been reached at this
point.
The goal now is to show that the agreement
property is not violated by
actions taken in any future INNER loop iterations.

\item Phase 3: At the start of Phase 3, for each fault-free 
	node $k\in \sv-F^*$, we have $v_k=\alpha\in\{0,1\}$.
	Then by Lemma \ref{l_valid_1}, all future INNER loop iterations cannot assign any value other than $\alpha$ to any node $k\in\sv-F^*$. 

\end{itemize}
After Phase 3 with $F=F^*$, Algorithm BC may perform OUTER loop iterations for
other choices of set $F$. However, due to Lemma \ref{l_valid_1},
the value $v_i$ at each $i\in\sv-F^*$ (i.e., all fault-free nodes)
continues being equal to $\alpha$. 

Thus, Algorithm BC satisfies the {\em agreement} property, as stated
in Section \ref{s_intro}.
\end{proof}

\end{document}